\DeclareMathOperator{\diag}{diag}
\DeclareMathOperator{\IMSE}{\mathsf{IMSE}}
\DeclareMathOperator{\tr}{trace}
\numberwithin{equation}{section}
\def\ra{\rightarrow}
\def\e1{\mathrm{e}}
\def\0b{\boldsymbol{0}}
\def\1b{\boldsymbol{1}}
\def\TT{^\top}
\def\dd{\mbox{\rm d}}
\def\SB{{\mathscr B}}
\def\SD{{\mathscr D}}
\def\SL{{\mathscr L}}
\def\SP{{\mathscr P}}
\def\cP{{\mathcal{P}}}
\def\SSS{\mathscr S}
\def\SN{{\mathscr N}}
\def\SX{{\mathscr X}}
\def\ma{\alpha}
\def\yb{\boldsymbol{\eta}}
\def\ybb{\overline{\yb}}
\def\mg{\gamma}
\def\mgb{\boldsymbol{\gamma}}
\def\ml{\lambda}
\def\me{\epsilon}
\def\mve{\varepsilon}
\def\mveb{\boldsymbol{\varepsilon}}
\def\ms{\sigma}
\def\mt{\theta}
\def\mtb{\boldsymbol{\theta}}
\def\phib{\boldsymbol{\phi}}
\def\Phib{\boldsymbol{\Phi}}
\def\Lambdab{\boldsymbol{\Lambda}}
\def\Gammab{\boldsymbol{\Gamma}}
\def\mSb{\boldsymbol{\Sigma}}
\def\betab{\boldsymbol{\beta}}
\def\xib{\boldsymbol{\xi}}
\def\Ex{\mathsf{E}}
\def\Ab{\mathbf{A}}
\def\ab{\mathbf{a}}
\def\cb{\mathbf{c}}
\def\cbb{\overline{\cb}}
\def\Cb{\mathbf{C}}
\def\Db{\mathbf{D}}
\def\eb{\mathbf{e}}
\def\Hb{\mathbf{H}}
\def\Ib{\mathbf{I}}
\def\Kb{\mathbf{K}}
\def\Mb{\mathbf{M}}
\def\Ob{\mathbf{O}}
\def\ssb{\mathbf{s}} %renamed sb->ssb (conflict with JMLR style)
\def\Sb{\mathbf{S}}
\def\ub{\mathbf{u}}
\def\Ub{\mathbf{U}}
\def\vb{\mathbf{v}}
\def\Vb{\mathbf{V}}
\def\wb{\mathbf{w}}
\def\xb{\mathbf{x}}
\def\Xb{\mathbf{X}}
\def\yb{\mathbf{y}}
\def\Yb{\mathbf{Y}}
\def\zb{\mathbf{z}}
\def\Zb{\mathbf{Z}}
\theoremstyle{plain}
\newtheorem{theo}{Theorem}[section]
\newtheorem{coro}[theo]{Corollary}
\newtheorem{lemm}[theo]{Lemma}
\newtheorem{remark}{Remark}[section]
\newcommand{\fin} {\mbox{}~\hfill{\lower-0.3ex\hbox{$\triangleleft$}}}
\title{Fast Screening Rules for Optimal Design via\\ Quadratic Lasso Reformulation}
\author{
Guillaume Sagnol\footnote{Institut f\"ur Mathematik, Sekr. MA 5-2, Technische Universit\"at Berlin, Stra{\ss}e des 17.\ Juni 136, 10623 Berlin, Germany; {\tt sagnol@math.tu-berlin.de}} 
\ and \
Luc Pronzato\footnote{Universit\'e C\^ote d'Azur, CNRS, Laboratoire I3S -- 2000 route des lucioles, 06900 Sophia Antipolis, France;
{\tt Luc.Pronzato@cnrs.fr}}
}
\begin{document}

\maketitle

\begin{abstract}%   <- trailing '%' for backward compatibility of .sty file
The problems of Lasso regression and optimal design of experiments
share a critical property: their optimal solutions are typically \emph{sparse}, i.e., only a small fraction of the optimal variables
are non-zero.
Therefore, the identification of the support of an optimal solution reduces the dimensionality of the problem and can yield a substantial simplification of the calculations.
It has recently been shown that linear regression with a \emph{squared} $\ell_1$-norm sparsity-inducing penalty is equivalent to an optimal experimental design problem.
In this work, we use this equivalence to derive safe screening rules that can be used to discard inessential samples. Compared to previously existing rules, the new tests are much faster to compute, especially for problems involving a parameter space of high dimension, and can be used dynamically within any iterative solver, with negligible computational overhead. Moreover,
we show how an existing homotopy algorithm to compute the regularization
path of the lasso method can be reparametrized with respect to the squared $\ell_1$-penalty. This allows the computation of a Bayes $c$-optimal
design in a finite number of steps and can be several orders of magnitude faster than standard first-order algorithms.
The efficiency of the new screening rules and of the homotopy algorithm are demonstrated on different examples based on real data.
\end{abstract}

\paragraph{Keywords}
Design of experiments, Screening rules, Sparsity-inducing penalty, Lasso, $L$-optimality

%\noindent AMS subject classifications:  62K05, 62J07, 90C46

\section{Introduction}\label{S:Intro}

Estimation or prediction of unknown quantities from experimental data are among the most classical problems in statistics and machine learning. \emph{Optimal design of experiments} plays a central role in this process, as it questions \emph{which} data should be collected in order to make estimation/prediction as accurate as possible. In a regression problem, the quality of an experimental design is usually measured through the covariance matrix of the estimator it produces, and a scalar criterion, function of this matrix, is then minimized to provide an $A$-, $c$-, $D$-, $E$- or $L$-optimal design. One can refer, e.g., to~\citet{Fed72,Silvey80,Pukelsheim93} for a thorough exposition of the theory of optimal experiments and a discussion of optimality criteria. Similarly, supervised learning in a classification problem requires labelling samples that are originally unlabelled. This operation may involve taking physical measurements, conducting a poll, running computer simulations or consulting an expert. Here, the aim of optimal design theory is to select \emph{the best possible} subset of the data to be labelled, subject to a budget constraint.
In machine learning, optimal design is also known as \emph{active learning} \citep{cohn1996active}.

In this paper, we focus on optimal design for Bayesian estimation. The design is computed off-line, before the collection of any data. However, we assume that prior information on the quantity to be estimated is available, which means that the approach could also be used in a sequential setting: in that case, the selection of \emph{batches of additional samples} relies on the information gathered through data collection at previous iterations. A central objective of this work is to derive screening rules that can be used to safely discard \emph{useless samples}, and hence speed-up the computation of Bayesian optimal designs, by exploiting the recently discovered connection between $c$-optimal design and a problem similar to Lasso-regression~\citep{SagnolP2019}.

\medskip
For most optimality criteria, the computation of an optimal \emph{exact design}, i.e., a multiset of samples of given cardinality minimizing the optimality criterion, is NP-hard~\citep{welch1982algorithmic,civril2009selecting,vcerny2012two}. To circumvent this issue, standard approaches rely on \emph{approximate design theory}; see, e.g., \citet{Silvey80, Pukelsheim93}. When the set of admissible experimental conditions (the full data set available) is finite, it consists in solving a continuous relaxation of the problem in which the \emph{design} is represented by a vector $\wb$ lying in the probability simplex, such that the weight $w_i$ corresponds to the fraction of experimental resources allocated to the $i$-th sample. Then, various rounding techniques can be used to turn $\wb$ into an exact design satisfying performance guarantees; see, e.g., \citet{pukelsheim1992efficient,singh2020approximation}.

In general, only a few samples contribute to an optimal design $\wb^*$; that is, most of the optimal weights $w_i^*$ equal zero. We say that a sample with $w_i^*=0$ is \emph{inessential}.
Identifying inessential samples is crucial to algorithms that compute optimal designs, since their removal from the set of candidates speeds up subsequent iterations: indeed, the complexity per iteration of design algorithms that operate on a finite set of samples $S$ is typically proportional to the cardinality of $S$.
The idea originated from \citet{Pa03} and \citet{HPa06_SPL} for $D$-optimal design (and for the construction of the minimum-volume ellipsoid containing a set of points). It was further extended to Kiefer's~(1974)\nocite{Kiefer74} $\varphi_p$-criteria \citep{P_SPL-2013}, to $E$-optimal design \citep{HarmanR2019}, and to the elimination of inessential points in the smallest enclosing ball problem \citep{P-OMS2019}. 
A related idea was recently used for the computation of \emph{exact designs},
for criteria used in the field of
active learning~\citep{anstreicher2020efficient,li2022doptimal}.
There, convex relaxations of the problem
with some variables fixed at $0$ or $1$
are solved at internal nodes
of a branch-and-bound tree to obtain certificates that some design points
are inessential, or must be part of the exact optimal design. This permits to accelerate the pruning of the
branch-and-bound tree.

The recent work of~\citet{SagnolP2019} sheds more light on the sparsity of optimal designs, as it shows the equivalence between $c$-optimal design and a variant of the Lasso regression problem in which the sparsity-inducing $\ell_1$-norm penalty is squared.
Analogously, $L$-optimal design is equivalent to group-Lasso regression with a squared $\ell_{1,2}$-norm penalty; see Section~\ref{S:equiv} for more details on the precise meaning of ``equivalent''.
In the field of Lasso-regression, the idea of deriving screening criteria, i.e., inequalities satisfied by points that do not support any optimal solution and can thus be safely eliminated, dates back to \citet{ElGhaouiVR2012}. These screening rules have been refined in subsequent work \citep{XiangR2012, FercoqGS2015, XiangWR2016, NdiayeFGS2017}, with important developments concerning their usage to speed up the solution of a Lasso problem: \citet{BonnefoyERG2015} developed dynamical tests that can be run during the progress of any Lasso algorithm (rather that only at the beginning); \citet{WangWY2015} proposed sequential tests in which a sequence of Lasso problems are solved for decreasing
values of the regularization parameter $\alpha$, the optimal solution
$\xb^*(\alpha_k)$ of the problem for the regularization parameter $\alpha=\alpha_k$
being used to screen-out features in the Lasso problem with regularization parameter $\alpha_{k+1}<\alpha_k$.

\medskip
\emph{Our contribution} in this work is the development of efficient screening rules for $c$-and $L$-optimal designs in the context of Bayesian estimation. In contrast to the elimination rules presented in our previous work~\citep{PS2021-JSPI},
which also work in the absence of prior but
require heavy algebraic computations such as taking matrix square-roots, the present paper leverages the equivalence with a Lasso problem and the geometry of its dual. This yields lightweight screening rules that only require the computation of matrix-vector products and can be used periodically with any algorithm computing an optimal design. While the techniques we use are similar to those used for example in~\citep{FercoqGS2015} for the Lasso problem, the adaptation is not straightforward since the sparsity-inducing penalty is squared, which changes the geometric nature of the dual problem.
We also show that a homotopy algorithm used to compute the regularization
path of a lasso problem~\citep{osborne2000new, efron2004least}
can be adapted to construct a (Bayesian) $c$-optimal design, sometimes much faster than
with state-of-the art algorithms.
Last but not least, the code used in our experiments has been published in the form of a python package (\texttt{qlasso}) and is available at
\url{https://gitlab.com/gsagnol/qlasso}.

The rest of the paper is organized as follows:
Section~\ref{sec:prelim} introduces the $c$- and $L$-optimal
design problems considered in this paper. Section~\ref{sec:Qlasso} presents the quadratic lasso problem and gives precise statements about its equivalence with an optimal design problem. The dual quadratic lasso problem is derived and we show that it can be interpreted as a projection over a polyhedral cone. Our main result is stated in Theorem~\ref{theo:B0} in the form of a general screening rule, which is declined in two corollaries giving inequalities adapted for algorithms iterating either on lasso variables (Corollary~\ref{C:B1}) or on design weights (Corollary~\ref{C:B2}).
These results are adapted to the case of $L$-optimality in Appendix~\ref{S:Lopt}. The adaptation of the homotopy algorithm for the lasso to the case of Bayesian $c$-optimality is derived in Section~\ref{S:homotopy}.
Finally, numerical examples demonstrating the performance of the new screening rules and the homotopy algorithm are presented in Section~\ref{S:examples}.

\section{Background on optimal designs and notation}\label{sec:prelim}

Let $\mathds{H}=\{\Hb_i,\, i=1,\ldots,p\}$ denote a set of $m\times m$ symmetric positive definite matrices, which we shall call elementary information matrices. For any vector of weights $\wb$ in the probability simplex
\begin{align*}
\SP_{p}=\{\wb\in\mathds{R}^{p},\, \wb \geq \0b,\ \sum_{i=1}^p w_i = 1\} \,,
\end{align*}
we denote by $\Mb(\wb)$ the information matrix
\begin{align}\label{M}
\Mb(\wb) = \sum_{i=1}^p w_i\, \Hb_i \,.
\end{align}
For $\cb$ a given vector in $\mathds{R}^m$, a $c$-optimal design is a solution of the optimization problem
\begin{align}\label{phi}
\underset{\wb \in \SP_{p}}{\min}\ \ \phi_c(\wb) = \Phi_c[\Mb(\wb)] = \cb\TT \Mb^{-1}(\wb) \cb \,,
\end{align}
where the decision variable $\wb$ defines a probability measure over the finite space $[p]=\{1,\ldots,p\}$ and is called \emph{design}.
We shall denote $\wb^*$ a $c$-optimal design. Note that nor $\wb^*$ neither $\Mb(\wb^*)$ are necessarily unique, but Theorem~\ref{T:dual-optimal} shows that $\Mb^{-1}(\wb^*)\cb$ is unique.
More generally, we also consider $L$-optimal designs that minimize a linear optimality criterion: given an $m\times m$ positive semidefinite matrix $\Cb=\Kb\Kb\TT\succeq 0$, an $L$-optimal design solves
\begin{align}\label{phiL}
\underset{\wb \in \SP_{p}}{\min}\
\phi_L(\wb)  = \Phi_L[\Mb(\wb)] = \tr[\Cb \Mb^{-1}(\wb)] = \tr[\Kb\TT \Mb^{-1}(\wb)\, \Kb]\,.
\end{align}
Note that $L$-optimality contains $c$-optimality (for $\Kb=\cb$, see~\eqref{phi}), and $A$-optimality (for $\Kb=\Cb=\Ib_m$) as special cases (hence $L$-optimality is also called $A_{\Kb}$-optimality by some authors).

These problems arise in the context of Bayesian estimation for
the linear model
\begin{align}\label{regression1}
Y_i=\ab_i\TT\mtb+\mve_i \,\quad \forall i\in[p],
\end{align}
where $\ab_i\in\mathds{R}^m$ is a sample of $m$ features,
$Y_i$ is the (noisy) observation for the $i$th sample,
and the errors $\mve_i$ are mutually independent and normal $\SN(0,\ms^2)$. We further assume that a prior $\SN(\mtb^0,\mSb)$ on the vector of unknown parameters $\mtb$ is available, with $\mSb$ an $m\times m$-positive definite matrix (which we write $\mSb\succ 0$).

Suppose that $n$ observations are collected according to the design $\wb\in\SP_{p}$, with $\wb$ such such that $n\, w_i\in\mathds{Z}_{\geq 0}$ for all $i$;
that is, for each $i$ we collect $n_i=n\, w_i$ independent observations $Y_{i,j}$, $j\in[n_i]$. Then, the posterior variance of $\cb\TT\mtb$
is
$\frac{\sigma^2}{n}\cdot \cb\TT\left( \sum_{i} w_i \ab_i\ab_i\TT +\frac{\sigma^2}{n}\mSb^{-1}\right)^{-1}\cb$.
Minimizing this posterior variance is equivalent to minimizing ${\cb'}\TT\Mb^{-1}(\wb) \cb'$  where $\cb'=\mSb^{1/2}\cb$,
$\Mb(\wb) = \sum_{i=1}^p w_i\, \Hb_i$ and $\Hb_i= \ab_i' {\ab_i'}\TT + \lambda\, \Ib_m$ for all $i$, with $\ab_i'=\mSb^{1/2}\ab_i$ and $\lambda = \sigma^2/n$. We can thus assume without any loss of generality that $\mSb=\Ib_m$ and that $\Hb_i$ is of the form
\begin{align}\label{Hi}
\Hb_i=\ab_i\ab_i\TT + \ml\, \Ib_m\,, \ \ml>0\,.
\end{align}

Following the same lines as above, we see that minimizing $\phi_L(\wb)$ is equivalent to minimizing the sum of posterior variances of the vector $\Kb\TT \mtb$ in the regression model with Gaussian prior.
One may refer to \citet{Pilz83} for a book-length exposition on optimal design for Bayesian estimation (Bayesian optimal design).
It is noteworthy that
$L$-optimality finds other applications than in design for parameter estimation. For instance, $A$-optimality can be used to construct space-filling designs, by kernel-based \citep{GP-CSDA2016} or geometrical \citep{PZ2019-JSC} approaches. In \citep{P-RESS2019}, $L$-optimal designs are also used to estimate Sobol' indices for sensitivity analysis.
An example of space-filling based on $L$-optimality with $p=2^{13}=8\,192$ and $m=50$ is presented in Section~\ref{S:IMSE}.

In the rest of the paper we adopt the machinery of \emph{approximate design theory} and ignore the integrality requirements $n\, w_i\in\mathds{Z}_{\geq 0}$. We thus obtain the convex optimization problems~\eqref{phi} and~\eqref{phiL} for $c$- and $L$-optimality, respectively. Various algorithms have been proposed to solve these problems, starting from traditional methods
such as 
vertex-direction algorithms~\citep{Wyn70,Fed72},
which are adaptations of the celebrated Frank-Wolfe method,
and multiplicative weight update algorithms~\citep{Fellman74,Yu2010AoS}. Another approach is to reformulate these problems as second-order cone programs, which can be handled by interior-point solvers~\citep{Sagnol2011}.
Recent progress has been obtained through randomization~\citep{harman2020randomized}, or through a reformulation as an unconstrained problem with squared lasso penalty, for which algorithms such as FISTA or block coordinate descent can be used~\citep{SagnolP2019}.
The screening rules presented in this paper can be used to
speed-up any of the aforementioned algorithms. These rules
are in the form of inequalities that are satisfied by any
inessential sample $\ab_i$ such that $w_i^*=0$ for all optimal designs $\wb^*$. When this occurs, we also say that the design point $\ab_i$ or the matrix $\Hb_i$  cannot \emph{support} an optimal design.

We denote respectively by $\|\xb\|=$ $\left(\sum_{i=1}^p x_i^2\right)^{1/2}$, $\|\xb\|_1=\sum_{i=1}^p |x_i|$ and $\|\vb\|_\infty=\max_{i\in[p]}|x_i|$
the $\ell_2$-, $\ell_1$- and $\ell_\infty$-norm of a vector $\xb\in\mathds{R}^p$; $\eb_i$ denotes the $i$-th canonical basis vector of $\mathds{R}^p$. For a $p\times r$ matrix $\Xb$, we denote by $\|\Xb\|_F=[\tr(\Xb\TT\Xb)]^{1/2}$ the Frobenius norm of $\Xb$ and by $\|\Xb\|_{1,2}=\sum_{i=1}^p \|\Xb_{i,\cdot}\|=\sum_{i=1}^p \left(\sum_{j=1}^r \Xb_{i,j}^2\right)^{1/2}$ its $\ell_{1,2}$-norm, with $\Xb_{i,\cdot}$ denoting the $i$th row of $\Xb$; $\SB_m(\xb,\rho)$ denotes the closed Euclidean ball with center $\xb$ and radius $\rho$ in~$\mathds{R}^m$.

\section{Optimal design and (quadratic) Lasso} \label{sec:Qlasso}

\subsection{Equivalence of Optimal Design and Quadratic Lasso}\label{S:equiv}

\citet{SagnolP2019} have shown the equivalence between $c$- (respectively, $L$-) optimal design and a quadratic Lasso (respectively, group-Lasso) problem.

We present below a simplified proof of this result for the case of $c$-optimality, which also leads to a more precise statement. The extension of this result to the case of $L$-optimality is carried out in Appendix~\ref{S:Lopt}.

\begin{theo}\label{theo:eq-copt-qlasso}
Let $\Hb_i$ be given by~\eqref{Hi} and denote by $\Ab$ the $m \times p$ matrix with columns $(\ab_i)_{i\in[p]}$. Then, the $c$-optimal design problem~\eqref{phi} is \emph{equivalent} to the following problem, which we call \emph{quadratic Lasso}:
\begin{align}\label{qlasso}
\min_{\xb \in \mathds{R}^{p}}\quad \SL_\lambda(\xb)= \| \Ab\xb-\cb\|^2 + \lambda\, \|\xb\|_1^2,
\end{align}
in the following sense:
\begin{itemize}
 \item[(i)] the optimal value of~\eqref{qlasso} is equal to $\lambda\, \phi_c(\wb^*)$, where $\wb^*$ is a $c$-optimal design;
 \item[(ii)] If $\xb^*$ solves the quadratic Lasso problem and $\Ab\TT\cb\neq\0b$,  then $\xb^*\neq\0b$ and $\widehat\wb^*=\widehat\wb(\xb^*)$ is $c$-optimal, where
\begin{align}\label{hatw}
\widehat w_i(\xb)=\frac{|x_i|}{\|\xb\|_1}\,, \ \forall i \in[p]\,, \ \xb\neq\0b\,;
\end{align}
 \item[(iii)] If $\wb^*\in\SP_{p}$ is $c$-optimal, then $\widehat\xb^*=\widehat\xb(\wb^*)$ is optimal for~\eqref{qlasso}, where
\begin{align}
\widehat x_i(\wb)= w_i\, \ab_i\TT \Mb^{-1}(\wb)\cb\,, \ \forall i\in[p]\,. \label{widehat-x}
 \end{align}
 \item[(iv)]  In the pathological case $\Ab\TT\cb=\0b$, the unique optimal solution to the quadratic lasso is $\xb^*=\0b$, while every design $\wb\in\SP_p$ is $c$-optimal.
 \end{itemize}
\end{theo}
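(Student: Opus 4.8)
The plan is to eliminate the squared $\ell_1$-penalty by a variational lifting. The starting point is the elementary identity
\begin{align*}
\|\xb\|_1^2 = \min_{\wb\in\SP_p}\ \sum_{i=1}^p \frac{x_i^2}{w_i}\,,
\end{align*}
valid with the conventions $0/0=0$ and $a/0=+\infty$ for $a>0$; it follows from Cauchy--Schwarz, and when $\xb\neq\0b$ its minimizer is exactly $\widehat\wb(\xb)$ from~\eqref{hatw}. Introducing the joint objective $F(\xb,\wb)=\|\Ab\xb-\cb\|^2+\lambda\sum_i x_i^2/w_i$, the value of~\eqref{qlasso} becomes $\min_{\xb}\min_{\wb} F(\xb,\wb)$, which I may freely rewrite as $\min_{\wb\in\SP_p}\min_{\xb} F(\xb,\wb)$, since an infimum over a product set is independent of the order.

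The core computation is the inner minimization over $\xb$ for a fixed $\wb$. Writing $W=\diag(\wb)$, I claim the minimizer is $\widehat\xb(\wb)=W\Ab\TT\Mb^{-1}(\wb)\cb$, which is precisely~\eqref{widehat-x}, with optimal value $\lambda\,\phi_c(\wb)$. Rather than inverting $\Ab\TT\Ab+\lambda W^{-1}$, I would verify the first-order condition directly: using $\Mb(\wb)=\Ab W\Ab\TT+\lambda\Ib_m$ one gets $\Ab\widehat\xb(\wb)-\cb=-\lambda\Mb^{-1}(\wb)\cb$, so the gradient $2\Ab\TT(\Ab\widehat\xb-\cb)+2\lambda W^{-1}\widehat\xb$ vanishes; as the inner objective is convex this certifies a global minimum, and substituting back collapses the objective to $\lambda\,\cb\TT\Mb^{-1}(\wb)\cb=\lambda\,\phi_c(\wb)$. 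Taking the outer minimum over $\wb$ yields assertion~(i) in full generality. This same identity $\min_\xb F(\cdot,\wb)=\lambda\phi_c(\wb)$, attained at $\widehat\xb(\wb)$, is the engine for (ii) and (iii). For (iii), a $c$-optimal $\wb^*$ satisfies $\SL_\lambda(\widehat\xb(\wb^*))\le F(\widehat\xb(\wb^*),\wb^*)=\lambda\phi_c(\wb^*)=\min_\xb\SL_\lambda(\xb)$, forcing equality, so $\widehat\xb(\wb^*)$ is optimal. For (ii), I feed an optimizer $\xb^*$ and its associated $\widehat\wb=\widehat\wb(\xb^*)$ (for which the variational bound is tight, hence $F(\xb^*,\widehat\wb)=\SL_\lambda(\xb^*)$) into the chain $\SL_\lambda(\xb^*)=F(\xb^*,\widehat\wb)\ge\min_\xb F(\cdot,\widehat\wb)=\lambda\phi_c(\widehat\wb)\ge\lambda\phi_c(\wb^*)=\SL_\lambda(\xb^*)$, which again forces equality, so $\widehat\wb$ is $c$-optimal. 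The fact that $\xb^*\neq\0b$ when $\Ab\TT\cb\neq\0b$ is separate and easy: if $(\Ab\TT\cb)_j\neq 0$ for some $j$, then $\SL_\lambda(\epsilon\eb_j)=\|\cb\|^2-2\epsilon\,(\Ab\TT\cb)_j+O(\epsilon^2)$ drops below $\SL_\lambda(\0b)=\|\cb\|^2$ for a small $\epsilon$ of appropriate sign.

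For the pathological case~(iv), $\Ab\TT\cb=\0b$ means $\cb$ is orthogonal to every column of $\Ab$, hence $\Ab W\Ab\TT\cb=\0b$ and $\Mb(\wb)\cb=\lambda\cb$ for every $\wb$; thus $\phi_c(\wb)=\cb\TT\Mb^{-1}(\wb)\cb=\|\cb\|^2/\lambda$ is constant and every design is $c$-optimal. On the Lasso side, the same orthogonality gives the Pythagorean splitting $\|\Ab\xb-\cb\|^2=\|\Ab\xb\|^2+\|\cb\|^2$, so $\SL_\lambda(\xb)=\|\Ab\xb\|^2+\|\cb\|^2+\lambda\|\xb\|_1^2\ge\|\cb\|^2$, with equality forcing $\|\xb\|_1=0$, i.e. the unique minimizer is $\xb^*=\0b$.

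I expect the main obstacle to be rigorous bookkeeping at the boundary of the simplex: the inner minimization and the variational identity must be handled with the $0/0=0$ convention when some $w_i=0$, so that both the formula~\eqref{widehat-x} and the value $\lambda\phi_c(\wb)$ persist on the reduced support $\{i:w_i>0\}$. Here it is convenient that $\Mb(\wb)\succeq\lambda\Ib_m\succ0$ for every $\wb\in\SP_p$, so $\phi_c$ is finite throughout the simplex and a minimizer $\wb^*$ exists. Once this boundary handling is in place, the matrix manipulations are routine and everything reduces to the sandwiching inequalities above.
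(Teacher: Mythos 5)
Your proposal is correct and follows essentially the same route as the paper: both introduce the joint objective $v(\xb,\wb)=\|\Ab\xb-\cb\|^2+\lambda\sum_i x_i^2/w_i$, exploit the variational identity $\|\xb\|_1^2=\min_{\wb\in\SP_p}\sum_i x_i^2/w_i$, compute the partial minimizers $\widehat\wb(\xb)$ and $\widehat\xb(\wb)$ (with inner value $\lambda\phi_c(\wb)$), and conclude by swapping the order of minimization and sandwiching. The only cosmetic differences are that you certify $\widehat\xb(\wb)$ by checking the first-order condition directly rather than deriving it via the Sherman--Morrison--Woodbury identity, and you use a small-$\epsilon$ perturbation along $\eb_j$ instead of the exact one-dimensional minimizer to rule out $\xb^*=\0b$.
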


\begin{proof}
We introduce the function $v: \mathds{R}^p \times \SP_p\to\mathbb{R}\cup\{\infty\}$
by
\begin{align}\label{def_v}
v(\xb,\wb) = \|\Ab\xb-\cb\|^2 + \lambda\, \sum_{i=1}^p \frac{x_i^2}{w_i},
\end{align}
where the function is defined by continuity when $w_i=0$, that is, we assume that $\frac{x_i^2}{0}=0$ if $x_i=0$ and $\frac{x_i^2}{0}=\infty$
otherwise~\footnote{As shown in~\cite{SagnolP2019}, $v(\xb,\wb)$ actually represents the variance of the unbiased linear estimator $\hat{\boldsymbol{\xi}}=\xb^{\raisebox{-0.2mm}{{\tiny $\top$}}} Y + (\cb -\Ab^{\raisebox{-0.2mm}{{\tiny $\top$}}}\xb)^{\raisebox{-0.2mm}{{\tiny $\top$}}} Z$ for $\boldsymbol{\xi}=\cb\TT\boldsymbol{\theta}$,
in the model with observations $Y=\Ab\TT\boldsymbol{\theta}+\boldsymbol{\epsilon}$,
$Z=\boldsymbol{\theta}+\boldsymbol{\nu}$
and errors
$\boldsymbol{\epsilon}\sim\mathcal{N}(\0b,\ml\operatorname{Diag}(\{w_1,\ldots,w_p\})^{-1})$,
$\boldsymbol{\nu}\sim\mathcal{N}(\0b,\Ib_m)$, with $\boldsymbol{\epsilon}$
and $\boldsymbol{\nu}$ mutually independent.}. We note that $v$ is convex, as $\xb\mapsto \|\Ab\xb-\cb\|$ is convex and
$(x_i,w_i)\mapsto \frac{x_i^2}{w_i}=w_i\cdot\big(\frac{x_i}{w_i}\big)^2$ is the \emph{perspective function}
of $x_i\mapsto x_i^2$; see, e.g.,~\cite{BoydV2004}.

We next use the fact that
that for all $\xb\in\mathds{R}^{p}\setminus\{\0b\}$, the function $\wb \mapsto \sum_i x_i^2/w_i$ is minimized over $\SP_{p}$ for $\wb=\widehat \wb(\xb)$ given by \eqref{hatw}. Therefore,
it holds
\begin{equation}
 \min_{\wb\in\SP_p}\ v(\xb,\wb)
 =  v(\xb,\widehat{\wb}(\xb))
 = \|\Ab\xb-\cb\|^2 + \lambda\, \sum_{i=1}^p \frac{x_i^2}{|x_i|} \cdot \|\xb\|_1
 = \SL_\ml(\xb),\label{minv_w}
\end{equation}
and the equation $\min_{\wb\in\SP_p}\ v(\xb,\wb)=\SL_\ml(\xb)$
remains valid for $\xb=\0b$ (with $\SL_\ml(\0b)=\|\cb\|^2$).

On the other hand, we can also minimize $v(\xb,\wb)$ with respect
to $\xb$ for a fixed $\wb\in\SP_p$.
If  $w_i = 0$, then all minimizers
$\widehat{\xb}$ of $v(\xb,\wb)$ must satisfy $x_i=0$, as
otherwise $v(\xb,\wb)=\infty$. We thus restrict to
optimizing the other coordinates of $x_i$,
which corresponds to minimizing a function of the
same form as $\xb\mapsto v(\xb,\wb)$, but for some $\wb\neq\0b$. We thus assume w.l.o.g.\ that $\wb\neq\0b$.
Denote by $\Db(\wb)=\diag\{w_1,\ldots,w_p\}$.
The problem to solve is in fact a least square problem,
as
\[
 v(\xb,\wb) = \|\Ab\xb-\cb\|^2 + \ml \xb\TT \Db^{-1}(\wb) \xb  \,.
\]
The unique minimizer is thus
\begin{align*}
\widehat\xb= \left[\Ab\TT\Ab+\ml\Db^{-1}(\wb)\right]^{-1} \Ab\TT\cb &= \frac{1}{\ml}\, \left[\Ib_m - \Db(\wb)\Ab\TT\Mb^{-1}(\wb)\Ab\right]\Db(\wb)\Ab\TT \cb\\
&= \Db(\wb)\Ab\TT \Mb^{-1}(\wb) \cb \,,
\end{align*}
where we have used the Sherman-Morrison-Woodbury
identity for the second equality
and the definition of $\Mb(\wb)$ which gives $\Ab \Db(\wb) \Ab\TT = \Mb(\wb) -\lambda \Ib$
for the last one.
Therefore, $\widehat x_i=w_i\, \ab_i\TT\Mb^{-1}(\wb)\cb$.
Note that this formula is also valid for the coordinates
$i\in[p]$ with $w_i=0$, as we set
$\widehat x_i=0$ in this case. Thus,
$\widehat\xb=\widehat\xb(\wb)$, as given by~\eqref{widehat-x}.

Substitution in $v(\xb,\wb)$ yields
\begin{align}
 \min_{\xb\in\mathds{R}^p} v(\xb,\wb) = v(\widehat{\xb}(\wb),\wb)
 &=
 \|\Ab \Db(\wb) \Ab\TT \Mb^{-1}(\wb) \cb - \cb\|^2 + \lambda\sum_{i=1}^p w_i (\ab_i\TT \Mb^{-1}(\wb) \cb)^2\nonumber\\
 &=
 \|\ml \Mb^{-1}(\wb)\cb\|^2 + \lambda \sum_{i=1}^p w_i \cdot \cb^T \Mb^{-1}(\wb) \ab_i \ab_i\TT \Mb^{-1}(\wb) \cb\nonumber\\
 &=\lambda \cdot \big[ \cb\TT \Mb^{-1}(\wb)(\lambda \Ib_m+\sum_{i=1}^p w_i \ab_i \ab_i\TT) \Mb^{-1}(\wb) \cb\big]\nonumber\\
 & = \lambda \phi_c(\wb),\label{minv_x}
\end{align}
where we have used $\Ab\Db(\wb)\Ab\TT + \ml \Ib_m =
\sum_{i=1}^p  w_i \ab_i \ab_i\TT + \ml \Ib_m =
\Mb(\wb)$.

\medskip
Now, the theorem easily follows from the above observations.
For $(i)$, we use that
\begin{equation}\label{double_min}
 \min_{\xb\in\mathds{R}^p} \SL_\ml(\xb)
 =\min_{\xb\in\mathds{R}^p,\wb\in\SP_p} v(\xb,\wb) =
 \min_{\wb\in \SP_p} \ml \phi_c(\wb),
\end{equation}
where the first equality results from~\eqref{minv_w}
and the second one from~\eqref{minv_x}.
For $(ii)$, we first prove that $\xb^*\neq\0b$ holds
for every optimal solution to the quadratic lasso
whenever $\Ab\TT\cb\neq\0b$.
Let $i$ be any index such that $\ab_i\TT\cb\neq 0$.
Then, we define $\xb = \frac{\ab_i\TT\cb}{\|\ab_i\|^2+\lambda} \eb_i$. It holds
\[
 \SL_\lambda(\xb)=
 \left\|\frac{\ab_i\TT\cb}{\|\ab_i\|^2+\lambda} \ab_i-\cb\right\|^2 + \lambda \frac{(\ab_i\TT\cb)^2}{(\|\ab_i\|^2+\lambda)^2}
=\|\cb\|^2 - \frac{(\ab_i\TT\cb)^2}{\|\ab_i\|^2+\lambda}
< \|\cb\|^2 = \SL_{\ml}(\0b),
\]
which shows that $\0b$ cannot be optimal for the quadratic lasso.
Now, let $\xb^*$ be an optimal solution to the quadratic lasso. We have, for any $\wb\in\SP_p$,
\begin{eqnarray*}
&& \ml \phi_c(\wb) \overset{\raisebox{0.4em}{\scriptsize\eqref{minv_x}}}{=} \min_{\xb\in\mathds{R}^p} v(\xb,\wb) \geq \min_{\xb\in\mathds{R}^p}\min_{\wb\in\SP_p} v(\xb,\wb)
\overset{\raisebox{0.4em}{\scriptsize\eqref{minv_w}}}{=} \min_{\xb\in\mathds{R}^p} \SL_\ml(\xb) = \SL_\ml(\xb^*)
\\
%\min_{\xb\in\mathds{R}^p} \SL_\ml(\xb)  = \SL_\ml(\xb^*)
&& \hspace{3cm} \overset{\raisebox{0.4em}{\scriptsize\eqref{minv_w}}}{=} v(\xb^*,\widehat{\wb}(\xb^*))
 \geq \min_{\xb'\in\mathds{R}^p} v(\xb',\widehat{\wb}(\xb^*))
 \overset{\raisebox{0.4em}{\scriptsize\eqref{minv_x}}}{=}
 \ml \phi_c(\widehat{\wb}(\xb^*))\,,
\end{eqnarray*}
which shows that $\widehat{\wb}^*=\widehat{\wb}(\xb^*)$
is $c$-optimal. Similarly, for $(iii)$,
let $\wb^*$ be a $c$-optimal design. Then, for any $\xb\in\mathds{R}^p$,
\begin{eqnarray*}
&& \SL_\ml(\xb) \overset{\raisebox{0.4em}{\scriptsize\eqref{minv_w}}}{=} \min_{\wb\in\SP_p} v(\xb,\wb) \geq \min_{\wb\in\SP_p} \min_{\xb\in\mathds{R}^p} v(\xb,\wb)
\overset{\raisebox{0.4em}{\scriptsize\eqref{minv_x}}}{=} \min_{\wb\in\SP_p} \ml\phi_c(\wb) = \ml\phi_c(\wb^*)
\\
% \min_{\wb\in\SP_p} \ml\phi_c(\wb) = \ml\phi_c(\wb^*)
&& \hspace{3cm} \overset{\raisebox{0.4em}{\scriptsize\eqref{minv_x}}}{=} v(\widehat{\xb}(\wb^*),\wb^*)
 \geq \min_{\wb'\in\SP_p} v(\widehat{\xb}(\wb^*),\wb')
 \overset{\raisebox{0.4em}{\scriptsize\eqref{minv_w}}}{=}
 \SL_\ml(\widehat{\xb}(\wb^*)),
\end{eqnarray*}
and $\widehat{\xb}^*=\widehat{\xb}(\wb^*)$ is thus an optimal
solution of the quadratic lasso problem.

It only remains to handle the pathological case
$(iv)$, that is, $\Ab\TT\cb=\0b$. For any $\xb\neq\0b$, we have
$\SL_\ml(\xb)=\|\Ab\xb-\cb\|^2+\ml\|\xb\|_1^2
=\|\cb\|^2 + \|\Ab\xb\|^2 +\ml\|\xb\|_1^2 >
\|\cb\|^2=\SL_{\ml}(\0b)$,
so the unique optimal solution to~\eqref{qlasso} is $\xb^*=\0b$. On the other hand, by~\eqref{double_min}, $c$-optimal designs
are minimizers of $\wb\mapsto v(\xb^*,\wb)
=v(\0b,\wb)$, which is the constant function
$\wb\mapsto\|\cb\|^2$.
\end{proof}

For all $\xb\in\mathds{R}^p$, observe that
\begin{align}\label{L>phi}
\SL_\ml(\xb)
= v(\xb,\widehat\wb(\xb))
\geq \min_{\xb'\in\mathds{R}^{p}} v(\xb',\widehat\wb(\xb)) = \ml\, \phi_c(\widehat\wb(\xb)) \,,
\end{align}
with equality when $\xb$ is such that $\widehat\wb(\xb)$ corresponds to $c$-optimal weights $\wb^*$, and therefore when $x_i=\widehat{x}_i(\wb^*)=w_i^*\, \ab_i\TT\Mb^{-1}(\wb^*)\cb$, $\forall i\in [p]$. In other words, both minimization
problems $\min_\xb \SL_\ml(\xb)$ and $\min_\xb \lambda \phi_c(\widehat{\wb}(\xb))$ share the same optimal value $\lambda \phi_c(\wb^*)$, but the former objective function dominates the latter.

\begin{remark}\label{R:multiplicative}
By alternating minimization of~\eqref{def_v} with respect to $\wb$ and $\xb$, with $\wb^k \ra \xb^{k+1}=\widehat\xb(\wb^k) \ra \wb^{k+1}=\widehat\wb(\xb^{k+1}) \ra \cdots$, where $\widehat\wb(\xb)$ and $\widehat\xb(\wb)$ are respectively defined by \eqref{hatw} and \eqref{widehat-x}, we obtain that $\wb^{k+1} = \widehat\wb(\widehat\xb(\wb^k))$ is given by
\begin{align*}
w_i^{k+1}= \frac{w_i^k \, \left|\ab_i\TT\Mb^{-1}(\wb^k)\cb\right|}{\sum_{j=1}^{p} w_j^k\, \left|\ab_j\TT\Mb^{-1}(\wb^k)\cb\right|} \,.
\end{align*}
Since we only consider weights that sum to one, we may rewrite $\phi_c(\wb)$ as $\phi_c(\wb)=\phi'_c(\wb)=\cb\TT\left[\left(\sum_{i=1}^p w_i\,\ab_i\ab_i\TT\right)+\ml \Ib_m\right]^{-1}\cb$. Denote by $\nabla\phi'_c(\wb)$ the gradient of $\phi'_c(\cdot)$ at $\wb$. Since its $i$-th component $\{\nabla \phi'_c(\wb)\}_i$ is equal to $-\cb\TT\Mb^{-1}(\wb)\ab_i\ab_i\TT\Mb^{-1}(\wb)\cb=-(\ab_i\TT\Mb^{-1}(\wb)\cb)^2$, the alternate minimization algorithm above corresponds to
\begin{align}\label{multiplicative}
w_i^{k+1}= \frac{w_i^k \, \left|\{\nabla \phi'_c(\wb^k)\}_i\right|^{1/2}}{\sum_{j=1}^{p} w_j^k \, \left|\{\nabla \phi'_c(\wb^k)\}_j\right|^{1/2}} \,,
\end{align}
which coincides with a variant of the multiplicative weight update algorithm of \citet{Fellman74} for the minimization of $\phi'_c(\wb)$; see also \citet{Yu2010AoS}.
\fin
\end{remark}

%------------------------------------
\subsection{The dual quadratic Lasso}\label{S:dual-lasso}

\medskip
{\sloppy
The primal corresponds to \eqref{qlasso}; that is, $\min_{\xb\in\mathds{R}^{p}} \SL_\ml(\xb)$. Introducing an auxiliary variable \mbox{$\zb=\Ab\xb-\cb$}, we can write this problem as a saddle point problem
\begin{align}\label{Lagrangian}
\min_{\xb\in\mathds{R}^{p},\, \zb\in\mathds{R}^m}\, \max_{\yb\in\mathds{R}^m} \ml\,\|\xb\|_1^2+ \|\zb\|^2 +2\yb\TT(\zb-\Ab\xb+\cb) \,,
\end{align}
and therefore the Lagrangian dual problem reads
\begin{align}\label{primal1}
\max_{\yb\in\mathds{R}^m} \left[ 2\yb\TT\cb + \min_{\zb\in\mathds{R}^m}\, \left(\|\zb\|^2+2\yb\TT\zb\right) +  \min_{\xb\in\mathds{R}^{p}}\,  \left( \ml\,\|\xb\|_1^2-2\yb\TT \Ab\xb \right) \right]\,.
\end{align}
}

Now, recall that the convex conjugate of a function
$f:\mathds{R}^p\to\mathds{R}$ is
$f^*: \yb\, \mapsto \sup_{\xb}\quad \xb\TT \yb - f(\xb)$,
so the above problem can be rewritten as
\begin{align*} %\label{dual_conjugates}
\max_{\yb\in\mathds{R}^m} \left[ 2\, \yb\TT\cb
- 2 g_2^*(-\yb)
- 2 \lambda\, g_1^*\Big(\frac{\Ab\TT\yb}{\lambda}\Big)
\right],
\end{align*}
where $g_1(\xb)= \|\xb\|_1^2/2$ and $g_2(\xb)= \|\xb\|^2/2$. The dual problem then takes a simple form, noting that for any norm $\|\cdot\|$, the convex conjugate of $g: \xb\mapsto \|\xb\|^2/2$ is $g^*: \yb\mapsto \|\yb\|_*^2/2$, where $\|\cdot\|_*$
is the dual norm of $\|\cdot \|$; see~\citet[Chapter~3]{BoydV2004}. Hence the expression to be maximized in \eqref{primal1} is
$2\, \yb\TT\cb - \|\yb\|^2 - \|A\TT \yb\|_\infty^2/\ml$.
%where $\|\zb\|_\infty=\max_i |z_i|$ denotes the $\ell^\infty$ norm of $\zb$.
To summarize, since $2\yb\TT\cb-\|\yb\|^2=\|\cb\|^2-\|\yb-\cb\|^2$, the dual (quadratic) Lasso is
\begin{align}\label{dual}
\max_{\yb\in\mathds{R}^m}\ \SD_\ml(\yb) = \|\cb\|^2-\|\yb-\cb\|^2 - \frac{\|\Ab\TT\yb\|_\infty^2}{\ml}\,.
\end{align}
Moreover, there is no duality gap, since
both the primal and dual problems are unconstrained, and
Slater's constraint qualification trivially holds (see, e.g., \citet{BoydV2004}):
\begin{align}
\min_{\xb\in\mathds{R}^{p}} \SL_\ml(\xb) = \max_{\yb\in\mathds{R}^m} \SD_\ml(\yb)\,.
\end{align}

%------------------------------------
\subsection{Dual-optimal solution}

The following theorem gives a necessary and sufficient condition for a pair $(\xb^*,\yb^*)$ to be primal-dual optimal for the quadratic Lasso.

\begin{theo}\label{T:dual-optimal}
A pair $(\xb^*,\yb^*)\in \mathds{R}^p \times \mathds{R}^m$ is
primal-dual optimal, i.e., $\SD_\ml(\yb^*)=\SL_\ml(\xb^*)$, if and only if
\begin{align}\label{KKT-system}
\yb^*=\cb-\Ab\xb^*, \
\|\Ab\TT \yb^*\|_\infty = \lambda \|\xb\|_1,
\text{ and } \forall i\in[p],
\left\{
\begin{array}{ll}
 |\ab_i\TT\yb^*| \leq \lambda \|\xb^*\|_1 & \text{ if } x_i^*=0;\\
 \ab_i\TT\yb^* =\lambda\, \operatorname{sign}(x_i^*) \|\xb^*\|_1
 & \text{ otherwise}.
\end{array}
\right.
\end{align}
In particular,
\begin{align}\label{likeET}
x_i^*\, (\|\Ab\TT \yb^*\|_\infty - |\ab_i\TT \yb^*|) = 0 \mbox{ for all } i \in [p] \,.
\end{align}
Moreover, the dual optimal point $\yb^*$ is unique and satisfies $\yb^* = \lambda \Mb_*^{-1} \cb$, where $\Mb_*= \Mb(\widehat{\wb}(\xb^*))$, with $\widehat{\wb}(\xb)$ given by \eqref{hatw}.
\end{theo}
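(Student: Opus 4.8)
The plan is to exploit the strong duality already established for the quadratic Lasso, namely $\min_\xb \SL_\ml(\xb)=\max_\yb \SD_\ml(\yb)$, so that $(\xb^*,\yb^*)$ is primal--dual optimal precisely when the duality gap $\SL_\ml(\xb^*)-\SD_\ml(\yb^*)$ vanishes. The core of the argument is to rewrite this gap, for \emph{arbitrary} $\xb\in\mathds{R}^p$ and $\yb\in\mathds{R}^m$, as a sum of two manifestly nonnegative terms whose common zero set is exactly the system \eqref{KKT-system}. This dispatches both directions of the equivalence simultaneously, rather than treating necessity and sufficiency separately.

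Concretely, expanding the squared norms I expect to obtain the identity
\begin{align*}
\SL_\ml(\xb)-\SD_\ml(\yb)
= \|\Ab\xb-\cb+\yb\|^2
+ \Big(\ml\|\xb\|_1^2 - 2\yb\TT\Ab\xb + \tfrac{1}{\ml}\|\Ab\TT\yb\|_\infty^2\Big),
\end{align*}
where the first term records the suboptimality in the auxiliary variable $\zb=\Ab\xb-\cb$ and the second the suboptimality in $\xb$. I would verify this by direct expansion, the point being that the cross term $+2\yb\TT\Ab\xb$ produced by $\|\Ab\xb-\cb+\yb\|^2$ cancels the explicit $-2\yb\TT\Ab\xb$. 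Writing $S=\|\Ab\TT\yb\|_\infty$ and $N=\|\xb\|_1$, the second parenthesis becomes
\begin{align*}
\big(\sqrt{\ml}\,N - S/\sqrt{\ml}\big)^2 + 2\big(S\,N - (\Ab\TT\yb)\TT\xb\big),
\end{align*}
which is nonnegative since $(\Ab\TT\yb)\TT\xb\le SN$ by H\"older's inequality. Hence the gap is zero iff (a) $\yb=\cb-\Ab\xb$, (b) $\ml\|\xb\|_1=\|\Ab\TT\yb\|_\infty$, and (c) H\"older holds with equality, i.e.\ $\ab_i\TT\yb=\operatorname{sign}(x_i)\,S$ for every $i$ with $x_i\ne0$. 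Reading (b)--(c) coordinatewise recovers exactly \eqref{KKT-system}, the inequality $|\ab_i\TT\yb|\le\ml\|\xb\|_1$ for the inactive coordinates coming for free from $S=\|\Ab\TT\yb\|_\infty$; and \eqref{likeET} is then immediate, since the factor $\|\Ab\TT\yb^*\|_\infty-|\ab_i\TT\yb^*|$ vanishes whenever $x_i^*\ne0$.

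For the ``moreover'' part, uniqueness of $\yb^*$ follows from strict concavity of $\SD_\ml$: in $-\SD_\ml(\yb)=\|\yb-\cb\|^2+\tfrac1\ml\|\Ab\TT\yb\|_\infty^2-\|\cb\|^2$ the first summand is a strictly convex quadratic and the second is convex, so the maximizer is unique. To obtain $\yb^*=\ml\Mb_*^{-1}\cb$, set $\wb^*=\widehat\wb(\xb^*)$, so $w_i^*=|x_i^*|/\|\xb^*\|_1$ (this requires $\xb^*\ne\0b$, guaranteed by Theorem~\ref{theo:eq-copt-qlasso}(ii) when $\Ab\TT\cb\ne\0b$). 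The key computation is that the coordinatewise conditions yield $w_i^*\,\ab_i\TT\yb^*=\ml\,x_i^*$ for all $i$: when $x_i^*\ne0$ this uses $\ab_i\TT\yb^*=\operatorname{sign}(x_i^*)\,\ml\|\xb^*\|_1$, and when $x_i^*=0$ both sides vanish (since then $w_i^*=0$). Thus $\Ab\Db(\wb^*)\Ab\TT\yb^*=\ml\,\Ab\xb^*$, and adding $\ml\yb^*=\ml(\cb-\Ab\xb^*)$ from (a) gives $(\Ab\Db(\wb^*)\Ab\TT+\ml\Ib)\yb^*=\ml\cb$, i.e.\ $\Mb_*\yb^*=\ml\cb$; since $\ml>0$ makes $\Mb_*\succ0$, this is $\yb^*=\ml\Mb_*^{-1}\cb$.

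The main obstacle I anticipate is essentially bookkeeping: pinning down the gap identity so that the cross terms cancel, and then extracting the per-coordinate statements from the single H\"older equality. I would also flag the degenerate case $\xb^*=\0b$ (equivalently $\Ab\TT\cb=\0b$), where \eqref{KKT-system} forces $\yb^*=\cb$ but the closed form $\yb^*=\ml\Mb_*^{-1}\cb$ must be set aside because $\widehat\wb(\xb^*)$ is then undefined.
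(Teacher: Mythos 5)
Your proposal is correct, and it reaches the conclusion by a genuinely different route from the paper. For the equivalence, the paper writes down the Lagrangian saddle-point formulation and works through the KKT conditions, computing the subdifferential of $\xb\mapsto\|\xb\|_1^2/2$ to extract the coordinatewise conditions; you instead decompose the duality gap $\SL_\ml(\xb)-\SD_\ml(\yb)$ directly into the two nonnegative terms $\|\Ab\xb-\cb+\yb\|^2$ and $\bigl(\sqrt{\ml}\,\|\xb\|_1-\|\Ab\TT\yb\|_\infty/\sqrt{\ml}\bigr)^2+2\bigl(\|\Ab\TT\yb\|_\infty\|\xb\|_1-(\Ab\TT\yb)\TT\xb\bigr)$ (I checked the algebra; the identity holds) and read off the zero set via the equality case of H\"older. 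Your route buys weak duality for free, treats both directions of the equivalence at once, and avoids subdifferential calculus entirely; the paper's KKT route is the one that generalizes more mechanically to the group-Lasso/$L$-optimality setting of the appendix, where the dual constraint is no longer polyhedral. For uniqueness, the paper identifies $\yb^*$ with the orthogonal projection of $(\cb\TT,0)\TT$ onto the polyhedral cone $\cP_\ml(\Ab)$ in $\mathds{R}^{m+1}$ (a picture it reuses later for the screening rules and the homotopy algorithm), whereas you simply invoke strict concavity of $\SD_\ml$, which is shorter and equally valid. The closing computation $\Mb_*\yb^*=\ml\cb$ is essentially identical to the paper's, and your explicit flagging of the degenerate case $\xb^*=\0b$, where $\widehat\wb(\xb^*)$ is undefined, is a point the paper leaves implicit.
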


\begin{proof}
(\textit{i}) \emph{Uniqueness of the dual optimal solution}. The dual problem $\max_{\yb\in\mathds{R}^m} \SD_\ml(\yb)$ is equivalent to
\begin{align}\label{conic_form}
\min_{\yb\in\mathds{R}^m,\, \|\Ab\TT\yb\|_\infty \leq \sqrt{\ml}u} \|\yb-\cb\|^2 + u^2
=\min_{\ybb\in\cP_\ml(\Ab)}  \|\ybb-\cbb\|^2 \,,
\end{align}
where $\ybb=(\yb\TT,\ u)\TT$ and $\cbb=(\cb\TT,\ 0)\TT$ belong to $\mathds{R}^{m+1}$ and $\cP_\ml(\Ab)$ is the polyhedral cone
\begin{align}\label{Plambda}
\cP_\ml(\Ab) = \{\ybb=(\yb\TT,\ u)\TT\in\mathds{R}^{m+1}: |\ab_i\TT\yb| \leq \sqrt{\ml}\,u \mbox{ for all } i=1,\ldots,m\} \,,
\end{align}
showing that the optimal solution for the dual corresponds to the unique orthogonal projection of $\cbb$ onto $\cP_\ml(\Ab)$; see Figure~\ref{F:dual} for an illustration. %\todo{include a picture}

\noindent (\textit{ii}) \emph{Optimality conditions}.
The optimization problem is convex and Slater's condition holds, so
the Karush-Kuhn-Tucker (KKT) conditions are necessary and sufficient
to characterize a pair $(\xb^*,\yb^*)$ of primal-dual optimal solutions.
The primal problem consists of minimizing $2(\lambda g_1(\xb) + g_2(\zb))$
under the constraint $\zb=\Ab\xb-\cb$, so the KKT system reduces to primal
feasibility, i.e., $\zb^*=\Ab \xb^*-\cb$, and stationarity of the Lagrangian~\eqref{Lagrangian}, i.e.,
$$
\0b \in \partial \left. \left( \xb,\zb \mapsto 2(\lambda g_1(\xb) + g_2(\zb) +  \yb\TT (\zb-\Ab\xb + \cb))\right)\right|_{\xb=\xb^*, \zb=\zb^*}.
$$
Differentiating with respect to $\zb$ gives
$\nabla_{\zb}(\frac{1}{2}\|\zb\|^2+\yb^{*\top}\zb)_{\mid\zb=\zb^*}=\zb^* + \yb^*=\0b$, which already shows that $\yb^*=-\zb^*=\cb-\Ab\xb^*$.

As the Lagrangian~\eqref{Lagrangian} is not differentiable with respect to $\xb$, the stationarity condition with respect to $\xb$ is slightly more complicated. We must solve
$$\0b \in \partial(\xb\mapsto \lambda g_1(\xb) - \yb^*{}\TT \Ab \xb)_{|\xb=\xb^*}\,,$$
which is equivalent to $\Ab\TT\yb^*/\lambda \in \partial g_1(\xb^*)$.
Since $\partial g_1(\xb) = \{ \|\xb\|_1\cdot \ub: \ub \in \partial\|\xb\|_1\}$ (see, e.g., \citet{SagnolP2019}) and since $\ub\in\partial\|\xb\|_1$ if and only if $u_i\in[-1,1]$ whenever $x_i = 0$ and $u_i=\operatorname{sign}(x_i)$ otherwise, we get the following equivalent condition:
\begin{align*}
&\exists \ub\in\partial\|\xb^*\|_1:\quad \Ab\TT\yb^* = \lambda \|\xb^*\|_1 \, \ub\\
\iff\ & \forall i\in[p],\quad \big(x_i^*=0\ \wedge\ |\ab_i\TT\yb^*| \leq \lambda \|\xb^*\|_1\big) \quad\text{or}\quad \big(x_i^*\neq 0\ \wedge\ \ab_i\TT\yb^* =\lambda\, \operatorname{sign}(x_i^*)\ \|\xb^*\|_1 \big).
\end{align*}

\noindent(\textit{iii}) \emph{Alternative expression of $\yb^*$}.
We prove that $\Mb_*(\cb-A\xb^*)=\lambda \cb$, which
implies $\cb-A\xb^*=\lambda \Mb_*^{-1} \cb$, as desired.
Let $\Db_*=\Db(\widehat{\wb}(\xb^*))=(1/\|\xb^*\|_1)\diag\{x_1^*,\ldots,x_p^*\}$. We have
\begin{align*}
 \Mb_*(\cb-\Ab\xb^*) - \lambda \cb = (\Mb_*-\lambda \Ib)(\cb-\Ab\xb^*) - \lambda \Ab\xb^* &= \Ab \Db_* \Ab\TT \yb^* - \lambda \Ab \xb^*\\
 &= \Ab [\Db_* \Ab\TT \yb^* - \lambda \xb^* ].
\end{align*}
Finally, the optimality conditions imply
$|x_i^*|\, \ab_i\TT \yb^*=\lambda x_i \|\xb\|_1^*$, $\forall i\in[p]$, hence
$\Db_* \Ab\TT \yb^* - \lambda \xb^*=\0b$.
\end{proof}

\begin{figure}[t]
\begin{center}
\includegraphics[width=.49\linewidth]{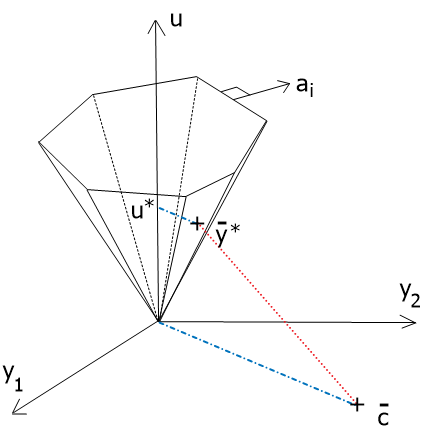}
\end{center}
\caption{\small Uniqueness of the dual optimal solution $\ybb^*$.}\label{F:dual}
\end{figure}

We also recall a well known analogous optimality condition
for the problem of $c$-optimal design.
Based on a stationarity condition,
the so-called \emph{Equivalence Theorem} from optimal design theory
states that $\wb^*\in\SP_p$ is $c$-optimal if and only if the directional derivative of $\phi_c(\cdot)$ in the direction of any
vertex of $\SP_p$ is non-negative; see, e.g., \citet{Silvey80, Pukelsheim93}.

\begin{theo}\label{theo:ET}
Suppose that the $\Hb_i$'s in $\mathds{H}$ satisfy \eqref{Hi}. Then the vector of weights $\wb^*$ is $c$-optimal if and only if
$\cb\TT\Mb^{-1}(\wb^*)(\ab_i\ab_i\TT+\ml\Ib_m)\Mb^{-1}(\wb^*)\cb \leq \phi_c(\wb^*)$ for all $i\in[p]$, with equality for all $i$ such that $w_i^*>0$.
\end{theo}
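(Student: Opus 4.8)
The plan is to give the standard convex-analytic proof based on directional derivatives toward the vertices of $\SP_p$, exploiting a simplification specific to the Bayesian setting: since each $\Hb_i=\ab_i\ab_i\TT+\ml\Ib_m\succ0$, we have $\Mb(\wb)\succeq\ml\Ib_m\succ0$ for \emph{every} $\wb\in\SP_p$, so $\phi_c$ is finite and continuously differentiable on the whole simplex and no singular information matrices ever have to be handled. I would first record that $\phi_c$ is convex on $\SP_p$: the map $\Mb\mapsto\cb\TT\Mb^{-1}\cb$ is convex on the cone of positive definite matrices and $\wb\mapsto\Mb(\wb)$ is affine, so the composition is convex. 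As $\SP_p$ is convex and $\phi_c$ is differentiable there, $\wb^*$ is a global minimizer if and only if
\[
\nabla\phi_c(\wb^*)\TT(\wb-\wb^*)\geq0\quad\text{for all }\wb\in\SP_p.
\]

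Next I would compute the gradient. Using $\partial\Mb^{-1}(\wb)/\partial w_i=-\Mb^{-1}(\wb)\,\Hb_i\,\Mb^{-1}(\wb)$ together with $\partial\Mb(\wb)/\partial w_i=\Hb_i$, the $i$-th component is
\[
\{\nabla\phi_c(\wb)\}_i=-\,\cb\TT\Mb^{-1}(\wb)(\ab_i\ab_i\TT+\ml\Ib_m)\Mb^{-1}(\wb)\cb=:-\,d_i(\wb).
\]
The key identity is that, because $\sum_i w_i^*\Hb_i=\Mb(\wb^*)$,
\[
\sum_{i=1}^p w_i^*\,d_i(\wb^*)=\cb\TT\Mb^{-1}(\wb^*)\Mb(\wb^*)\Mb^{-1}(\wb^*)\cb=\cb\TT\Mb^{-1}(\wb^*)\cb=\phi_c(\wb^*).
\]
Hence, for any $\wb\in\SP_p$,
\[
\nabla\phi_c(\wb^*)\TT(\wb-\wb^*)=\sum_{i=1}^p w_i^*\,d_i(\wb^*)-\sum_{i=1}^p w_i\,d_i(\wb^*)=\phi_c(\wb^*)-\sum_{i=1}^p w_i\,d_i(\wb^*).
\]

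The optimality condition therefore reads $\sum_i w_i\,d_i(\wb^*)\leq\phi_c(\wb^*)$ for all $\wb\in\SP_p$. Since the left-hand side is linear in $\wb$, its maximum over the simplex is attained at a vertex $\eb_i$ and equals $\max_i d_i(\wb^*)$; thus the condition is equivalent to $d_i(\wb^*)\leq\phi_c(\wb^*)$ for all $i\in[p]$, which is precisely the claimed inequality. For the equality statement I would invoke complementarity: at an optimal $\wb^*$ both $d_i(\wb^*)\leq\phi_c(\wb^*)$ for all $i$ and $\sum_i w_i^*\,d_i(\wb^*)=\phi_c(\wb^*)$ hold, so
\[
\sum_{i=1}^p w_i^*\big[\phi_c(\wb^*)-d_i(\wb^*)\big]=0
\]
is a sum of nonnegative terms and must vanish termwise; consequently $w_i^*>0$ forces $d_i(\wb^*)=\phi_c(\wb^*)$.

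I expect the only genuinely delicate point to be the reduction from ``nonnegative directional derivative toward every point of $\SP_p$'' to ``nonnegative toward every vertex,'' but this is immediate here because $\sum_i w_i\,d_i(\wb^*)$ is linear in $\wb$ and a linear function on a simplex attains its extrema at vertices. The positive definiteness of every $\Hb_i$ guarantees differentiability of $\phi_c$ on all of $\SP_p$ and removes the subgradient technicalities that usually complicate the classical Equivalence Theorem; one should only note, in passing, that the directional derivative is unaffected by whether $\Mb(\wb)$ is written as $\sum_i w_i\Hb_i$ or as $\sum_i w_i\ab_i\ab_i\TT+\ml\Ib_m$ (cf.\ Remark~\ref{R:multiplicative}), since the two gradients differ only by a vector with identical entries, which is annihilated along feasible directions $\wb-\wb^*$ satisfying $\sum_i(w_i-w_i^*)=0$.
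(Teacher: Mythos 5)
Your proof is correct, and it follows exactly the route the paper itself invokes: the paper states Theorem~\ref{theo:ET} without proof, recalling only that it is the classical Equivalence Theorem obtained from the stationarity condition that the directional derivative of $\phi_c$ toward every vertex of $\SP_p$ be non-negative (citing Silvey and Pukelsheim), and your argument is precisely the standard fleshing-out of that condition, including the correct gradient computation, the identity $\sum_i w_i^*\,d_i(\wb^*)=\phi_c(\wb^*)$, and the complementarity step yielding equality on the support.
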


Note that the theorem implies that $w_i^*(|\ab_i\TT\Mb^{-1}(\wb^*)\cb|-\| \Ab\TT\Mb^{-1}(\wb^*)\cb\|_\infty)=0$ for all $i\in[p]$ when $\wb^*$ is $c$-optimal.
In particular,
$|\ab_i\TT\Mb^{-1}(\wb^*)\cb|$ does not depend
on $i\in[p]$ as soon as $w_i^*>0$, and
the elements of $\widehat{\xb}(\wb^*)$
must be of the form $\widehat{x}_i(\wb^*)=\pm w_i^* \|\Ab\TT\Mb^{-1}(\wb^*)\cb\|_{\infty}$.
Hence,
$|\widehat{x}_i(\wb^*)|$ is proportional to $w_i^*$ and we have
$\widehat{\wb}(\widehat{\xb}(\wb^*))=\wb^*$,
i.e., $\wb^*$ is a fixed point of the mapping $\wb\mapsto\widehat{\wb}(\widehat{\xb}(\wb))$.

\medskip
We next prove an important property about $\widehat \xb(\wb)$.
For $\wb\in\SP_p$ and $\xb\in\mathds{R}^{p}$, we define
\begin{equation}\label{defy1y2}
\yb_1(\xb)=\cb-\Ab\xb\quad  \text{ and }\quad  \yb_2(\wb)=\lambda \Mb^{-1}(\wb)\cb.
\end{equation}
Note that Theorem~\ref{T:dual-optimal} implies
that $\yb_1(\xb^*) = \yb_2(\widehat{\wb}(\xb^*))$ for
any optimal solution $\xb^*$ of the quadratic lasso problem.
The next lemma states that the equation obtained by inverting the role of $\wb$ and $\xb$ holds \emph{for any design} $\wb\in\SP_p$.

\begin{lemm}\label{L:y1=y2}
Let $\wb\in\SP_{p}$. Then,
\begin{align}\label{y1=y2}
\yb_1(\widehat \xb(\wb))=\yb_2(\wb)\,,
\end{align}
where $\widehat \xb(\wb)$ is defined by~\eqref{widehat-x},
and $\yb_1, \yb_2$ are defined by~\eqref{defy1y2}.
\end{lemm}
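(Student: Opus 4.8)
The plan is to prove the identity~\eqref{y1=y2} by a direct substitution, exploiting the very computation already carried out in the proof of Theorem~\ref{theo:eq-copt-qlasso}. Since $\yb_1$ and $\yb_2$ are given explicitly by~\eqref{defy1y2} and $\widehat\xb(\wb)$ by~\eqref{widehat-x}, everything reduces to expanding $\cb-\Ab\,\widehat\xb(\wb)$ and recognizing the result as $\lambda\,\Mb^{-1}(\wb)\cb$. No optimality of $\wb$ is needed, which is precisely the point of the lemma: the equation $\yb_1=\yb_2$ holds for \emph{every} $\wb\in\SP_p$, not only at the optimum where Theorem~\ref{T:dual-optimal} provides it.

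First I would write $\widehat x_i(\wb)=w_i\,\ab_i\TT\Mb^{-1}(\wb)\cb$ and assemble the matrix-vector product
\begin{align*}
\Ab\,\widehat\xb(\wb)=\sum_{i=1}^p \widehat x_i(\wb)\,\ab_i=\sum_{i=1}^p w_i\,\ab_i\ab_i\TT\,\Mb^{-1}(\wb)\cb=\Big(\sum_{i=1}^p w_i\,\ab_i\ab_i\TT\Big)\Mb^{-1}(\wb)\cb.
\end{align*}
The key step is then the same algebraic identity used for the last equality in the derivation of $\widehat\xb$ inside the proof of Theorem~\ref{theo:eq-copt-qlasso}: since $\Hb_i=\ab_i\ab_i\TT+\ml\Ib_m$ and $\sum_i w_i=1$ for $\wb\in\SP_p$, we have $\Ab\Db(\wb)\Ab\TT=\sum_{i=1}^p w_i\,\ab_i\ab_i\TT=\Mb(\wb)-\ml\Ib_m$. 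Substituting this gives $\Ab\,\widehat\xb(\wb)=(\Mb(\wb)-\ml\Ib_m)\Mb^{-1}(\wb)\cb=\cb-\ml\Mb^{-1}(\wb)\cb$.

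Finally I would conclude by subtracting from $\cb$:
\begin{align*}
\yb_1(\widehat\xb(\wb))=\cb-\Ab\,\widehat\xb(\wb)=\cb-\big(\cb-\ml\Mb^{-1}(\wb)\cb\big)=\ml\,\Mb^{-1}(\wb)\cb=\yb_2(\wb),
\end{align*}
which is exactly~\eqref{y1=y2}. I do not expect any real obstacle here: the statement is essentially a one-line consequence of the definition~\eqref{M} of $\Mb(\wb)$ together with the normalization $\sum_i w_i=1$, and the only thing to be careful about is that the identity $\sum_i w_i\,\ab_i\ab_i\TT=\Mb(\wb)-\ml\Ib_m$ relies on $\wb$ lying in the simplex $\SP_p$ so that the $\ml\Ib_m$ terms aggregate to a single $\ml\Ib_m$. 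This is why the hypothesis $\wb\in\SP_p$ is the sole assumption required.
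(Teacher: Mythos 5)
Your proof is correct and is essentially identical to the paper's: both expand $\Ab\,\widehat\xb(\wb)=\Ab\Db(\wb)\Ab\TT\Mb^{-1}(\wb)\cb$ and use the identity $\Ab\Db(\wb)\Ab\TT=\Mb(\wb)-\ml\Ib_m$ (valid because $\sum_i w_i=1$) to conclude. Your closing remark about why $\wb\in\SP_p$ is needed is a fair observation, but there is no substantive difference from the paper's argument.
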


\begin{proof}
Denote $\Db=\Db(\wb)=\diag\{w_1,\ldots,w_{p}\}$. By definition, $\Mb=\Mb(\wb)=\Ab \Db\Ab\TT+\lambda\Ib_m$ and
\eqref{widehat-x} gives $\widehat \xb(\wb)=\Db\Ab\TT\Mb^{-1}\cb$. It follows that
\begin{align*}
\yb_1(\widehat \xb(\wb)) = \cb-\Ab \widehat \xb(\wb)=\cb-\Ab\Db\Ab\TT\Mb^{-1}\cb=\left[ \Mb-\Ab\Db\Ab\TT\right]\Mb^{-1}\cb=\ml\,\Mb^{-1}\cb=\yb_2(\wb) \,.
\end{align*}\vspace{-2em}
\end{proof}

\medskip
We conclude this section by making explicit
the connection between the quadratic and standard lasso problems. This connection will be exploited in Section~\ref{S:homotopy} to derive a new algorithm
for the computation of $c$-optimal designs.
The (standard) lasso problem reads
\begin{equation}\label{Palpha}
\min_{\xb\in\mathds{R}^p}\quad  \frac12 \|\Ab\xb-\cb\|^2 + \alpha  \|\xb\|_1,
\end{equation}
that is,~\eqref{Palpha} is the Lagrange relaxation of
the minimization of $\frac12 \|\Ab\xb-\cb\|^2$ under the constraint $\|\xb\|_1<t$, or equivalently $\|\xb\|_1^2<t^2$, for some $t>0$. The formulation \eqref{qlasso} with a squared penalty $\|\xb\|_1^2$ thus corresponds to the Lagrangian relaxation of the same problem, with the precise relationship between $t^2$ and $\ml$ depending on $\cb$ and $\Ab$.
(Note that there is a factor $\frac12$ in front of $\|\Ab\xb-\cb\|^2$ in the standard lasso problem~\eqref{Palpha}, but not in~\eqref{qlasso}: this factor is commonly introduced in the standard lasso to have full symmetry with the dual problem; we have refrained from introducing such a factor for the quadratic lasso problem~\eqref{qlasso}, as this would make the formulas of the next section more complex.)

\begin{theo}\label{T:lasso-qlasso}
Let $\alpha < \|\Ab\TT\cb\|_\infty$
and $\xb^*$ be an optimal solution
to the the standard lasso problem~\eqref{Palpha}.
Then, $\xb^*\neq \0b^*$ and $\xb^*$ is an optimal solution to the quadratic lasso
\eqref{qlasso} with $\lambda=\lambda(\alpha)=\alpha/\|\xb^*\|_1$.
Conversely, for all $\lambda>0$, if $\Ab\TT\cb\neq\0b$ and $\xb^*$ is an optimal solution to the quadratic lasso~\eqref{qlasso}, then $\xb^*$ is
also an optimal solution to the standard
lasso~\eqref{Palpha} with $\alpha=\alpha(\ml)=\ml\|\xb^*\|_1$.
\end{theo}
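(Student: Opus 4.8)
The plan is to prove the equivalence by comparing the optimality conditions (KKT / subgradient systems) of the two problems and exploiting the homogeneity of the penalties. Both problems are convex and unconstrained, so optimality is fully characterized by a stationarity (first-order) condition. The key observation is that both $\|\xb\|_1$ and $\|\xb\|_1^2$ generate subgradient conditions of the \emph{same combinatorial shape}: a sign/magnitude pattern on the coordinates of $\Ab\TT\yb^*$ with $\yb^*=\cb-\Ab\xb^*$, differing only in the scalar threshold. I would first record the standard lasso optimality condition and then match it against the quadratic lasso condition already worked out in Theorem~\ref{T:dual-optimal}.

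For the forward direction, let $\xb^*$ solve the standard lasso~\eqref{Palpha} with $\alpha<\|\Ab\TT\cb\|_\infty$. First I would show $\xb^*\neq\0b$: the subgradient condition at $\0b$ reads $\Ab\TT\cb\in\alpha\,\partial\|\0b\|_1=\{\ub:\|\ub\|_\infty\le\alpha\}$, i.e. $\|\Ab\TT\cb\|_\infty\le\alpha$, which is excluded by hypothesis; hence $\0b$ is not optimal. Next, writing $\yb^*=\cb-\Ab\xb^*$, the stationarity condition $\0b\in\partial(\tfrac12\|\Ab\xb-\cb\|^2+\alpha\|\xb\|_1)$ at $\xb^*$ gives $\Ab\TT\yb^*\in\alpha\,\partial\|\xb^*\|_1$, that is, $|\ab_i\TT\yb^*|\le\alpha$ for $x_i^*=0$ and $\ab_i\TT\yb^*=\alpha\,\operatorname{sign}(x_i^*)$ otherwise. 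Setting $\lambda=\alpha/\|\xb^*\|_1$ (well-defined since $\xb^*\neq\0b$) this is exactly the system~\eqref{KKT-system} characterizing quadratic-lasso optimality, because $\alpha=\lambda\|\xb^*\|_1$ turns the thresholds $\alpha$ into $\lambda\|\xb^*\|_1$; therefore $\xb^*$ solves~\eqref{qlasso} with this $\lambda$.

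For the converse, let $\xb^*$ solve~\eqref{qlasso} with $\Ab\TT\cb\neq\0b$. By Theorem~\ref{theo:eq-copt-qlasso}(ii) we already have $\xb^*\neq\0b$, so $\alpha:=\lambda\|\xb^*\|_1>0$ is well-defined, and the quadratic-lasso conditions~\eqref{KKT-system} state precisely that $\Ab\TT\yb^*\in\alpha\,\partial\|\xb^*\|_1$ with $\yb^*=\cb-\Ab\xb^*$. This is the standard-lasso stationarity condition at parameter $\alpha$, so $\xb^*$ is optimal for~\eqref{Palpha} with $\alpha=\alpha(\lambda)$. I would note the pleasant self-consistency of the construction: in both directions the relation $\alpha=\lambda\|\xb^*\|_1$ holds at the common optimizer.

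The main subtlety to handle carefully is the gradient of the squared penalty, namely the identity $\partial g_1(\xb)=\|\xb\|_1\,\partial\|\xb\|_1$ (used in Theorem~\ref{T:dual-optimal}), which is what makes the quadratic penalty's threshold scale as $\lambda\|\xb^*\|_1$ rather than a fixed constant; this is precisely the mechanism that lets a single standard-lasso solution serve as a quadratic-lasso solution for a $\xb^*$-dependent $\lambda$, and conversely. A secondary point worth stating explicitly is that the hypothesis $\alpha<\|\Ab\TT\cb\|_\infty$ in the forward direction is exactly the threshold below which the standard lasso has a nonzero solution, matching the nondegeneracy assumption $\Ab\TT\cb\neq\0b$ used in the converse. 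No harder analytic work is needed; the whole argument is an alignment of two subgradient systems through the scalar substitution $\alpha=\lambda\|\xb^*\|_1$.
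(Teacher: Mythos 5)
Your proof is correct and follows essentially the same route as the paper's: both arguments match the standard-lasso subgradient condition $\Ab\TT\yb^*\in\alpha\,\partial\|\xb^*\|_1$ against the quadratic-lasso KKT system~\eqref{KKT-system} via the substitution $\alpha=\lambda\|\xb^*\|_1$, rule out $\xb^*=\0b$ in the forward direction from $\alpha<\|\Ab\TT\cb\|_\infty$, and invoke Theorem~\ref{theo:eq-copt-qlasso}(ii) for non-triviality in the converse. The only cosmetic difference is that the paper phrases the standard-lasso conditions through an explicit dual vector $\boldsymbol{\theta}^*=\yb^*/\alpha$.
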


\begin{proof}
 This property is a simple consequence of the KKT conditions of the standard lasso problem~\eqref{Palpha}, which can be found,
 e.g., in~\cite{XiangWR2016}, and read as follows: $\xb^*$ solves~\eqref{Palpha} if and only if there exists a dual vector $\boldsymbol{\theta}^*$ such that
 \begin{equation}\label{KKTalpha}
  \cb=\Ab\xb^* + \alpha \boldsymbol{\theta}^*
  \quad \text{and}\quad
  \forall i\in[p],\ \ab_i\TT \boldsymbol{\theta}^* \in
  \left\{
  \begin{array}{ll}
  \{\operatorname{sign}(x_i^*)\} & \text{ if }x_i^*\neq 0;\\
  {[-1,1]} & \text{ if }x_i^*= 0.
  \end{array}
  \right.
 \end{equation}
First note that $\xb^*=\0b$ can only satisfy \eqref{KKTalpha}
if $\boldsymbol{\theta}^*=\cb/\alpha$, which
implies $\|\Ab\TT\cb\|_\infty=\alpha\|\Ab\TT\boldsymbol{\theta}^*\|_\infty\leq\alpha$.

Now, let $\alpha < \|\Ab\TT\cb\|_\infty$ and $\xb^*\neq \0b$ be an optimal solution to the standard
lasso problem~\eqref{Palpha}, and denote by $\boldsymbol{\theta}^*$ a corresponding optimal dual vector.
Define $\ml = \frac{\alpha}{\|\xb^*\|_1}$, and $\yb^*=\alpha \boldsymbol{\theta}^*=\ml \|\xb^*\|_1 \boldsymbol{\theta}^*$.
Clearly, \eqref{KKTalpha} and $\xb^*\neq\0b$ imply that $(\xb^*,\yb^*)$ fulfills
the KKT system~\eqref{KKT-system}, hence
$\xb^*$ is an optimal solution to the quadratic lasso~\eqref{qlasso}.
Conversely, if
$\Ab\TT\cb\neq\0b$ and
$\xb^*$ solves the quadratic lasso,
we know from Theorem~\ref{theo:eq-copt-qlasso}
that $\xb^*\neq\0b$, so $\alpha=\alpha(\lambda)=\lambda \|\xb^*\|_1> 0$.
Further, if $(\xb^*,\yb^*)$ solves
the KKT equations~\eqref{KKT-system}, then
$(\xb^*,\boldsymbol{\theta}^*)$ solves~\eqref{KKTalpha}
with $\boldsymbol{\theta}^*=\yb^*/\alpha$.
Thus, $\xb^*$ solves the standard lasso with regularizer $\alpha$.
\end{proof}

%------------------------------------
%\subsection{Screening rules for inessential \texorpdfstring{$\ab_i$}{ai}'s}\label{S:rules-r=1}
\subsection{Screening rules for inessential $\ab_i$'s}\label{S:rules-r=1}

The objective is to derive an inequality that must be satisfied by any $\ab_i$ that may support an optimal design.
To this end, we first use information about the current iterate $\xb$ for Problem~\eqref{qlasso} to construct a dual solution $\yb(\xb)$. Then, we derive a bound on $\|\yb(\xb)-\yb^*\|$ and exploit the optimality conditions of the dual problem~\eqref{conic_form}.

\begin{theo}\label{theo:B0}
Let $\yb\in\mathds{R}^m$ be an $\me$-suboptimal dual solution, i.e., $\SD_\lambda(\yb)\geq\SD_\lambda(\yb^*)-\me$.
Then, any $\ab_i$ satisfying
\begin{align*}
D_0(\ab_i; \yb, \me)= \|\Ab\TT\yb\|_\infty- |\ab_i\TT\yb|  - \sqrt{\me(\|\ab_i\|^2+\ml)} >0
\end{align*}
cannot support an optimal design.
\end{theo}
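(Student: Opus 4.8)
The plan is to argue by contraposition through the optimality conditions of the dual. By the complementarity relation \eqref{likeET} in Theorem~\ref{T:dual-optimal} every optimal $\xb^*$ satisfies $x_i^*(\|\Ab\TT\yb^*\|_\infty-|\ab_i\TT\yb^*|)=0$, and for any $c$-optimal $\wb^*$ one has $\yb^*=\ml\Mb^{-1}(\wb^*)\cb$, so the note following Theorem~\ref{theo:ET} gives the design analogue $w_i^*(\|\Ab\TT\yb^*\|_\infty-|\ab_i\TT\yb^*|)=0$. Since $\yb^*$ is the \emph{unique} dual optimum, it therefore suffices to prove the strict inequality $|\ab_i\TT\yb^*|<\|\Ab\TT\yb^*\|_\infty$ under the hypothesis $D_0(\ab_i;\yb,\me)>0$: this forces $w_i^*=0$ for every $c$-optimal design, i.e.\ $\ab_i$ cannot support an optimal design.

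First I would control the distance between the test point and the dual optimum. Using the conic reformulation \eqref{conic_form}, the dual optimum $\ybb^*=(\yb^{*\top},u^*)\TT$ is the Euclidean projection of $\cbb$ onto the closed convex cone $\cP_\ml(\Ab)$, whose last coordinate is $u^*=\|\Ab\TT\yb^*\|_\infty/\sqrt\ml$ (for fixed $\yb^*$ the projection minimizes $u^2$, hence picks the smallest feasible $u$). Given the $\me$-suboptimal $\yb$, set $u=\|\Ab\TT\yb\|_\infty/\sqrt\ml$ so that $\ybb=(\yb\TT,u)\TT\in\cP_\ml(\Ab)$ and $\|\ybb-\cbb\|^2=\|\cb\|^2-\SD_\ml(\yb)$. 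The obtuse-angle inequality for projections, $\langle\cbb-\ybb^*,\ybb-\ybb^*\rangle\le 0$, yields the Pythagorean bound $\|\ybb-\cbb\|^2\ge\|\ybb-\ybb^*\|^2+\|\cbb-\ybb^*\|^2$, whence
\begin{align*}
\|\ybb-\ybb^*\|^2 \le \big(\|\cb\|^2-\SD_\ml(\yb)\big)-\big(\|\cb\|^2-\SD_\ml(\yb^*)\big)=\SD_\ml(\yb^*)-\SD_\ml(\yb)\le\me.
\end{align*}

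The delicate step is to extract the \emph{sharp} constant $\sqrt{\me(\|\ab_i\|^2+\ml)}$ rather than the weaker $\sqrt\me\,(\|\ab_i\|+\sqrt\ml)$ that a triangle inequality bounding the perturbations of $\yb$ and of $u$ separately would give. The trick is to remain in the augmented space $\mathds{R}^{m+1}$: for a sign $s\in\{-1,+1\}$ set $\gb_s=(-s\,\ab_i\TT,\ \sqrt\ml)\TT$, so that $\|\gb_s\|=\sqrt{\|\ab_i\|^2+\ml}$ and, for the feasible augmented point above, $\gb_s\TT\ybb=\|\Ab\TT\yb\|_\infty-s\,\ab_i\TT\yb$ (and likewise at $\ybb^*$). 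Applying Cauchy--Schwarz to $\gb_s\TT(\ybb^*-\ybb)$ with $\|\ybb-\ybb^*\|\le\sqrt\me$ gives, for each sign,
\begin{align*}
\|\Ab\TT\yb^*\|_\infty-s\,\ab_i\TT\yb^* \ge \big(\|\Ab\TT\yb\|_\infty-s\,\ab_i\TT\yb\big)-\sqrt{\me(\|\ab_i\|^2+\ml)}.
\end{align*}

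Taking the minimum over $s\in\{-1,+1\}$ converts both signed expressions into absolute values and produces exactly
\begin{align*}
\|\Ab\TT\yb^*\|_\infty-|\ab_i\TT\yb^*| \ge \|\Ab\TT\yb\|_\infty-|\ab_i\TT\yb|-\sqrt{\me(\|\ab_i\|^2+\ml)}=D_0(\ab_i;\yb,\me)>0,
\end{align*}
which is the required strict inequality. I expect the sign/absolute-value bookkeeping in this final combination, together with the verification that $u^*=\|\Ab\TT\yb^*\|_\infty/\sqrt\ml$ at the projection, to be the only genuinely non-routine points; the rest follows directly from convexity and the duality already established in Theorems~\ref{theo:eq-copt-qlasso}--\ref{T:dual-optimal}.
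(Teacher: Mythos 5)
Your proof is correct and follows essentially the same route as the paper's: the conic reformulation \eqref{conic_form}, the obtuse-angle/Pythagorean bound giving $\|\ybb-\ybb^*\|\le\sqrt{\me}$, and the complementarity condition forcing $|\ab_i\TT\yb^*|=\sqrt{\ml}\,u^*$ for any supporting $\ab_i$. The only difference is that you make explicit, via Cauchy--Schwarz applied to $\gb_s=(-s\,\ab_i\TT,\ \sqrt{\ml})\TT$, the perturbation inequality with constant $\sqrt{\me(\|\ab_i\|^2+\ml)}$ that the paper states without detail as a consequence of $\ybb^*\in\SB_{m+1}(\ybb,\sqrt{\me})$.
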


\begin{proof}
We use the same construction as in part~($i$) of the proof of Theorem~\ref{T:dual-optimal}, and  denote $\ybb=(\yb\TT,\ u)\TT$, $\ybb^*=({\yb^*}\TT,\ u^*)\TT$ and $\cbb=(\cb\TT,0)\TT$, with $\yb^*$ the optimal solution of the dual problem~\eqref{dual}, $u=\|\Ab\TT\yb\|_\infty/\sqrt{\ml}$ and $u^*=\|\Ab\TT\yb^*\|_\infty/\sqrt{\ml}$. Then, $\SD_\ml(\yb)=\|\cb\|^2-\|\ybb-\cbb\|^2$ and $\SD_\ml(\yb^*)=\|\cb\|^2-\|\ybb^*-\cbb\|^2$.
The quantity $(\ybb-\ybb^*)\TT(\ybb^*-\cbb)$ is nonnegative,
as $\ybb\in\mathcal{P}_\lambda(A)$ is feasible for the conic formulation of the dual problem~\eqref{conic_form}, and by optimality of $\ybb^*$ the gradient $\nabla\|\ybb-\cbb\|^2_{\mid_{\ybb=\ybb^*}}=2(\ybb^*-\cbb)$ defines a supporting hyperplane to $\mathcal{P}_\lambda(A)$.

If we denote $r=\|\ybb-\cbb\|$ and $r_*=\|\ybb^*-\cbb\|$, then
we can write
$r^2=\|\ybb-\ybb^* + \ybb^*-\cbb\|^2=\|\ybb-\ybb^*\|^2+r_*^2+2(\ybb-\ybb^*)\TT(\ybb^*-\cbb)$, hence $r^2- r_*^2\geq \|\ybb-\ybb^*\|^2$. In addition,
\begin{align*}
r^2-r_*^2=\SD_\ml(\yb^*)-\SD_\ml(\yb)\leq\me\,.
\end{align*}
We thus obtain $\|\ybb-\ybb^*\|^2 \leq r^2-r_*^2 \leq \me$; i.e. $\ybb^* \in \SB_{m+1}(\ybb,\sqrt{\me})$.

Due to the KKT-condition~\eqref{KKT-system} for Problem~\eqref{conic_form}, it holds $x_i^* (|\ab_i\TT \yb^*|-\sqrt{\lambda} u^*)=0$ for all $i\in[p]$. Therefore, any $\ab_i$ that supports an optimal design corresponds to an active constraint at the optimum, and thus satisfies $|\ab_i\TT\yb^*| = \sqrt{\ml}\,u^*$. But $\ybb^*\in\SB_{m+1}(\ybb,\sqrt{\me})$ implies
\begin{align*}
|\ab_i\TT\yb^*| - \sqrt{\ml}\,u^* \leq |\ab_i\TT\yb| - \sqrt{\ml}\,u + \sqrt{\me(\|\ab_i\|^2+\ml)}  \,,
\end{align*}
showing that any $\ab_i$ such that $|\ab_i\TT\yb| < \sqrt{\ml}\,u - \sqrt{\me(\|\ab_i\|^2+\ml)}$ cannot support an optimal design.
\end{proof}

Specializing the above result for the dual solutions $\yb_1(\xb)=\cb-\Ab\xb$ and $\yb_2(\widehat\wb(\xb))=\lambda \Mb^{-1}(\widehat\wb(\xb)) \cb$ (which are reasonable choices by Theorem~\ref{T:dual-optimal}), we obtain the screening criteria given in Corollaries~\ref{C:B1} and \ref{C:B2}.

\begin{coro}\label{C:B1}
Let $\xb\in\mathds{R}^{p}$, and define
\begin{align}
D_1(\ab_i; \xb) &=\ D_0\Big(\ab_i;\ \yb_1,\ \SL_\lambda(\xb)-\SD_\lambda(\yb_1)\Big) \nonumber \\
&=\ \|\vb\|_\infty- |v_i| -  \left( \frac{\|\vb\|_\infty^2}{\lambda} + \lambda\|\xb\|_1^2 - 2\,\xb\TT\vb \right)^{1/2}\, \sqrt{\|\ab_i\|^2+\ml}\,, \label{B1}
\end{align}
where $\yb_1=\yb_1(\xb)=\cb-\Ab\xb$ and $\vb=\vb(\xb)=\Ab\TT\yb_1(\xb)$. Then, any $\ab_i$ satisfying $D_1\big(\ab_i;\, \xb \big)>0$ cannot support an optimal design.
\end{coro}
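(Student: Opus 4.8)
The plan is to obtain Corollary~\ref{C:B1} as a direct instance of Theorem~\ref{theo:B0}, applied to the specific dual point $\yb=\yb_1(\xb)=\cb-\Ab\xb$ and with the suboptimality level $\me$ taken to be the primal--dual gap $\me=\SL_\ml(\xb)-\SD_\ml(\yb_1)$ attached to the pair $(\xb,\yb_1)$. Since the dual problem~\eqref{dual} is unconstrained, $\yb_1\in\mathds{R}^m$ automatically qualifies as a dual solution, so there are really only two things to check: first, that this $\me$ is a valid nonnegative suboptimality certificate so that Theorem~\ref{theo:B0} is applicable; second, that the quantity $D_0(\ab_i;\yb_1,\me)$ collapses to the closed-form expression in~\eqref{B1}.

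For the first point I would invoke weak duality. Because the quadratic lasso has no duality gap (Section~\ref{S:dual-lasso}), we have $\SD_\ml(\yb^*)=\min_{\xb'}\SL_\ml(\xb')\leq\SL_\ml(\xb)$ for every $\xb\in\mathds{R}^p$, and also $\SD_\ml(\yb_1)\leq\SD_\ml(\yb^*)$ by optimality of $\yb^*$. The first inequality gives $\me=\SL_\ml(\xb)-\SD_\ml(\yb_1)\geq\SD_\ml(\yb^*)-\SD_\ml(\yb_1)$, so $\yb_1$ is $\me$-suboptimal in exactly the sense required, while the second gives $\me\geq 0$, ensuring that the radicand in $D_0$ is admissible. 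This is the only place where care about signs is warranted.

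For the second point the computation is routine bookkeeping. Because $\ab_i$ is the $i$th column of $\Ab$, the vector $\vb=\Ab\TT\yb_1$ has $i$th entry $v_i=\ab_i\TT\yb_1$, so the leading terms of $D_0$ read $\|\Ab\TT\yb_1\|_\infty-|\ab_i\TT\yb_1|=\|\vb\|_\infty-|v_i|$. It then remains to re-express $\me$ in terms of $\xb$ and $\vb$. Using $\yb_1=\cb-\Ab\xb$ one has $\SL_\ml(\xb)=\|\yb_1\|^2+\ml\|\xb\|_1^2$ and, from~\eqref{dual}, $\SD_\ml(\yb_1)=\|\cb\|^2-\|\Ab\xb\|^2-\|\vb\|_\infty^2/\ml$. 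Subtracting these and expanding $\|\yb_1\|^2=\|\cb\|^2-2\cb\TT\Ab\xb+\|\Ab\xb\|^2$ reduces $\me$ to $\|\vb\|_\infty^2/\ml+\ml\|\xb\|_1^2-2\xb\TT\vb$, where the cross term is identified through $\xb\TT\vb=\xb\TT\Ab\TT\yb_1=(\Ab\xb)\TT(\cb-\Ab\xb)=\cb\TT\Ab\xb-\|\Ab\xb\|^2$. Substituting this value of $\me$ into $\sqrt{\me(\|\ab_i\|^2+\ml)}$ yields precisely~\eqref{B1}.

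Finally, the screening conclusion itself is inherited verbatim from Theorem~\ref{theo:B0}: any $\ab_i$ with $D_1(\ab_i;\xb)=D_0(\ab_i;\yb_1,\me)>0$ cannot support an optimal design. I do not anticipate a genuine obstacle; the entire content of the corollary is the judicious choice of the gap $\me=\SL_\ml(\xb)-\SD_\ml(\yb_1)$ as the certificate, whose legitimacy rests squarely on the strong duality established in Section~\ref{S:dual-lasso}, together with the elementary algebraic simplification above.
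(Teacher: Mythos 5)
Your proposal is correct and follows essentially the same route as the paper: specialize Theorem~\ref{theo:B0} to $\yb=\yb_1(\xb)$ with $\me=\SL_\ml(\xb)-\SD_\ml(\yb_1)$, justify via duality that this $\me$ upper-bounds $\SD_\ml(\yb^*)-\SD_\ml(\yb_1)$, and simplify the gap algebraically to the closed form in~\eqref{B1}. Your expansion of the cross term $\xb\TT\vb$ checks out, so nothing is missing.
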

\begin{proof}
The result directly follows from Theorem~\ref{theo:B0}. The expression $\SL_\lambda(\xb)-\SD_\lambda[\yb_1(\xb)]$ is an upper bound on the duality gap at $\xb$, as $\SD_\ml(\yb^*)-\SD_\ml[\yb_1(\xb)]=\SL_\ml(\xb^*)-\SD_\ml[\yb_1(\xb)]\leq \SL_\ml(\xb)-\SD_\ml[\yb_1(\xb)]$.
Direct calculation gives
\begin{align*}
 \SL_\ml(\xb)-\SD_\ml[\yb_1(\xb)] =  \frac{\|\vb\|_\infty^2}{\lambda} + \lambda\|\xb\|_1^2 - 2\,\xb\TT\vb \,,
\end{align*}
which gives the expression \eqref{B1} for $D_1$.
\end{proof}

\begin{coro}\label{C:B2}
Let $\xb\in\mathds{R}^{p}\setminus\{\0b\}$, and define
\begin{align}
D_2(\ab_i;\xb) &=\ D_0\Big(\ab_i;\ \yb_2,\ \lambda\, \cb\TT \Mb^{-1}(\widehat\wb(\xb))\cb-\SD_\lambda(\yb_2)\Big) \nonumber \\
 &=\
 \|\vb'\|_\infty - |v_i'| - \left( \frac{\|\vb'\|_\infty^2}{\lambda} + \yb_2\TT[\yb_2-\cb] \right)^{1/2}\sqrt{\|\ab_i\|^2+\ml} \,, \label{B2}
\end{align}
where $\yb_2=\yb_2(\widehat\wb(\xb))=\ml\,\Mb^{-1}(\widehat\wb(\xb))\cb$ and $\vb'=\vb'(\xb)=\Ab\TT \yb_2(\widehat\wb(\xb))$. Then, any $\ab_i$ satisfying $D_2 (\ab_i;\, \xb) >0$
cannot support an optimal design.
\end{coro}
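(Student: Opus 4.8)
The plan is to mirror the proof of Corollary~\ref{C:B1}: I would specialize Theorem~\ref{theo:B0} to the dual candidate $\yb_2=\yb_2(\widehat\wb(\xb))=\ml\Mb^{-1}(\widehat\wb(\xb))\cb$, supply an admissible suboptimality level $\me$, and then reduce the resulting $D_0(\ab_i;\yb_2,\me)$ to the closed form~\eqref{B2}. The only structural difference with Corollary~\ref{C:B1} is the choice of upper bound on the optimal value used to certify $\yb_2$.

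The first---and really the only substantive---step is to verify that $\me=\ml\,\cb\TT\Mb^{-1}(\widehat\wb(\xb))\cb-\SD_\lambda(\yb_2)=\ml\phi_c(\widehat\wb(\xb))-\SD_\lambda(\yb_2)$ is an admissible certificate, i.e.\ that $\SD_\lambda(\yb_2)\ge\SD_\lambda(\yb^*)-\me$. By part~$(i)$ of Theorem~\ref{theo:eq-copt-qlasso} together with strong duality, $\SD_\lambda(\yb^*)=\min_{\xb}\SL_\lambda(\xb)=\ml\phi_c(\wb^*)$. Since $\widehat\wb(\xb)\in\SP_p$ is a feasible design, optimality of $\wb^*$ gives $\phi_c(\wb^*)\le\phi_c(\widehat\wb(\xb))$, whence $\SD_\lambda(\yb^*)\le\ml\phi_c(\widehat\wb(\xb))$, and the required inequality follows by subtracting $\SD_\lambda(\yb_2)$. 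I would stress here that, unlike in Corollary~\ref{C:B1}, the upper bound on the optimal value is the primal-\emph{feasible} design value $\ml\phi_c(\widehat\wb(\xb))$ rather than a value $\SL_\lambda(\xb')$; by~\eqref{L>phi} the former is in fact no larger, which is what makes $D_2$ a genuinely different rule.

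The remaining work is routine simplification of $\me$. Using $\yb_2=\ml\Mb^{-1}(\widehat\wb(\xb))\cb$ one has $\cb\TT\yb_2=\ml\phi_c(\widehat\wb(\xb))$ and $\Ab\TT\yb_2=\vb'$, so from the definition~\eqref{dual} of $\SD_\lambda$ and the expansion $\|\yb_2-\cb\|^2=\|\yb_2\|^2-2\yb_2\TT\cb+\|\cb\|^2$ I expect the cancellation
\begin{align*}
\me=\cb\TT\yb_2-\SD_\lambda(\yb_2)=\frac{\|\vb'\|_\infty^2}{\ml}+\yb_2\TT[\yb_2-\cb].
\end{align*}
Substituting $\yb=\yb_2$ and this value of $\me$ into $D_0(\ab_i;\yb_2,\me)=\|\Ab\TT\yb_2\|_\infty-|\ab_i\TT\yb_2|-\sqrt{\me(\|\ab_i\|^2+\ml)}$ and replacing $\Ab\TT\yb_2$ by $\vb'$ then yields exactly~\eqref{B2}.

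I do not anticipate any real obstacle beyond the first step: once Theorem~\ref{theo:B0} is available, everything is bookkeeping. The one point worth getting right is the justification that $\ml\phi_c(\widehat\wb(\xb))$---and not any primal objective $\SL_\lambda(\xb')$---is the correct quantity to bound $\SD_\lambda(\yb^*)$, since $\yb_2$ need not equal $\yb_1(\xb')$ for any iterate and hence does not come with an obvious matching primal point.
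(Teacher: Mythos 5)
Your proposal is correct and follows essentially the same route as the paper: both certify the suboptimality level $\me=\ml\phi_c(\widehat\wb(\xb))-\SD_\ml(\yb_2)$ via $\SD_\ml(\yb^*)=\ml\phi_c(\wb^*)\leq\ml\phi_c(\widehat\wb(\xb))$ and then reduce $\me$ to $\|\vb'\|_\infty^2/\ml+\yb_2\TT[\yb_2-\cb]$ by direct computation. The only difference is that you spell out the algebraic cancellation (using $\cb\TT\yb_2=\ml\phi_c(\widehat\wb(\xb))$) that the paper leaves as ``direct calculation.''
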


\begin{proof}
This time, we use $\lambda\phi_c(\widehat \wb(\xb))-\SD_\lambda[\yb_2(\widehat\wb(\xb))]$ as an upper bound on the duality gap at $\xb$, which follows from
$\SD_\ml(\yb^*)-\SD_\lambda[\yb_2(\widehat\wb(\xb))]=\lambda\phi_c(\wb^*)-\SD_\lambda[\yb_2(\widehat\wb(\xb))]\leq \lambda \phi_c(\widehat \wb(\xb))-\SD_\lambda[\yb_2(\widehat\wb(\xb))]$.
Direct calculation gives
$\lambda\, \cb\TT \Mb^{-1}(\widehat\wb(\xb))\cb-\SD_\lambda[\yb_2(\widehat\wb(\xb))] = \|\vb'\|_\infty^2/\ml+ \yb_2\TT(\widehat\wb(\xb))[\yb_2(\widehat\wb(\xb))-\cb]$ and therefore \eqref{B2}.
\end{proof}

More generally, as $\lambda\phi_c(\widehat \wb(\xb))\leq\SL_\ml(\xb)$, see~\eqref{L>phi}, the substitution of $\lambda\phi_c(\widehat \wb(\xb))$ for $\SL_\ml(\xb)$ in the construction of $D_1(\ab_i; \xb)$ yields a better bound
\[
\widetilde D_1(\ab_i; \xb)= D_0\big(\ab_i; \yb_1(\xb), \lambda\phi_c(\widehat \wb(\xb))-\SD_\ml(\yb_1(\xb))\big) \geq D_1(\ab_i; \xb) \,.
\]
The reason for using this refined bound on the duality gap in $D_2$ (see Corollary~\ref{C:B2}) but not in $D_1$ is computational efficiency: the calculation of $D_1$ is dominated by the computation of $\vb$, which requires $O(mp)$
operations, while the calculation of $D_2$ is dominated by the computation of $\yb_2(\widehat\wb(\xb))$ that takes $O(m^2p)$ operations.
However, if we are willing to pay this higher computational cost then we might as well take advantage of the improved bound on the duality gap, hence its use in $D_2$.

In practice, the exact complexity of a screening operation depends on the algorithm used to solve the optimization problem~\eqref{qlasso}. The above discussion assumes that this algorithm works directly on the primal variables $\xb$, such as FISTA or the block coordinate descent proposed in~\citep{SagnolP2019}. An example comparing the efficiencies of screening with $D_1$ and $D_2$ is given in
Section~\ref{S:MNIST}.
However, the situation is quite different when the algorithm operates on the design weights $\wb$, such as the multiplicative update algorithm~\eqref{multiplicative}: the vector $\yb_2(\wb)$ is then readily obtained from
intermediate calculations when computing the gradient
$\nabla \phi'_c(\wb)$, and there is no computational incentive to use $D_1$ rather than $\widetilde{D}_1$ anymore. Moreover, in this case, starting from $\wb$, we use $\xb=\widehat\xb(\wb)$ in $\widetilde D_1$ and $\widehat\wb(\xb)=\wb$ in $D_2$, see~\eqref{hatw} and \eqref{widehat-x}. Lemma~\ref{L:y1=y2} then implies that  $\yb_2(\wb)=\yb_1(\xb)=\ml\Mb^{-1}(\wb)\cb$,
so that $\widetilde{D}_1$ and $D_2$ coincide. Rewriting this bound as a function of $\wb$, we obtain
\begin{align*}
D_2'(\ab_i; \wb) &= \|\vb\|_\infty- |v_i| - \left( \frac{\|\vb\|_\infty^2}{\lambda} + \yb\TT(\yb-\cb) \right)^{1/2}\, \sqrt{\|\ab_i\|^2+\ml} \,,
\end{align*}
where $\yb=\yb(\wb)=\ml\Mb^{-1}(\wb)\cb$ and $\vb=\Ab\TT\yb$.
We thus have the following property.

\begin{coro}\label{C:bound-c-opt}
Let $\Mb$ be any matrix in the convex hull of $\mathds{H}$ with the $\Hb_i$ satisfying \eqref{Hi}. Then, any matrix $\Hb_i$ in $\mathds{H}$ such that $B(\Mb,\Hb_i)>0$, where
\begin{align*}
B(\Mb,\Hb_i)= \left\{(1+\delta)\, \Phi_c(\Mb)-\ml\,\cb\TT\Mb^{-2}\cb\right\}^{1/2} - \left[\delta\, \Phi_c(\Mb)\left(1+ \frac{\|\ab_i\|^2}{\ml}\right)\right]^{1/2} - |\cb\TT\Mb^{-1}\ab_i|
\end{align*}
and $\delta=\max_{\Hb_i\in\mathds{H}} \cb\TT\Mb^{-1}\Hb_i\Mb^{-1}\cb/(\cb\TT\Mb^{-1}\cb)-1$, does not support a $c$-optimal design.
\end{coro}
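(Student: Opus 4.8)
The plan is to recognize $B(\Mb,\Hb_i)$ as nothing but a rescaled, rewritten version of the screening quantity $D_2'(\ab_i;\wb)$ introduced just above the corollary. Since the convex hull of $\mathds{H}$ is exactly $\{\Mb(\wb):\wb\in\SP_p\}$, any $\Mb$ in this hull can be written $\Mb=\Mb(\wb)$ for some $\wb\in\SP_p$; as each $\Hb_i\succ0$ we also have $\Mb\succ0$, so $\Mb^{-1}$ is well defined. With this $\wb$ fixed, set $\yb=\yb(\wb)=\ml\Mb^{-1}\cb$ and $\vb=\Ab\TT\yb$ as in the definition of $D_2'$. The whole proof then reduces to establishing the identity $D_2'(\ab_i;\wb)=\ml\, B(\Mb,\Hb_i)$; once this is shown the conclusion is immediate, because $\ml>0$ forces $B(\Mb,\Hb_i)>0\iff D_2'(\ab_i;\wb)>0$, and $D_2'$ was already argued to be a valid screening rule (it is the specialization of Theorem~\ref{theo:B0} to the dual point $\yb_2(\wb)$, with the duality-gap bound $\ml\phi_c(\wb)-\SD_\ml(\yb)$, which is admissible since $\SD_\ml(\yb^*)=\ml\phi_c(\wb^*)\le\ml\phi_c(\wb)$).

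The key algebraic step, and the one I expect to be the crux, is to express $\delta$ in terms of $\vb$. Expanding $\cb\TT\Mb^{-1}\Hb_j\Mb^{-1}\cb$ with $\Hb_j=\ab_j\ab_j\TT+\ml\Ib_m$ gives $(\cb\TT\Mb^{-1}\ab_j)^2+\ml\,\cb\TT\Mb^{-2}\cb$, so maximizing over $j$ yields $\max_j \cb\TT\Mb^{-1}\Hb_j\Mb^{-1}\cb=\|\Ab\TT\Mb^{-1}\cb\|_\infty^2+\ml\,\cb\TT\Mb^{-2}\cb$. Recalling $\Phi_c(\Mb)=\cb\TT\Mb^{-1}\cb$ and the definition of $\delta$, this rearranges to $(1+\delta)\,\Phi_c(\Mb)-\ml\,\cb\TT\Mb^{-2}\cb=\|\Ab\TT\Mb^{-1}\cb\|_\infty^2$. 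Since $\vb=\ml\,\Ab\TT\Mb^{-1}\cb$, the right-hand side equals $\|\vb\|_\infty^2/\ml^2$, which identifies the first term of $B$ with $\|\vb\|_\infty/\ml$.

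It then remains to match the two other terms and the gap. From $\yb=\ml\Mb^{-1}\cb$ one computes $\yb\TT(\yb-\cb)=\ml^2\,\cb\TT\Mb^{-2}\cb-\ml\,\Phi_c(\Mb)$, and combining this with $\|\vb\|_\infty^2/\ml=\ml\big[(1+\delta)\Phi_c(\Mb)-\ml\,\cb\TT\Mb^{-2}\cb\big]$ from the previous paragraph collapses the gap bound to $\|\vb\|_\infty^2/\ml+\yb\TT(\yb-\cb)=\ml\,\delta\,\Phi_c(\Mb)$. Hence the square-root term of $D_2'$ becomes $\big(\ml\,\delta\,\Phi_c(\Mb)\big)^{1/2}\sqrt{\|\ab_i\|^2+\ml}=\ml\big[\delta\,\Phi_c(\Mb)(1+\|\ab_i\|^2/\ml)\big]^{1/2}$, matching $\ml$ times the second term of $B$; and $|v_i|=\ml\,|\cb\TT\Mb^{-1}\ab_i|$ matches $\ml$ times the last term. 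Collecting the three pieces gives exactly $D_2'(\ab_i;\wb)=\ml\,B(\Mb,\Hb_i)$, which concludes the argument. The only real care needed is the bookkeeping of the $\ml$ factors and confirming that the gap bound appearing in $D_2'$ is precisely the one for which Theorem~\ref{theo:B0} guarantees validity.
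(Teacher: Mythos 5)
Your proposal is correct and follows exactly the paper's route: the paper's proof consists of writing $\Mb=\Mb(\wb)$ and asserting that a direct calculation gives $D_2'(\ab_i;\wb)=\ml\,B(\Mb,\Hb_i)$, which is precisely the identity you establish (your bookkeeping of the $\ml$ factors checks out, including the collapse of the gap bound to $\ml\,\delta\,\Phi_c(\Mb)$). You have simply made explicit the computation the paper leaves implicit.
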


\begin{proof}
We have $\Mb=\Mb(\wb)$ for some $\wb\in\SP_p$. Direct calculation gives $D_2'(\ab_i; \wb)=\ml\,B(\Mb,\Hb_i)$.
\end{proof}

Note that $\delta\geq 0$ and $\Phi_c(\Mb)\leq (1+\delta) \phi_c(\wb^*)$ with $\wb^*$ a $c$-optimal design, see \citet{PS2021-JSPI}.

\section{A homotopy algorithm for the quadratic Lasso}\label{S:homotopy}

In the standard Lasso problem~\eqref{Palpha} (with non-squared penalty $\alpha \|\xb\|_1$), the dual problem geometrically corresponds to the projection of the point
$\cb/\alpha$
onto the polytope $\mathcal{Q}=\{\zb: \|\Ab\TT \zb\|_\infty \leq 1\}$. This was used by~\citet{osborne2000new} to design a homotopy algorithm which starts from a large value of $\alpha$ (such that
$\cb/\alpha\in\mathcal{Q}$ and the projection is trivial) and stores the required data to maintain the projection of
$\cb/\alpha$
onto $\mathcal{Q}$ as $\alpha$ decreases.
One advantage of this technique is that it can be used to compute the full \emph{regularization path}, i.e., a set of optimal solutions $\xb^*(\alpha)$ to Problem~\eqref{Palpha} for all $\alpha>0$.
This is particularly interesting in applications of the Lasso algorithms for which the value of the regularization parameter $\alpha$ must be tuned using cross-validation, as in the least angle regression (LARS) method
for model selection~\citep{efron2004least}.
In this section, we
use Theorem~\ref{T:lasso-qlasso}
to adapt the homotopy method to the case of a quadratic Lasso penalty, and hence to the case of $c$-optimality.

In $c$-optimal design, $\lambda$ is a fixed constant
related to the variance of the observations and the variance-covariance matrix of the prior (see Section~\ref{sec:prelim}),
and computing the full regularization path $\big(\wb^*(\lambda)\big)_{\lambda\geq 0}$ is not essential. The example in
Section~\ref{S:MNIST}
nevertheless illustrates that the proposed homotopy method can yield a faster design construction than standard first-order algorithms, by several orders of magnitude, although the objective is to solve Problem~\eqref{phi} for a unique $\lambda>0$. On the negative side, one should notice that pathological cases are known for the standard lasso problem~\eqref{Palpha} for which the regularization path can have up to $3^n$ breakpoints~\citep{DBLP:conf/icml/MairalY12}.

\medskip
The main idea of the homotopy algorithm is that the projection $\zb^*(\alpha)$ of $\cb/\alpha$ over
$\mathcal{Q}$ has a closed form solution if we know on which face of $\mathcal{Q}$
the projection lies. By tracking the subset of equalities of the form
$\ab_i\TT\zb^*(\alpha)=\pm 1$ that hold as $\alpha$ is decreased, the algorithm
can detect the next ``breakpoint'' in the regularization path, that is, the largest value of $\alpha$ below the current value such that $\zb^*(\alpha)$ changes of face of $\mathcal{Q}$. Moreover, it can be seen that $\zb^*(\alpha)$ is
linear between two successive breakpoints, and we also obtain a corresponding
primal optimal solution $\xb^*(\alpha)$ that is continuous and piecewise linear with respect to $\alpha$. In the end, the algorithm produces a decreasing sequence
of breakpoints $\|\Ab\TT\cb\|_{\infty}=\alpha_1>\alpha_2>\ldots>\alpha_N=0$
and of corresponding solutions $\xb^*_1,\ldots,\xb^*_N$, such that
$\xb^*(\alpha)=\xb_1^*=\0b$ is optimal for all $\alpha\geq\alpha_1$, and
for all $k$ and  $\alpha$ of the form $\alpha=(1-\theta)\alpha_{k+1} + \theta \alpha_k\in[\alpha_{k+1},\alpha_k]$ for some $\theta\in[0,1]$, an optimal solution
is given by
\begin{align}\label{pwl}
 \xb^*(\alpha) = (1-\theta) \xb_{k+1}^* + \theta \xb_{k}^*.
\end{align}

The situation is in fact
similar for the quadratic lasso. Recall the geometric interpretation
of the dual problem~\eqref{conic_form}:
$\ybb^*=(\yb^{*\top},\ u^*)\TT$ is the projection of the vector
$\cbb=(\cb\TT,\ 0)\TT$ onto the polyhedral cone $\cP_\ml(\Ab)$ given by~\eqref{Plambda}. The projection $\ybb^*$ lies on a face of $\cP_\ml(\Ab)$, which can be identified with the subset of equalities of the form
$\ab_i\TT\yb^* = \pm \sqrt{\ml} u^*$ satisfied at $\ybb^*$.
The sequence of faces on which $\ybb^*$ lies as $\lambda$ decreases is
in one-to-one correspondence with the faces visited by the homotopy
algorithm for the standard lasso problem.
Rather than re-inventing the wheel by following the path
of $\ybb^*$ on the faces of $\cP_\ml(\Ab)$ as $\ml$ decreases, we are going
to use Theorem~\ref{T:lasso-qlasso} to reparametrize the regularization path
between two successive breakpoints as a function of $\lambda$.

\begin{theo}\label{T:adapt_lasso}
 Let $(\alpha_1,\xb_1^*),\ldots,(\alpha_N,\xb_N^*)$ with $\|\Ab\TT\cb\|_\infty=\alpha_1>\ldots>\alpha_N=0$ denote the sequence of breakpoints and
 corresponding solutions to the standard lasso problem returned by
 the homotopy algorithm. Define $\ml_1=\infty$ and $\lambda_k=\frac{\alpha_k}{\|\xb_k^*\|_1}$ for
 all $k\in\{2,\ldots,N\}$. Then, we have
 $\ml_1>\ml_2>\ldots>\ml_N=0$, and for all $\ml\in [\ml_{k+1},\ml_k)$,
 an optimal solution to the quadratic lasso~\eqref{qlasso} is given by
 \begin{equation}\label{xlambda}
  \xb^*=
  \frac{(\alpha_k-\ml\|\xb_k^*\|_1)\xb_{k+1}^* + (\lambda\|\xb^*_{k+1}\|_1-\alpha_{k+1}) \xb_k^*}
  {\alpha_{k} - \alpha_{k+1} + \lambda ( \|\xb^*_{k+1}\|_1-\|\xb^*_{k}\|_1)}.
 \end{equation}
 and a $c-$optimal design is
 \begin{equation}\label{wlambda}
 \wb^*=\widehat{\wb}(\xb^*) =
\frac{(\alpha_k-\ml\|\xb_k^*\|_1)\cdot |\xb_{k+1}^*| + (\lambda\|\xb^*_{k+1}\|_1-\alpha_{k+1})\cdot |\xb_k^*|}
  {\alpha_{k}\|\xb^*_{k+1}\|_1-\alpha_{k+1}\|\xb^*_{k}\|_1},
 \end{equation}
 where $|\xb_k^*|\in\mathds{R}^p$ represents the coordinate-wise absolute values of $\xb_k^*$, that is,  $(|\xb_k^*|)_i=|\xb_{k,i}^*|, \forall i\in[p]$.
 In particular, $\ml\mapsto \wb^*$ is continuous and piecewise linear.
\end{theo}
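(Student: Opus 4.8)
The plan is to \emph{reparametrize} the regularization path of the standard lasso, already produced by the homotopy algorithm, through the correspondence of Theorem~\ref{T:lasso-qlasso}. That theorem guarantees that whenever $\alpha<\|\Ab\TT\cb\|_\infty=\alpha_1$ and $\xb^*(\alpha)$ solves the standard lasso~\eqref{Palpha}, the \emph{same} $\xb^*(\alpha)$ solves the quadratic lasso~\eqref{qlasso} with $\ml=\ml(\alpha)=\alpha/\|\xb^*(\alpha)\|_1$, and $\widehat\wb(\xb^*(\alpha))$ is the corresponding $c$-optimal design. The whole statement will follow once I (a) understand the map $\alpha\mapsto\ml(\alpha)$ on each segment $[\alpha_{k+1},\alpha_k]$, (b) invert it to write $\theta$, and hence $\xb^*$, as a function of $\ml$, and (c) push the result through $\widehat\wb$.

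The key structural input is the sign-constancy of the lasso path on each segment: between two consecutive breakpoints the support and the signs of the active coordinates of $\xb^*(\alpha)$ are fixed. Consequently, for $\alpha=(1-\theta)\alpha_{k+1}+\theta\alpha_k$ the piecewise-linear formula~\eqref{pwl} yields not only $\xb^*(\alpha)=(1-\theta)\xb_{k+1}^*+\theta\xb_k^*$ but also the \emph{exact} componentwise identity $|\xb^*(\alpha)|=(1-\theta)|\xb_{k+1}^*|+\theta|\xb_k^*|$ (a coordinate nonzero on the segment keeps its sign, while one vanishing at an endpoint contributes nothing there). Summing gives $\|\xb^*(\alpha)\|_1=(1-\theta)\|\xb_{k+1}^*\|_1+\theta\|\xb_k^*\|_1$, so that
\[
 \ml(\alpha)=\frac{(1-\theta)\alpha_{k+1}+\theta\alpha_k}{(1-\theta)\|\xb_{k+1}^*\|_1+\theta\|\xb_k^*\|_1},
\]
whose values at $\theta=0$ and $\theta=1$ are exactly $\ml_{k+1}$ and $\ml_k$. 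For the ordering $\ml_1>\ml_2>\dots>\ml_N=0$ I would invoke the standard monotonicity of the lasso, namely that $\alpha\mapsto\|\xb^*(\alpha)\|_1$ is nonincreasing (it is the value of the norm-constrained reformulation of~\eqref{Palpha}); since $\alpha$ strictly increases along $\alpha_N<\dots<\alpha_1$, the quotient $\ml(\alpha)=\alpha/\|\xb^*(\alpha)\|_1$ is strictly increasing on $(0,\alpha_1)$, whence $\ml_k>\ml_{k+1}$, and $\ml_N=\alpha_N/\|\xb_N^*\|_1=0$ since $\alpha_N=0$.

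It then remains to solve $\ml=\ml(\alpha)$ for $\theta$ and substitute. Clearing denominators gives a linear equation in $\theta$ with solution $1-\theta=(\alpha_k-\ml\|\xb_k^*\|_1)/\Delta$ and $\theta=(\ml\|\xb_{k+1}^*\|_1-\alpha_{k+1})/\Delta$, where $\Delta=\alpha_k-\alpha_{k+1}+\ml(\|\xb_{k+1}^*\|_1-\|\xb_k^*\|_1)>0$; inserting these into~\eqref{pwl} reproduces~\eqref{xlambda} verbatim. Applying $\widehat\wb$ from~\eqref{hatw} together with the componentwise absolute-value identity, the factor $\Delta$ cancels: the numerator becomes $(\alpha_k-\ml\|\xb_k^*\|_1)|\xb_{k+1}^*|+(\ml\|\xb_{k+1}^*\|_1-\alpha_{k+1})|\xb_k^*|$, and the $\ell_1$-normalization simplifies, because the $\ml$-terms cancel, to the constant $\alpha_k\|\xb_{k+1}^*\|_1-\alpha_{k+1}\|\xb_k^*\|_1=\|\xb_k^*\|_1\|\xb_{k+1}^*\|_1(\ml_k-\ml_{k+1})$, yielding~\eqref{wlambda}. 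On each interval $[\ml_{k+1},\ml_k)$ this last expression is affine in $\ml$ (numerator affine, denominator constant), so $\ml\mapsto\wb^*$ is piecewise linear; continuity at a breakpoint $\ml_k$ holds because both adjacent pieces reduce there to $\widehat\wb(\xb_k^*)$, the segment-$k$ piece as $\theta\to1$ and the segment-$(k-1)$ piece at $\theta=0$.

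The only genuinely delicate point is the exactness of the identities $|\xb^*(\alpha)|=(1-\theta)|\xb_{k+1}^*|+\theta|\xb_k^*|$ and $\|\xb^*(\alpha)\|_1=(1-\theta)\|\xb_{k+1}^*\|_1+\theta\|\xb_k^*\|_1$; everything else is routine algebra once these are available. These identities are \emph{not} automatic for an arbitrary convex combination—they hold precisely because of the sign-constancy of the homotopy path on each segment—so I would state and use that property explicitly rather than silently treating the $\ell_1$-norm as linear.
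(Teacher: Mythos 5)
Your proposal is correct and follows essentially the same route as the paper's proof: reparametrize each segment of the standard-lasso path via Theorem~\ref{T:lasso-qlasso}, use sign-constancy between breakpoints to make the $\ell_1$-norm affine in $\theta$, solve $\ml=\ml(\alpha)$ for $\theta$, and substitute into \eqref{pwl} and \eqref{hatw}. The only cosmetic differences are that you justify $\ml_k>\ml_{k+1}$ by monotonicity of $\alpha\mapsto\|\xb^*(\alpha)\|_1$ where the paper invokes convexity of the Pareto front (the same underlying fact), and you spell out the normalization cancellation for \eqref{wlambda} and the continuity argument in more detail than the paper does.
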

\begin{proof}
We first observe that
 $0=\|\xb_1^*\|\leq \|\xb_2^*\|\leq \ldots\leq\|\xb_N^*\|$, which
 follows from the convexity of the Pareto-front for the minimization
 of the two objectives $\|\Ab\xb-\cb\|^2$ and $\|\xb\|_1$;
 see~\citet[\S 4.7.5]{BoydV2004}.
  Thus, we have $\lambda_1>\ldots>\lambda_N$, and for all $\lambda>0$ there exists
  $k\in\{1,\ldots,N-1\}$ such
 that $\lambda_{k+1}\leq \lambda< \ml_k$. By Theorem~\ref{T:lasso-qlasso}
 and~\eqref{pwl}, for all $\theta\in[0,1]$ the point $\xb^*\big((1-\theta)\alpha_{k+1}^*+\theta \alpha_k^*\big)=(1-\theta)\xb_{k+1}^* + \theta \xb_k^*$ solves the
 quadratic lasso with regularizer
 \[\ml=\frac{(1-\theta)\alpha_{k+1}^*+\theta \alpha_k^*}{\|(1-\theta)\xb_{k+1}^* + \theta \xb_k^*\|_1}
 =
 \frac{(1-\theta)\alpha_{k+1}^*+\theta \alpha_k^*}{(1-\theta)\|\xb_{k+1}^*\|_1 + \theta \|\xb_k^*\|_1},
  \]
 where the second equality comes from the fact that no coordinate of $\xb^*(\alpha)$ changes sign between two consecutive breakpoints, that is, $x_{k,i}^*\cdot x_{k+1,i}^* \geq 0$ holds for all $i\in[p]$.
 Solving for $\theta$ yields
 $\theta=\frac{\lambda\|\xb^*_{k+1}\|_1-\alpha_{k+1}}{
 \alpha_{k} - \alpha_{k+1} + \lambda ( \|\xb^*_{k+1}\|_1-\|\xb^*_{k}\|_1)}.$
 It is easy to verify that $\theta$ lies in $[0,1]$, as desired, by using
 the inequalities
 $\alpha_{k+1} \leq \ml \|\xb_{k+1}^*\|_1$,
 $\ml \|\xb_{k}^*\|_1 \leq \alpha_k$,
 $\alpha_k>\alpha_{k+1}$ and $\|\xb_{k+1}^*\|_1\geq \|\xb_k^*\|_1$. Substituting
 the value of $\theta$ in
 $\xb^*=(1-\theta)\xb_{k+1}^* + \theta \xb_k^*$ yields the formula~\eqref{xlambda}.
 Finally, we replace $\xb^*$ by its value in $\wb^*=\widehat{\wb}(\xb^*)$ to obtain~\eqref{wlambda}.
\end{proof}

The steps of the homotopy algorithm for the standard lasso problem are summarized
in lines~1--\ref{l12} of Algorithm~\ref{alg:Halpha}, following the description
of the algorithm given by~\citet{DBLP:conf/icml/MairalY12}.
Here, the notation $\Ab_J=[\ab_j]_{j\in J}\in\mathds{R}^{m \times |J|}$ stands for the submatrix of $\Ab$ formed by the columns indexed in $J$,
and $\bar{J}:=[p]\setminus J$ denotes the complement of $J$ in $[p]$.
Note that practical enhancements can easily be added. In particular,
one can replace lines~\ref{l8}--\ref{l11} by a closed-form expression to find the next break point $\alpha$. One can also
store the inverse matrix $(\Ab_{J}\TT\Ab_{J})^{-1}$
to accelerate the computation of
the vectors $(\Ab_J\TT\Ab_J)^{-1}\Ab_J\TT\cb$ and $(\Ab_J\TT\Ab_J)^{-1}\mveb$,
and use rank-one update formulas at the end of each iteration;
see~\citet{osborne2000new} for more details.

Since by Theorem~\ref{T:adapt_lasso} we have
$\infty=\lambda_1>\lambda_2>\ldots>\lambda_N=0$, the algorithm
exits the while loop with $\ml \in [\ml_{k+1},\ml_k)$, so the
vector $\xb^*$ (or $\wb^*$) returned by Algorithm~\ref{alg:Halpha}
is an optimal solution to the quadratic lasso
(respectively, a $c$-optimal design).

\begin{algorithm}[t]
\caption{Homotopy algorithm for the quadratic lasso/$c$-optimal design.
}\label{alg:Halpha}
\begin{algorithmic}[1]
\Procedure{Homotopy}{$\Ab=[\ab_1,\ldots,\ab_n]\in\mathds{R}^{m\times n},\cb\in\mathds{R}^m,\lambda>0$}
\State {\textbf{Initialization:}} $\alpha_1 \gets \|\Ab\TT\cb\|_\infty$;\quad $\xb_1^*\gets \0b$;\quad $\ml_1\gets \infty$
\State \phantom{\textbf{Initialization:}} $j_0\gets \arg\max_{j\in[p]} |\ab_j\TT\cb|$;\quad $J\gets\{j_0\}$;\quad $k\gets 0$
\While{$\ml<\ml_{k+1}$}
    \State $k\gets k+1$
    \State $\mveb\gets \operatorname{sign}(\Ab\TT(\cb-\Ab\xb_k^*))\in\{-1,0,1\}^p$ %$$ $
    \State For all $\alpha\in\mathds{R}$ define $\xib(\alpha)\in\mathds{R}^p$ by $\left\{\begin{array}{l}
                                                 \xib_J(\alpha) = (\Ab_J\TT\Ab_J)^{-1}(\Ab_J\TT\cb-\alpha\cdot \mveb_J);\\
                                                 \xib_{\bar J}(\alpha) = \0b.
                                                \end{array}
                                                \right.$
    \State \label{l8} Find the largest $\alpha\in[0,\alpha_k)$ such that either:
    \State \quad $\bullet$
    $|\ab_j\TT(\cb-\Ab_J\xib_J(\alpha))|=\alpha$ for some $j\in \bar{J}$; In that case
    set $J\gets J\cup\{j\}$
    \State   \quad $\bullet$
    $\xib^*_j(\alpha)=0$ for some $j\in J$; In that case, set $J\gets J\setminus\{j\}$
    \smallskip
    \State \label{l11} If no such $\alpha$ exists, set $\alpha\gets 0$
    \State $\alpha_{k+1}\gets \alpha$;\quad $\xb_{k+1}^*\gets \xib^*(\alpha_{k+1})$;\quad $\lambda_{k+1}\gets \frac{\alpha_{k+1}}{\|\xb_{k+1}^*\|_1}$
\EndWhile \label{l12}
\smallskip
\smallskip
 \State \Return $\xb^*$ given by~\eqref{xlambda}  and/or the $c$-optimal design $\wb^*$ given by~\eqref{wlambda}.
\EndProcedure
\end{algorithmic}
\end{algorithm}

\begin{remark} \label{remark:degeneracy}
Degeneracy can occur if the polytope $\mathcal{Q}$ has a particular geometrical structure, which can prevent Algorithm~\ref{alg:Halpha} from terminating
(this might happen if the largest $\alpha$ we are looking for at line~\ref{l8} is realized for several distinct indices $j\in J$ or $j\in \bar{J}$).
To avoid this, a typical approach consists in adding random noise to the data, so that the problem becomes non-degenerate with probability $1$. Alternatively, we can choose the index entering or leaving $J$ arbitrarily among the maximizers,
and implement an anti-cycling rule to ensure that the same face of $\mathcal{Q}$ is not visited twice; see, e.g., \citet{gill1989practical}. One should note that, even for non-degenerate problems, numerical issues can occur when the decrease of $\alpha$ at a given iteration is less than machine precision.
\fin
\end{remark}

\begin{remark}
The generalization of this approach to the quadratic group lasso (and thus to $L$-optimality) is not straightforward,
as the polyhedral structure of the dual feasible region is lost when $r>1$;
see~\eqref{dualL}.
Nevertheless we point out that there have been attempts
to design homotopy algorithms for more general
penalty functions~\citep{zhou2014generic}, such as group-lasso~\citep{yau2017lars},
relying on the fact that the optimal solution $\Xb^*(\alpha)$ solves
an ordinary differential equation in the interior of the segment $(\alpha_1,\alpha_2)$
between two breakpoints of the regularization path.
\end{remark}

 %_____________________________________________________________________________________________________________
\section{Examples}\label{S:examples}

The example of Section~\ref{S:MNIST} is directly taken from the literature on Lasso regression. A second example, concerning $L$-optimal design, is presented in
Section~\ref{S:IMSE}.

\subsection{Optimal design for image classification}\label{S:MNIST}

We use the training set of the well known MNIST database~\citep{lecun-mnisthandwrittendigit-2010},
which contains 60\,000 28$\times$28 images
in grey levels of handwritten digits, each associated with a label for the digit in $\{0,\ldots,9\}$ it represents.
For our purpose, we build a smaller training set of $600$ images of each digit, which we arrange in a matrix $\Ab \in \mathds{R}^{784 \times 6\,000}$. The $i$th column $\ab_i$ is a vector
representing the levels of grey of the $m=28^2=784$ pixels of one image, and has been normalized so that $\|\ab_i\|=1$. As there are 600 images for each digit in $\{0,\ldots,9\}$, $p=6\,000$. We choose an arbitrary image that does not belong to the training set, and transform it as above into a vector $\cb \in \mathds{R}^{784}$ of unit norm.
The $c$-optimal design problem
thus attempts to find which images should be labeled to best predict the label of the image represented by the vector $\cb$, and one can legitimately assume that these images will represent the same digit.

This problem illustrates well the equivalence between a design problem (where we try to select samples that ``span'' the subspace $\cb\cdot \mathds{R}$ well) and a Lasso problem (where the data matrix has been transposed, so
samples become features and vice-versa, and we want to select features that ``explain'' the target vector $\cb$ well).
For a given value of $\lambda$ penalizing non-sparse solutions, the Lasso problem~\eqref{qlasso} tries to express the vector $\cb$ as a linear combination of a few images $\ab_i$. If the label of $\cb$ is unknown, this can be interpreted as a task of supervised learning: for well chosen values of the penalty parameter $\lambda$, we expect that the vast majority of $\ab_i$'s supporting the solution $\xb^*$ will have the same label as $\cb$, which can be used to classify the unknown image. Alternatively, if no labels were provided, we could use this approach repeatedly in a leave-one-out manner (i.e., select one sample to be the vector $\cb$  and use the other $p-1$ samples to build the matrix $\Ab$) to perform an unsupervised learning task and construct clusters of images that are likely to share the same label.

The 10 support points of the optimal solution for $\lambda=0.4$ are displayed in Figure~\ref{fig:homotopy_sixes} (left), together with the image corresponding to the vector~$\cb$: as expected, all images represent the same digit (here, a six).
Figure~\ref{fig:lasso_D1D2} shows the evolution of
the duality gap $\SL_\lambda(\xb)-\SD_\lambda(\yb_1(\xb))$
for $\lambda=0.4$,
when the Coordinate Descent (CD) algorithm\footnote{\citet{SagnolP2019} actually present a
\emph{block} CD
algorithm for Bayes $L$-optimality. For $c$-optimal designs, the blocks are of size $r=1$ so this is indeed a CD algorithm.}
described in~\citep{SagnolP2019} is used
(left), as well as the percentage of support points $\rho(k)$ that have been
eliminated after $k$ iterations (right). As $m$ is large
and $r=1$ (this is a $c$-optimal design problem), one iteration of the CD algorithm is much faster than one iteration of the multiplicative algorithm~\eqref{multiplicative}:
the former has complexity $\mathcal{O}(m\cdot r)$ to update a
single row of the matrix $\Xb$; when $r=1$, this means that a single coordinate of $\xb$ is updated in $\mathcal{O}(m)$; an iteration cycling over $p_i=(1-\rho(i))\cdot p$ remaining coordinates has thus complexity $\mathcal{O}(p_i\cdot m)$. In contrast, the multiplicative algorithm is dominated by the computation of the information matrix~$\Mb(\wb)$, which takes $\mathcal{O}(p_i\cdot m^2)$ when $p_i$ coordinates remain.
See Table~\ref{tab:runtime} and its presentation below for an empirical comparison of the running times\footnote{Calculations are in python on a PC at 2.7 GHz and 16 GB RAM.} of the different algorithms.

We only compare the bounds $D_1$ and $D_2$, as for large $m$
the overhead of using $B_1$, $B_2$ or $B_3$ is too high and $B_4$ cannot be applied for $r<m$.
Several observations can be made. First, when screening is performed every $5$ or $25$ iterations, on the right panel $\rho(k)$ is larger for $D_2$ than for $D_1$, with the consequence that $D_2$ eliminates points slightly earlier than $D_1$ (we observe no difference when screening is performed every 100 iterations, the two curves superimpose perfectly).
Second, the curves on the left panel show that the computational cost of the screening test applied to one $\ab_i$ is generally higher for $D_2$ than for $D_1$.
Indeed, the rate of convergence is related to the slope of the curve showing the evolution of the duality gap. Ignore the initial part of the plot, where the number of remaining points to be tested varies much according to the method and periodicity. Once enough points have been eliminated, the slope for $D_2$ decreases when the period between two successive screenings decreases. On the opposite, the slope stays roughly constant for $D_1$: screening with $D_1$ every $5$, $25$ or $100$ iterations yields the same convergence rate; the more frequent the screening, the earlier the elimination, and thus the induced acceleration.

\begin{figure}[t]
 \includegraphics[width=0.52\textwidth]{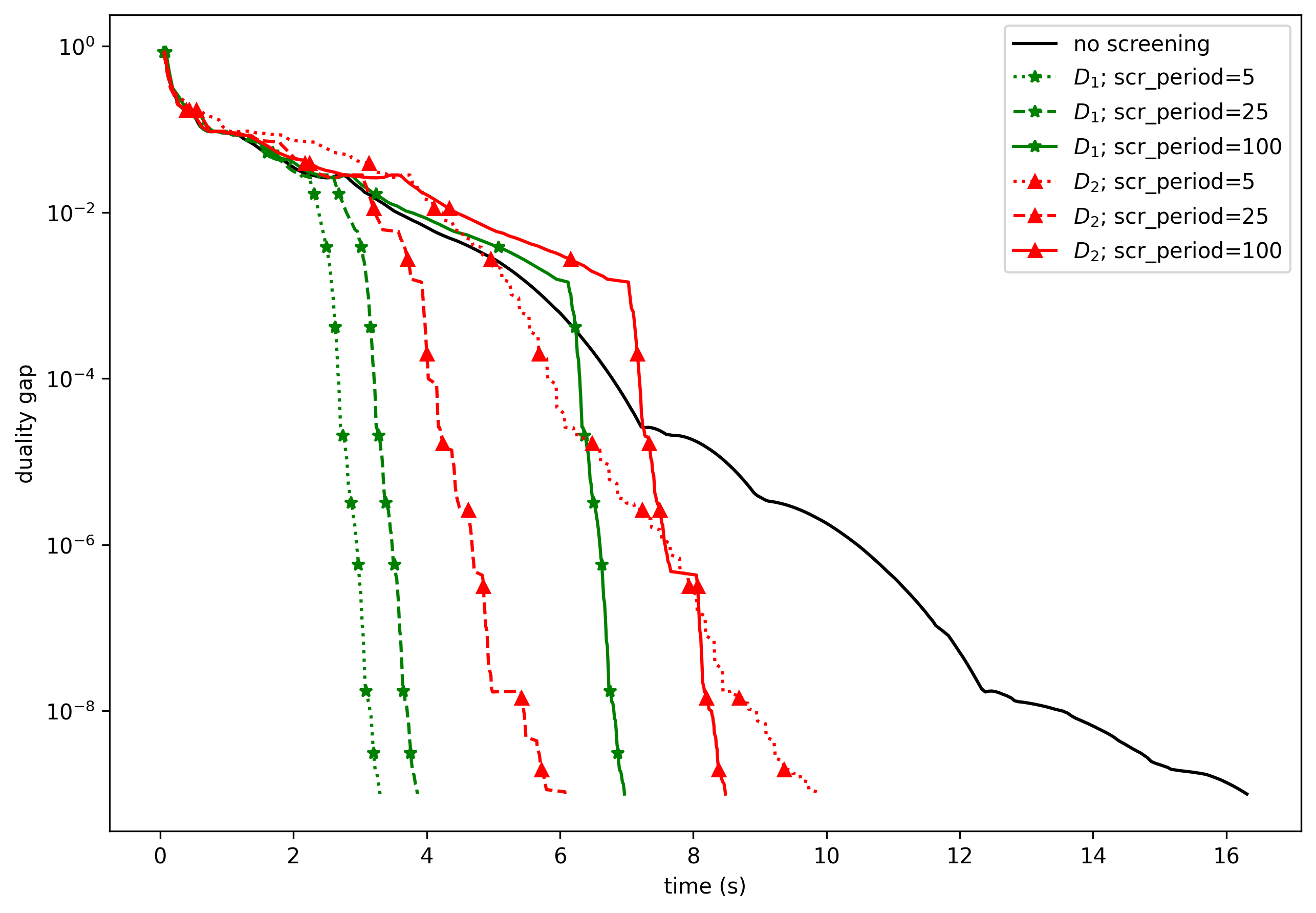}
 \includegraphics[width=0.48\textwidth]
 {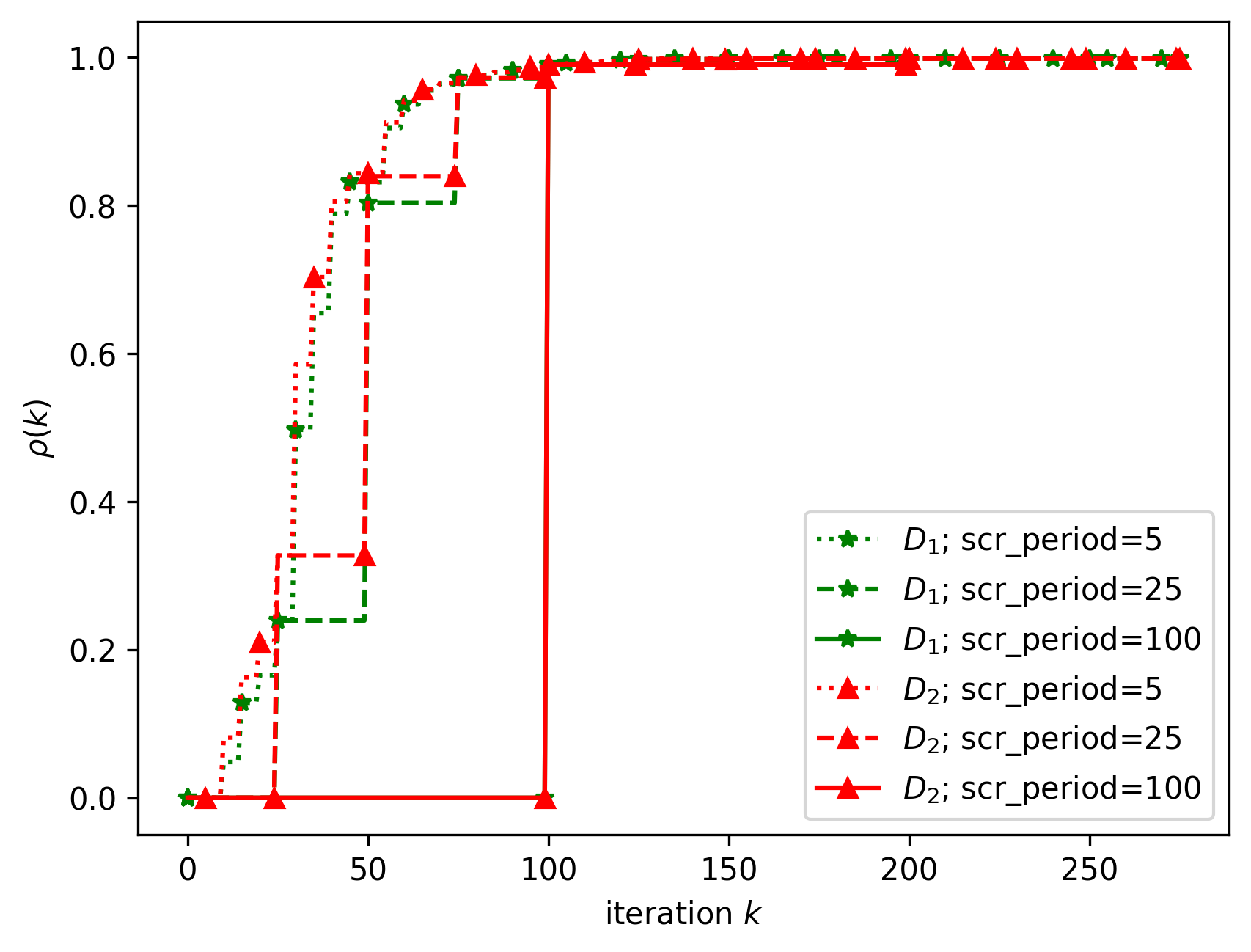}

 \caption{\small Evolution of the duality gap $\SL_\lambda(\xb)-\SD_\lambda(\yb_1(\xb))$ (left) and of the proportion $\rho(k)$ of points eliminated after $k$ iterations (right) for the screening bounds $D_1$ and $D_2$ applied with different periodicities.
 \label{fig:lasso_D1D2}
 }
 \end{figure}

The effect of the regularization parameter $\lambda$
is shown on Figure~\ref{fig:homotopy_sixes} (right),
together with the convergence of the homotopy algorithm.
The problem is badly conditioned for small $\lambda$, which slows down convergence and delays elimination of non-supporting points (we use $D_1$ every 10 iterations).
Note, however, that the problem of $c-$optimal design obtained in the limit $\lambda\to 0$ can be solved efficiently by linear programming~\citep{harman2008computing}.
Interestingly, we see that the homotopy algorithm usually requires less time to get an \emph{exact} solution than the  CD algorithm needs to get a \emph{reasonable approximate} solution, for most values of $\lambda$. The speed-up can attain several orders of magnitude: for $\lambda=10^{-2}$, the homotopy algorithm computes an \emph{exact} solution to the problem in less than $4$~s, while the duality gap for the CD algorithm with $D_1$ is still $10^{-3}$ after $67$~s; for $\lambda=10^{-4}$, the homotopy algorithm finds an exact solution within $45$~s, while the CD algorithm with $D_1$ still has a duality gap or more than $10^{-4}$ after $2\,000$~s.

For the above experiment, the CD algorithm was used as a
reference because it performed better than
first-order algorithms. Table~\ref{tab:runtime} compares the running time of the homotopy algorithm and CD with several first order algorithms (Multiplicative weight updates (MWU), FISTA, Frank-Wolfe (FW)), as well the time required by
a commercial solver (Gurobi~\citep{gurobi} used with the PICOS interface~\citep{PICOS} in Python) to solve two different Second-Order Cone Programming (SOCP) formulations of the problem. The first formulation (SOCP-1) comes
from~\citep{Sagnol2011} and read $\min\{\cb\TT\xb: \|\xb\|^2\leq v,\ (\ab_i\TT\xb)^2\leq u_i,\ u_i+\lambda v\leq 1,\ \forall i\in[p]\}$. The second one
(SOCP-2) is a direct reformulation of~\eqref{qlasso}: $\min\{u+\lambda t: \|\Ab\xb-\cb\|^2\leq u, \|\xb\|_1 \leq v,\ v^2\leq t\}$, and performs
significantly better.
Note that the reported times are for the computation of an exact solution (up to machine precision) of the problem by using the homotopy method, while we have
used a tolerance of $10^{-4}$ for all other algorithms.
We used the screening rule $D_1$ with $\tau=10$ for CD, MWU, FISTA and FW. Observe that although the CD iterations are faster than the MWU iterations, this algorithm has a slow convergence for small values of $\lambda$. Thus, more iterations are required and MWU is faster than CD for $\lambda=10^{-4}$. Also,
due to the very slow convergence of FW and FISTA for small values of $\lambda$,
we have not been able to reach the desired tolerance within 1 hour in some
of the experiments. In contrast, the SOCP solver appears to be insensitive to
the value of $\lambda$.

\begin{table}[t]
\caption{\small Time (in s) to reach an exact
solution with the Homotopy algorithm and a
near-optimal solution (tolerance=$10^{-4}$) with different algorithms, for several values of $\lambda$. \vspace{-1.5em}
\label{tab:runtime}}
\begin{center}
 \begin{tabular}{crrrrr}
 \hline
  $\lambda$ & $10^0$ & $10^{-1}$ & $10^{-2}$ & $10^{-3}$ & $10^{-4}$\\\hline
  Homotopy  & 0.22   &   0.73    &   3.16    &   13.05   &   44.66\\
  CD        & 1.68   &   7.15    &   29.00   &   201.23  &   2012.15\\
  MWU       & 53.92  &  67.39    &  204.67   &   432.39  &   619.09\\
  FISTA     & 141.86 &  674.51   &  2011.01  &    --~~   &   --~~ \\
  FW        & 14.86  &   --~~    &  --~~     &    --~~   &   --~~ \\
  SOCP-1    & 62.57  & 66.41     & 72.59     & 72.16     & 64.86\\
  SOCP-2    & 13.54  & 13.71     & 13.53     & 13.81     & 13.91\\
  \hline
 \end{tabular}
 \end{center}
\end{table}

\begin{figure}[t]
\begin{minipage}{0.45\textwidth}
\vspace{-6mm}
 \includegraphics[width=\textwidth]{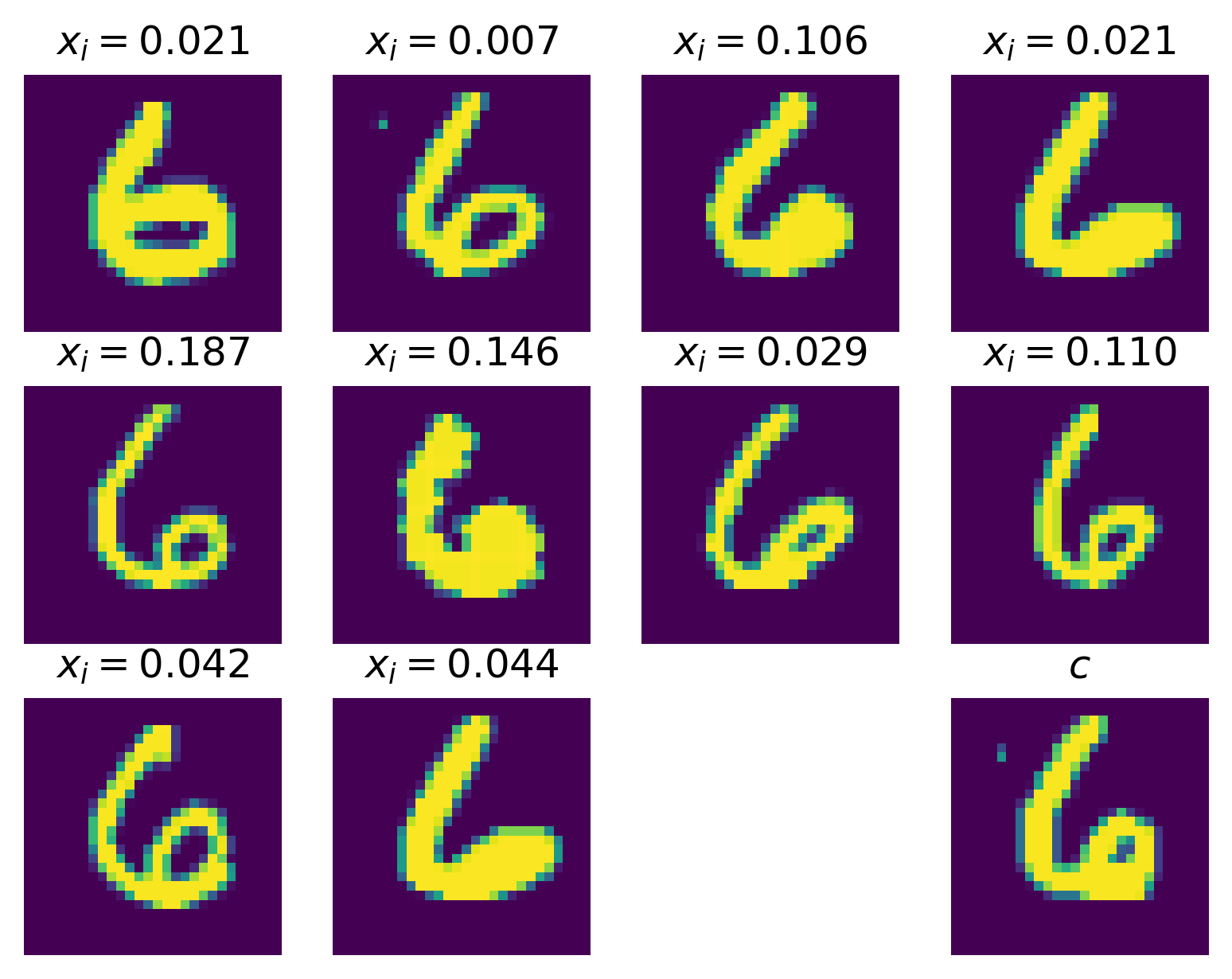}
\end{minipage}
\begin{minipage}{0.55\textwidth}
\includegraphics[width=\textwidth]{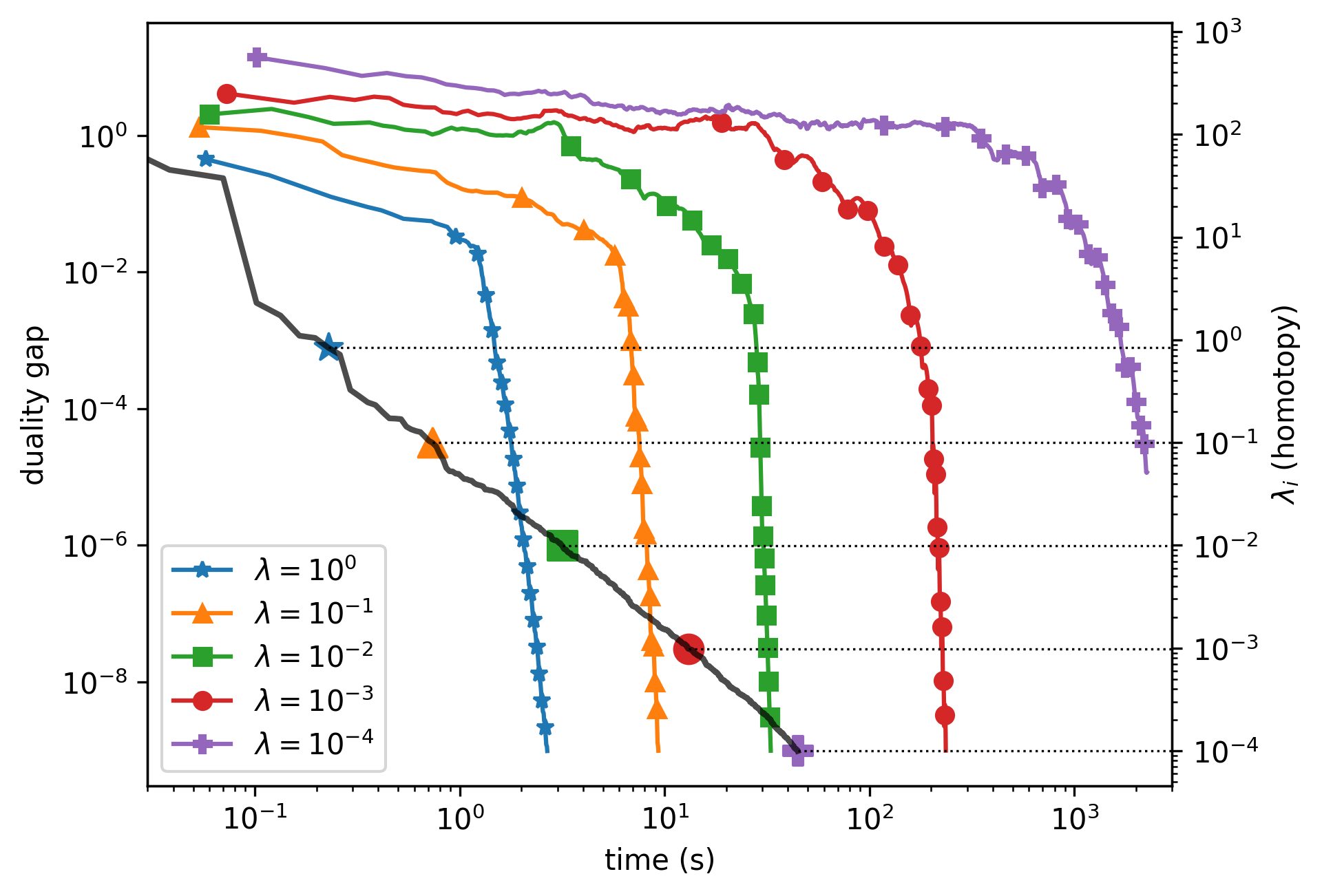}
\end{minipage} \vspace{-4mm}
\caption{\small
Left: Image corresponding to $\cb$ and support points of the optimal solution for $\lambda=0.4$. Right:
The colored curves indicate the convergence of the CD algorithm for different values of the regularization parameter $\lambda$ (duality gap shown on leftmost $y$-axis),
when $D_1$ is used every 10 iterations.
The black curve shows the progression of $\lambda_i$ (rightmost y-axis) along time for the homotopy algorithm. The markers on this curve indicate the points in time where the homotopy
algorithm has reached an \emph{exact} solution for the
corresponding value of $\lambda$.
\label{fig:homotopy_sixes} \vspace{-1em}
}
\end{figure}

\bigskip
Finally, in Figure~\ref{MNIST-L} we compare the effect of the screening rule $D_1$ on three algorithms for an $L$-optimal design problem with $r=50$. The training set contains $p=1\,200$ images (there are $120$ samples for each digit), the images have been down-sampled to $m=20^2=400$ pixels, and the matrix $\Kb\in\mathds{R}^{m\times r}$ is formed by a collection of $5$ images of each digit, randomly chosen outside the training set. The top-left panel of the figure shows the support of the corresponding $L$-optimal design (for $\lambda=0.4$): it contains at least one representative of each digit, and thus fulfills the objective of selecting a subset of samples that spans the subspace of all handwritten digits well. The other three panels show that application of the screening rule speeds up computations for the three algorithms considered, with, however, different gains. In this example, the multiplicative weight update algorithm benefits the most from the screening rule, but block CD yields the fastest convergence, with an acceleration factor of about two when $D_1$ is used every 10 iterations.

\begin{figure}[ht!]
\begin{minipage}[b]{.5\textwidth}
\includegraphics[width=\textwidth, height=0.85\textwidth]{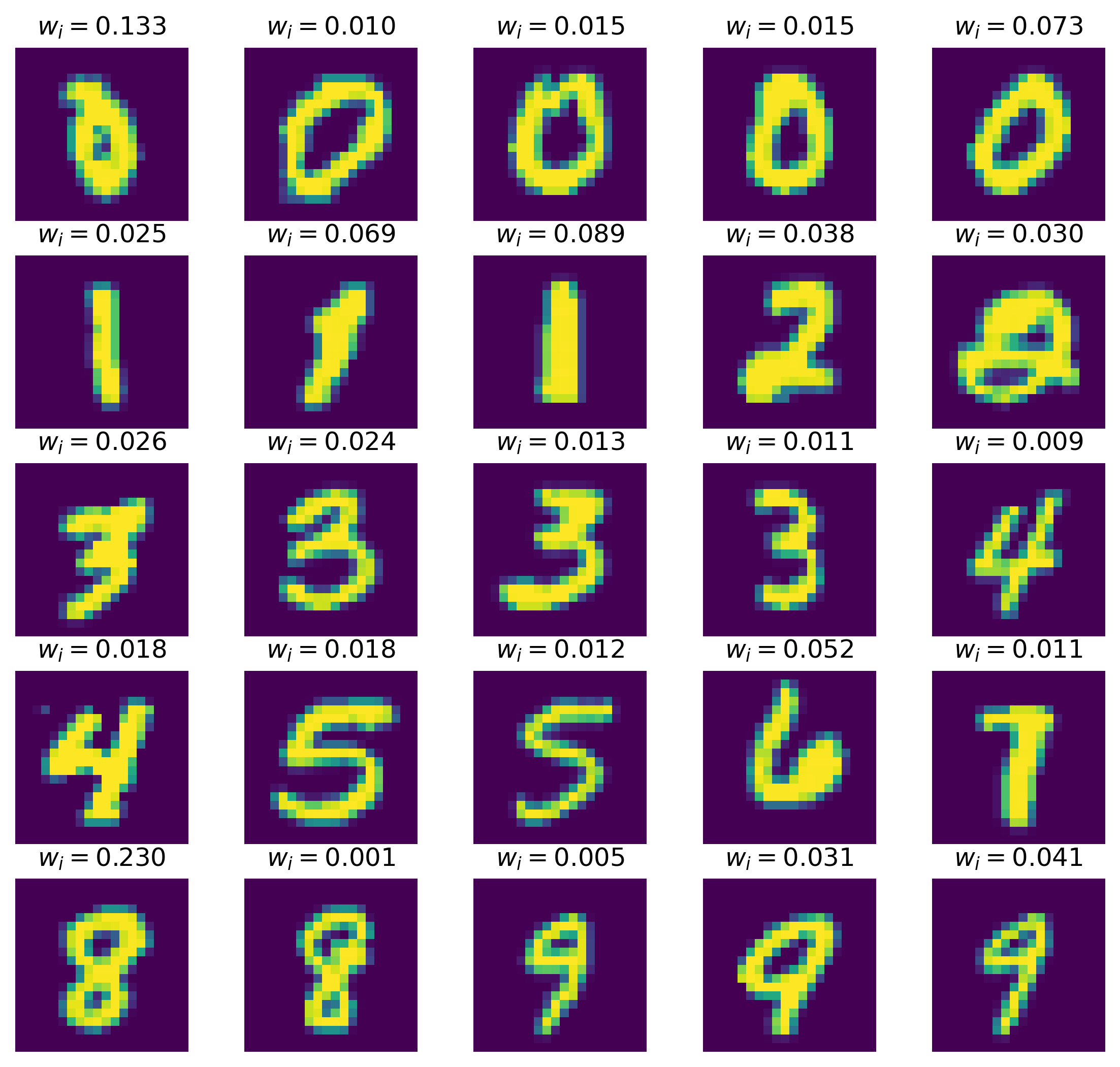}

\vspace{-1em}
\begin{center} (a) \end{center}

\includegraphics[width=\textwidth]{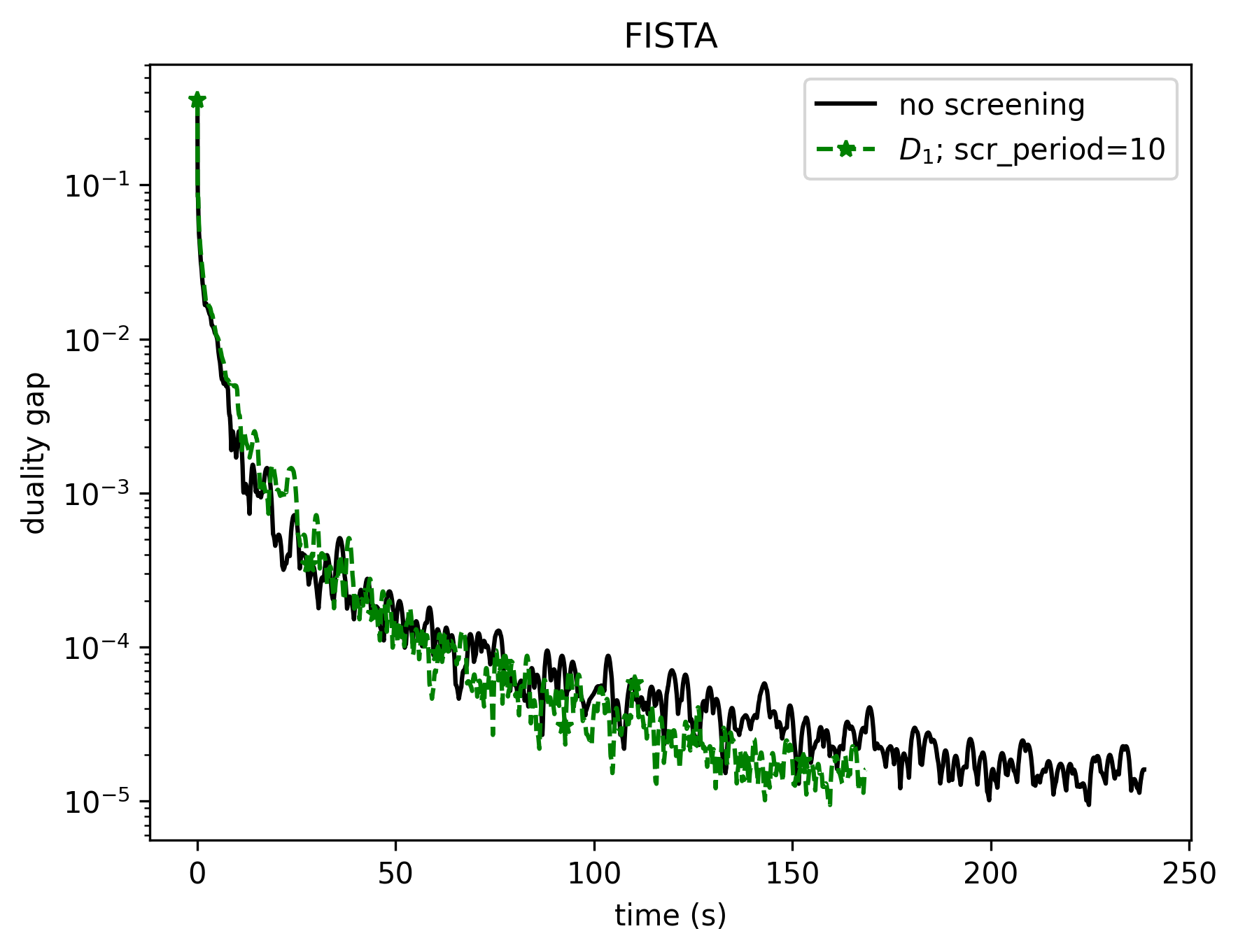}

\vspace{-1em}
\begin{center} (c) \end{center}
\end{minipage}
\begin{minipage}[b]{.5\textwidth}
\vspace{1em}
\includegraphics[width=\textwidth]{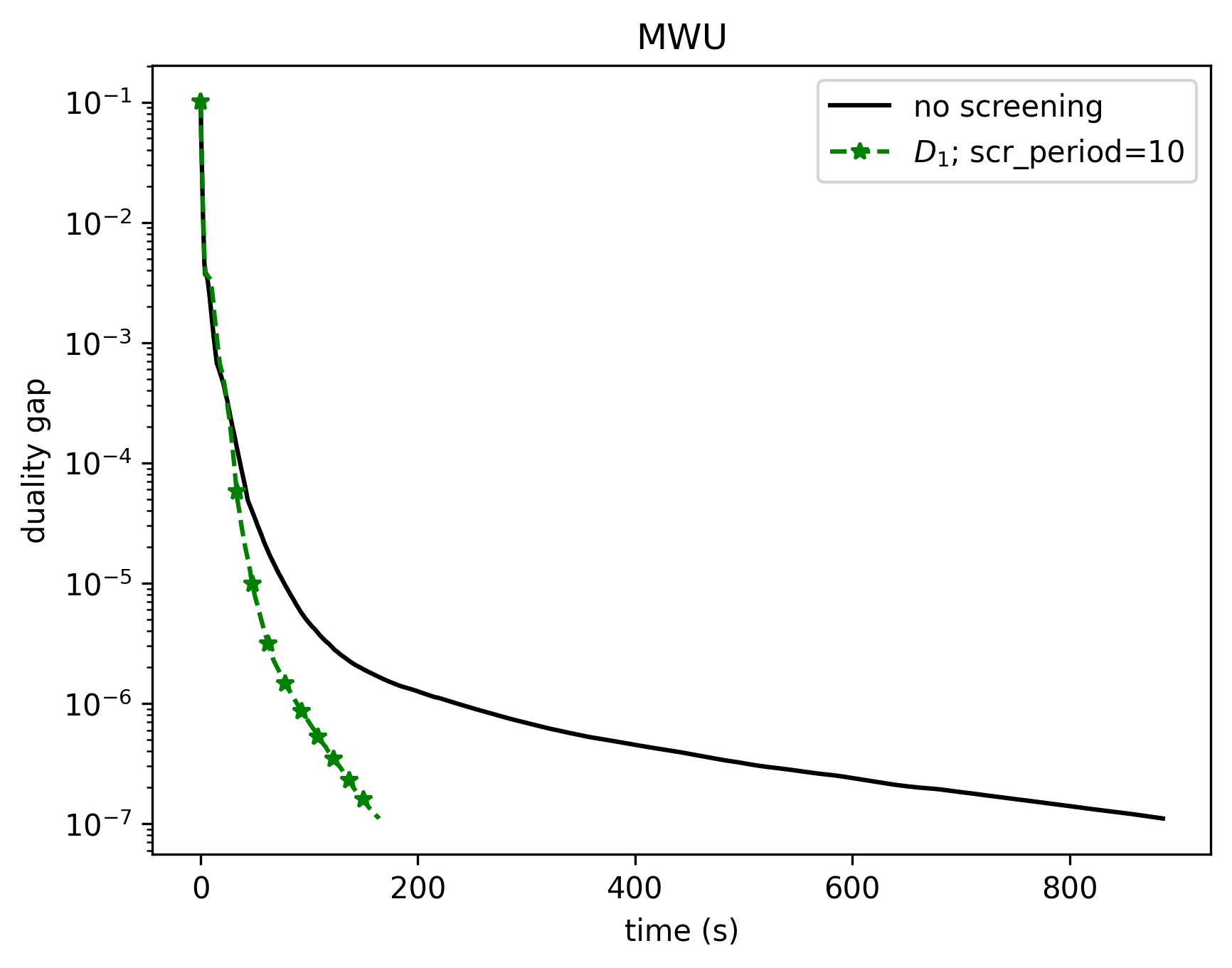}

\vspace{-1em}
\begin{center} (b) \end{center}
\includegraphics[width=\textwidth]{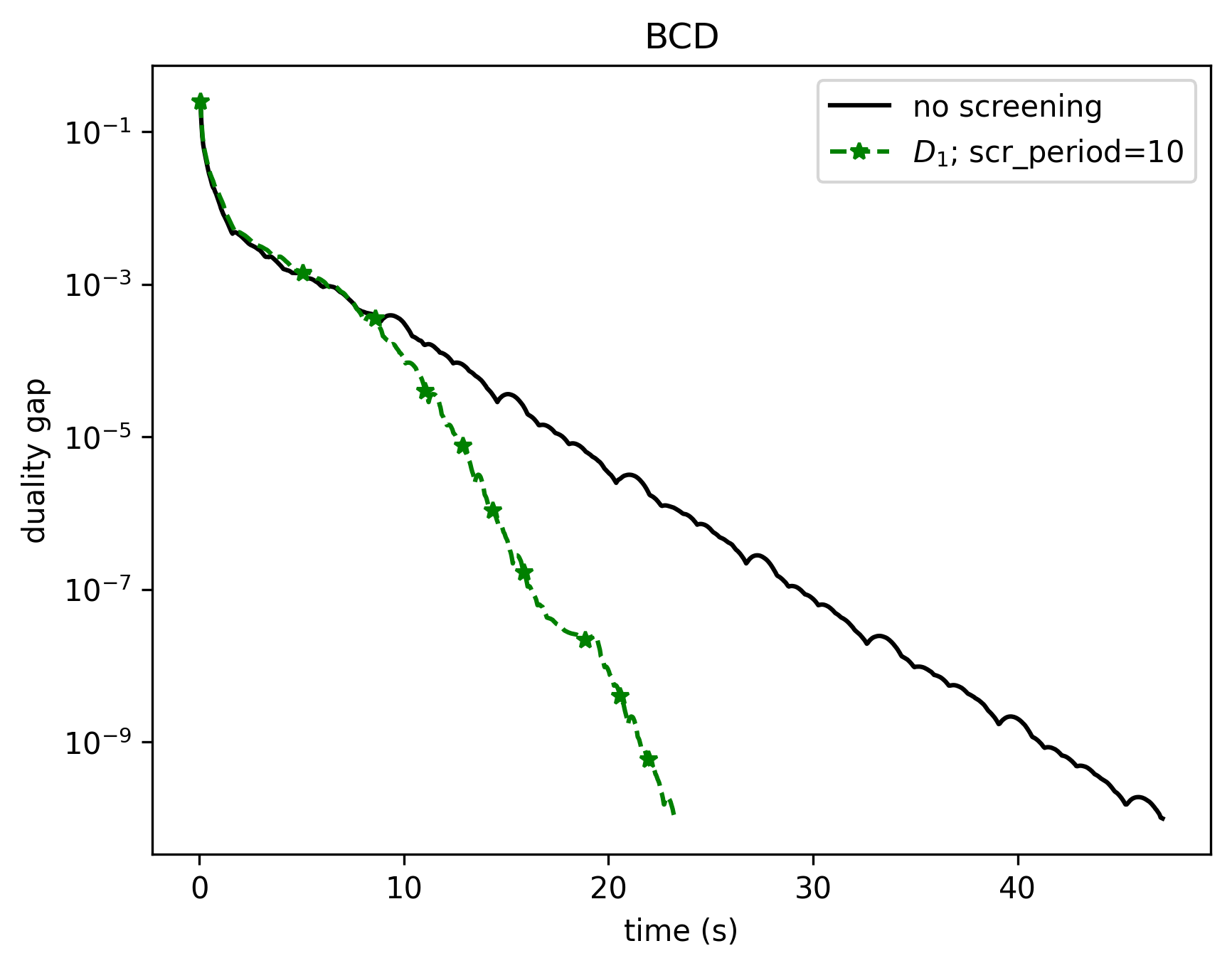}

\vspace{-1em}
\begin{center} (d) \end{center}
\end{minipage}

\caption{\small (a) $L$-optimal design for $\lambda=0.4$ over the MNIST database, with a training set of $p=1200$ images (120 per digit) of size $m=400=20\times 20$,
with $\Kb$ containing $r=50$ images, i.e., $5$ images of each digit (top left). (b-d) Convergence of three algorithms,
with and without the screening test $D_1$ run every 10 iterations: Multiplicative Weight Update (b); FISTA (c);
Block CD (d).  \label{MNIST-L}}
\end{figure}

%\subsection{An example of \texorpdfstring{$L$}{L}-optimal design: IMSE optimal design in random-field models}\label{S:IMSE}
\subsection{An example of $L$-optimal design: IMSE optimal design in random-field models}\label{S:IMSE}

{\sloppy
We consider the framework of \citep{GP-CSDA2016} for the minimization of the Integrated Mean-Squared Error (IMSE) in a random-field interpolation model $Z_\ssb$, indexed by a space variable $\ssb\in\SSS\subset\mathds{R}^d$. We assume that $Z_\ssb$ is Gaussian, centered, with covariance $\Ex\{Z_{\ssb_1}Z_{\ssb_2}\}=\mg(\ssb_1,\ssb_2)$. After observation at $n$ design points $\Sb_n=\{\ssb_1,\ldots,\ssb_n\}$ in $\SSS$, the IMSE for a given measure of interest $\mu$ on $\SSS$ is
\[\IMSE(\Sb_n)=\int_\SSS \left[\mg(\ssb,\ssb)-\mgb\TT_n(\ssb)\Gammab_n^{-1}\mgb_n(\ssb)\right]\, \dd\mu(\ssb),\]
with $\mgb_n(\ssb)=(\mg(\ssb_1,\ssb),\ldots,\mg(\ssb_n,\ssb))\TT$ and $(\Gammab_n)_{i,j}=\mg(\ssb_i,\ssb_j)$, $i,j=1,\ldots,n$.
}

The truncation of a particular Karhunen-Lo\`eve expansion of $Z_\ssb$ yields a Bayesian linear model, with the eigenfunctions as regression functions, and the construction of an IMSE-optimal design can be cast as an $L$-optimal design problem. The solution is much facilitated when $\mu$ is replaced by a discrete measure $\mu_p$ supported on a finite set $\SSS_p\subset\SSS$ and design points are selected within $\SSS_p$. Denote $\Db_p=\diag\{\mu_p(\ssb_1),\ldots,\mu_p(\ssb_p)\}$. We diagonalize the $p\times p$ matrix $\Db_p^{1/2}\Gammab_p\Db_p^{1/2}$ into $\Ub_p\Lambdab_p\Ub_p\TT$, with $\Lambdab_p=\diag\{\ml_1,\ldots,\ml_p\}$ and $\Ub_p\Ub_p\TT =\Ib_p$, and define $\Vb_p=\Db_p^{-1/2}\Ub_p$. The $i$th column $\vb_i$ of $\Vb_p$ corresponds to an eigenvector of $\Gammab_p\Db_p$
for the eigenvalue $\ml_i$ and satisfies $\vb_i\TT\Db_p\vb_j=\delta_{i,j}$ (the Kronecker symbol) for all $i,j$. We assume that the $\ml_i$ are ordered by decreasing values, and introduce a spectral truncation by using the $m$ largest eigenvalues only. Denoting by $\Phib_m$ the $p\times m$ matrix with $i$th column equal to $\vb_i$, $i=1,\ldots, m$, we obtain the Bayesian linear model
$Z_{\ssb_i}=\phib_{m,i}\TT \betab + \mve_{\ssb_i}$, where $\betab$ has the normal prior $\SN(\0b_m,\Lambdab_m)$, $\phib_{m,i}\TT$ is the $i$th row of $\Phib_m$, and where the $\mve_{\ssb_i}$ are normal random variables, independent of $\betab$, with zero mean and covariance $\Ex\{\mve_{\ssb_i},\mve_{\ssb_j}\}=\mg(\ssb_i,\ssb_j)-\phib_{m,i}\TT\Lambdab_m\phib_{m,j}$ for all $i,j$. Finally, we neglect correlations and approximate the IMSE via the estimation of $\betab$ by its posterior mean in
\begin{align}\label{BLM2}
Z'_{\ssb_i}=\phib_{m,i}\TT \betab + \mve'_{\ssb_i} \,,
\end{align}
where the errors $\mve'_{\ssb_i}$ are uncorrelated but heteroscedastic, with variance
$\ms_i^2=\Ex\{\mve_{\ssb_i},\mve_{\ssb_i}\}=\mg(\ssb_i,\ssb_i)-\phib_{m,i}\TT\Lambdab_m\phib_{m,i}$;
see \citet{GP-CSDA2016}.
After observations at $n$ points $\ssb_i$ with indices in $J_n\subseteq\{1,\ldots,p\}$, the approximated IMSE is then
\begin{align}\label{hatIMSE}
\widehat\IMSE(J_n)=\tr\left[\left( \sum_{i\in J_n} \frac{1}{\ms_i^2}\, \phib_{m,i}\phib_{m,i}\TT + \Lambdab_m^{-1}\right)^{-1}\right] = \frac1N\, \tr\left[ \Cb \Mb^{-1}(\wb(J_n))\right] \,,
\end{align}
where $\Cb=\Lambdab_m$ and $\Mb(\wb)$ is defined by \eqref{M}, with $\Hb_i$ given by \eqref{Hi}, $\lambda=1/n$, $\ab_i= \Lambdab_m^{1/2}\phib_{m,i}/\ms_i$ for all $i$, and $w_i(J_n)$ equal to $1/n$ for $i\in J_n$ and zero otherwise.
We thus consider the minimization of $\tr\left[\Cb\Mb^{-1}(\wb)\right]$ for the construction of IMSE optimal designs.

We take $\mu$ uniform and $\mu_p(\ssb_i)=1/p$ for all $i$.
The covariance of the random field is given by Mat\'ern 3/2 kernel,
\begin{align*}
\mg(\ssb_1,\ssb_2)=(1+\mt\,\|\ssb_1-\ssb_2\|)\, \exp(-\mt\,\|\ssb_1-\ssb_2\|) \,,
\end{align*}
where we take $\mt=10$. The left panel of Figure~\ref{F:rho_k} shows the type of designs that are obtained. Here $\SSS_p$ corresponds to a $33\times 33$ regular grid in $\SSS=[0,1]^2$ and the truncation level $m$ equals 10. The construction relies on approximate design theory and we still need to extract an exact design (a finite set of support points) once optimal weights have been determined.
Several approaches can be used, see \citet{GP-CSDA2016, GPmoda11-2016}, which are beyond the scope of this paper. In particular, the method does not allow full control of the size of resulting design (which is of the order of magnitude of $m$). A straightforward approach is to simply remove points with negligible weights: on Figure~\ref{F:rho_k}-left, the 8 design points marked by a blue $\bullet$ have total weight less than $7.14\,10^{-5}$, the 12 other points are well spread over $\SSS$, with associated weights $w_i\in(0.0688, 0.0955)$. An important limitation of this approach comes from the need to diagonalize a $p\times p$ matrix, with $p$ being necessarily large when $d$ is large. A possible solution when $\mu$ is uniform on $\SX=[0,1]^d$ is to use a separable covariance so that eigenvalues (respectively, eigenfunctions) are products (respectively, tensor products) of their one-dimensional counterparts; see \citet{P-RESS2019}.
Alternatively, one can use a sparse kernel
induced by $k  \ll p$ points, which
reduces the size of the matrix to be diagonalized ($k\times k$ instead of $p \times p$); see~\citep{SagnolHW2016}
where this approach is used for the sequential design
of computer experiments.

In the rest of the example, $\SSS$ is the hypercube $[0,1]^d$ with $d=5$, $\SSS_p$ corresponds to the first $p$ points of Sobol' low-discrepancy sequence in $\SSS$, with $p=8\,192$. The truncation level is set at $m=50$. We consider the elimination of inessential $\Hb_i$, or equivalently of inessential $\ssb_i\in\SSS_p$, during the construction of an optimal design with the algorithm
\begin{align}\label{mult-Lopt}
w_i^{k+1} = \frac{w_i^k\, \ab_i\TT\Mb^{-1}(\wb^k)\Cb\Mb^{-1}(\wb^k)\ab_i}{\sum_{j=1}^p w_j^k\, \ab_j\TT\Mb^{-1}(\wb^k)\Cb\Mb^{-1}(\wb^k)\ab_j} \,,
\end{align}
initialized at $w_i^0=1/p$ for all $i$. As shown in \citep{PS2021-JSPI}, screening is computationally more efficient when applied periodically, every $\tau$ iterations, rather than at each $k$, and we use $\tau=100$ in the example. When inessential $\ssb_i$ are eliminated by a screening test performed with $\Mb(\wb^k)$, the weights of the remaining points, with indices in $I_k$, are renormalized into $w_i^k/\sum_{i\in I_k} w_i^k$ before application of \eqref{mult-Lopt}. The algorithm is stopped when $\delta<10^{-6}$, with $\delta$, given by \eqref{delta-L-opt}, measuring the (sub-)optimality of $\wb^k$.

We compare the bound
$D_2$
of Corollary~\ref{C:boundD2-L} (i.e., the bound $B$ of Corollary~\ref{C:bound-L-opt}) with the bounds
$B_1$ to $B_4$ given in \citep{PS2021-JSPI} for elimination of inessential points, both in terms of efficiency and computational time. Consider iteration $k$ where a screening test is performed, using the matrix $\Mb_k=\Mb(\wb^k)$.
$B_1$ and $B_2$
are consequences of the Equivalence Theorem (see Theorem~\ref{theo:ET} for the case of $c$-optimality) and require the calculation of the minimum and maximum eigenvalues of $\Mb_k^{-1/2}\Hb_i\Mb_k^{-1/2}$ for each $\Hb_i$ tested.
$B_3$ is derived from a second-order-cone-programming formulation of the design problem \citep{Sagnol2011}, its calculation requires the solution of one-dimensional convex minimization problem for each $\Hb_i$; we use the dichotomy line-search algorithm of \citet{PS2021-JSPI} with the precision parameter $\me$ fixed at 0.01.
$B_4$ corresponds to the method proposed in \citep{P_SPL-2013}\footnote{The method proposed there is for $A$-optimality, but since the matrix $\Cb$ in \eqref{hatIMSE} has full rank, the problem can be straightforwardly transformed into an $A$-optimal design problem.} and requires the calculation of the maximum eigenvalue of $\Mb_k$.

The right panel of Figure~\ref{F:rho_k} shows the proportion $\rho(k)$ of points eliminated along iterations for the five methods considered (note the staircase growth due to periodic screening every $\tau=100$ iterations only).
There is a rather clear global hierarchy in terms of elimination efficiency in this example, with
$B_3 \succ D_2 \succ B_2 \succ B_1 \succ B_4$ ($B_1$ eliminates more points than $B_2$ and $D_2$
in the early iterations, although this is not visible on the plot).
Since the cost of one iteration of \eqref{mult-Lopt} is roughly proportional to the size of $\wb$, that is, $p(1-\rho(k))$ at iteration $k$, we rescale the iteration counter $k$ into $C(k)=\sum_{i=1}^k (1-\rho(i))$ to count pseudo iterations that consider the decreasing size of $\wb^k$. The left panel of Figure~\ref{F:logdelta} presents $\delta$ (log scale) as a function of $C(k)$: it shows the acceleration provided by elimination of inessential points by each one of the method considered (compare with the curve with black $\Box$ for which $\rho(k) \equiv 0$) if one neglects the computational time of the screening tests. The hierarchy observed on Figure~\ref{F:rho_k}-right is confirmed. The right panel of Figure~\ref{F:logdelta} shows $\delta$ as a function of the true computational time\footnote{Calculations are in Matlab, on a PC with a clock speed of 2.5 GHz and 32 GB RAM --- only the comparison between curves is of interest.}. The most efficient method in terms of elimination,
$B_3$ (magenta $+$), yields the slowest convergence due to its high computational cost\footnote{The situation is reversed when $\ml=0$ in \eqref{Hi} and all $\Hb_i$ have rank 1
(so that neither $D_2$ nor $B_2$ can be used):
$B_3$ has an explicit expression and due to its high efficiency can provide an important acceleration factor; see \citet{PS2021-JSPI}.}.
Screening by
$D_2$ (red $\lozenge$), which does not require any eigenvalue calculation or numerical optimization, provides a speed-up factor of about 5 compared to the direct application of \eqref{mult-Lopt} without screening (black $\square$). Due to its reasonable computational cost,
$B_4$ (black $\times$) ensures a significant acceleration too despite its low screening efficiency. The poor performance of screening with $B_1$ (red $\bigstar$) and $B_2$
(blue $\triangledown$) is explained by the dimension $50 \times 50$ of the matrices $\Mb_k$. Notice that elimination of inessential points does not provide any speedup in terms of number of iterations required for a given accuracy: $18\,738$ iterations are needed to reach $\delta \leq 10^{-6}$ with or without screening (217 points have then a positive weight, among which 15 have a total mass less than $5\,10^{-5}$).

\begin{figure}[ht!]
\begin{center}
 \includegraphics[width=.53\linewidth]{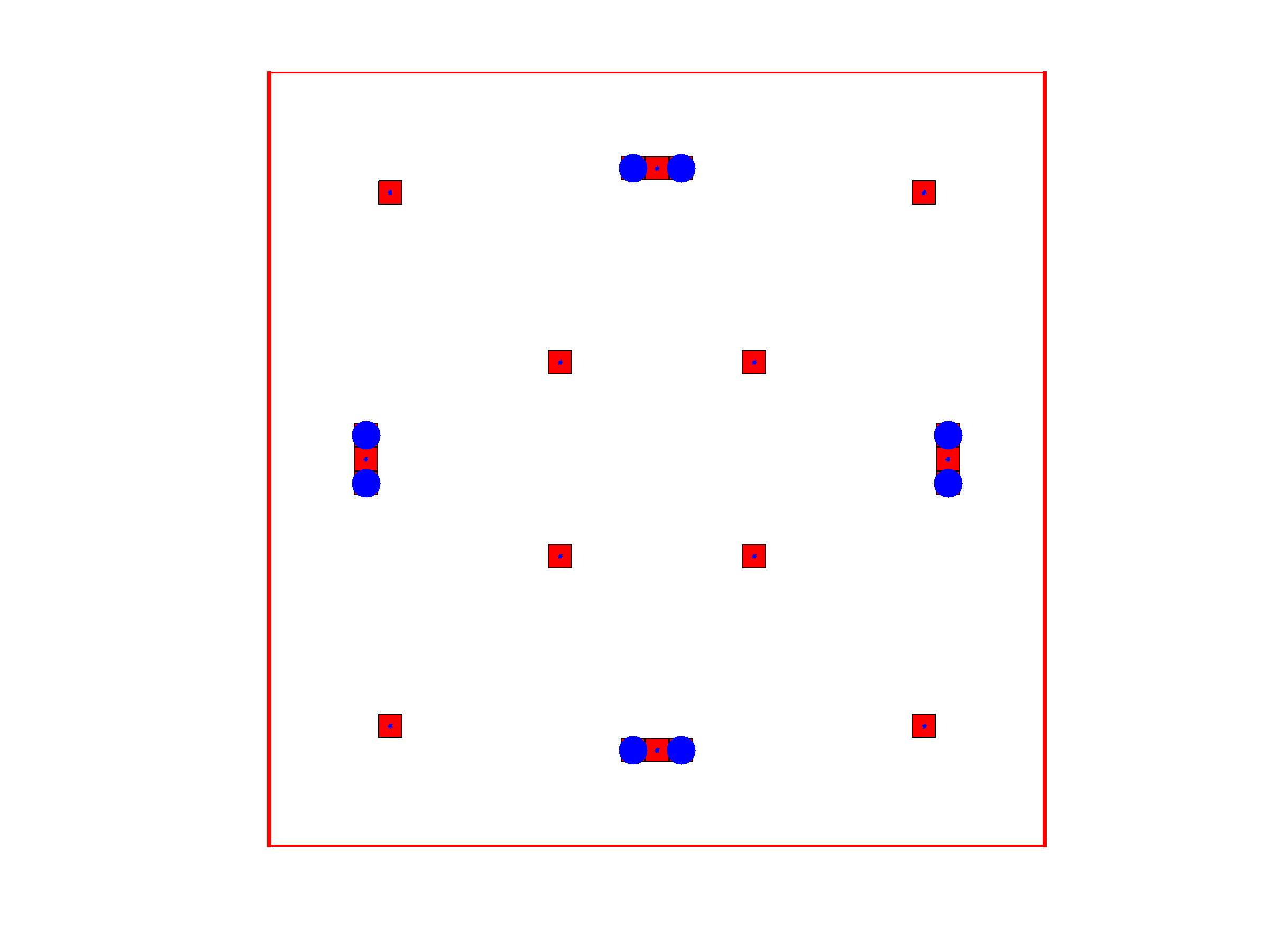}\hspace{-3em}
 \scalebox{0.4}{%% Creator: Inkscape inkscape 0.92.5, www.inkscape.org
%% PDF/EPS/PS + LaTeX output extension by Johan Engelen, 2010
%% Accompanies image file 'rho_k_q8192_m50_T100_B1234D2.pdf' (pdf, eps, ps)
%%
%% To include the image in your LaTeX document, write
%%   \input{<filename>.pdf_tex}
%%  instead of
%%   \includegraphics{<filename>.pdf}
%% To scale the image, write
%%   \def\svgwidth{<desired width>}
%%   \input{<filename>.pdf_tex}
%%  instead of
%%   \includegraphics[width=<desired width>]{<filename>.pdf}
%%
%% Images with a different path to the parent latex file can
%% be accessed with the `import' package (which may need to be
%% installed) using
%%   \usepackage{import}
%% in the preamble, and then including the image with
%%   \import{<path to file>}{<filename>.pdf_tex}
%% Alternatively, one can specify
%%   \graphicspath{{<path to file>/}}
%% 
%% For more information, please see info/svg-inkscape on CTAN:
%%   http://tug.ctan.org/tex-archive/info/svg-inkscape
%%
\begingroup%
  \makeatletter%
  \providecommand\color[2][]{%
    \errmessage{(Inkscape) Color is used for the text in Inkscape, but the package 'color.sty' is not loaded}%
    \renewcommand\color[2][]{}%
  }%
  \providecommand\transparent[1]{%
    \errmessage{(Inkscape) Transparency is used (non-zero) for the text in Inkscape, but the package 'transparent.sty' is not loaded}%
    \renewcommand\transparent[1]{}%
  }%
  \providecommand\rotatebox[2]{#2}%
  \newcommand*\fsize{\dimexpr\f@size pt\relax}%
  \newcommand*\lineheight[1]{\fontsize{\fsize}{#1\fsize}\selectfont}%
  \ifx\svgwidth\undefined%
    \setlength{\unitlength}{576.48001099bp}%
    \ifx\svgscale\undefined%
      \relax%
    \else%
      \setlength{\unitlength}{\unitlength * \real{\svgscale}}%
    \fi%
  \else%
    \setlength{\unitlength}{\svgwidth}%
  \fi%
  \global\let\svgwidth\undefined%
  \global\let\svgscale\undefined%
  \makeatother%
  \begin{picture}(1,0.74979184)%
    \lineheight{1}%
    \setlength\tabcolsep{0pt}%
    \put(0,0){\includegraphics[width=\unitlength,page=1]{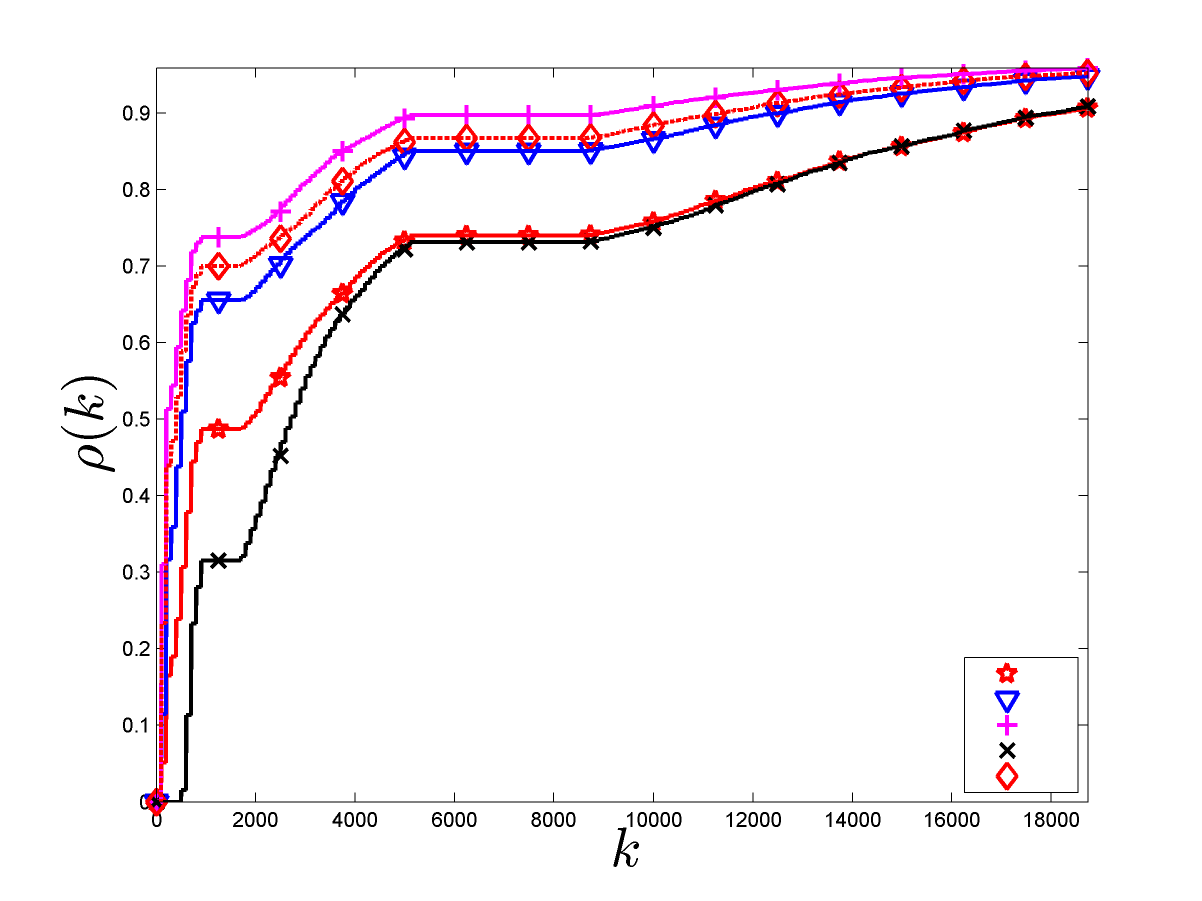}}%
    \put(0.86581779,0.17999646){\color[rgb]{0,0,0}\makebox(0,0)[lt]{\lineheight{1.25}\smash{\begin{tabular}[t]{l}$B_1$\end{tabular}}}}%
    \put(0.866592,0.15940448){\color[rgb]{0,0,0}\makebox(0,0)[lt]{\lineheight{1.25}\smash{\begin{tabular}[t]{l}$B_2$\end{tabular}}}}%
    \put(0.86571424,0.13964893){\color[rgb]{0,0,0}\makebox(0,0)[lt]{\lineheight{1.25}\smash{\begin{tabular}[t]{l}$B_3$\end{tabular}}}}%
    \put(0.86572202,0.11734708){\color[rgb]{0,0,0}\makebox(0,0)[lt]{\lineheight{1.25}\smash{\begin{tabular}[t]{l}$B_4$\end{tabular}}}}%
    \put(0.8657221,0.09629502){\color[rgb]{0,0,0}\makebox(0,0)[lt]{\lineheight{1.25}\smash{\begin{tabular}[t]{l}$D_2$\end{tabular}}}}%
  \end{picture}%
\endgroup%
}
\end{center}
\caption{\small Left: Design obtained when $\SSS_p$ is a $33\times 33$ regular grid in $\SSS=[0,1]^2$, $m=10$ (the 8 points marked with a blue $\bullet$ have negligible weight). Right: evolution of the proportion $\rho(k)$ of points eliminated by the five methods considered along iterations ($d=5$, $p=8\,192$, $m=50$).}
\label{F:rho_k}
\end{figure}

\begin{figure}[ht!]
\begin{center}
 \scalebox{0.39}{%% Creator: Inkscape inkscape 0.92.5, www.inkscape.org
%% PDF/EPS/PS + LaTeX output extension by Johan Engelen, 2010
%% Accompanies image file 'logdelta_pseudok_q8192_m50_T100_B1234D2.pdf' (pdf, eps, ps)
%%
%% To include the image in your LaTeX document, write
%%   \input{<filename>.pdf_tex}
%%  instead of
%%   \includegraphics{<filename>.pdf}
%% To scale the image, write
%%   \def\svgwidth{<desired width>}
%%   \input{<filename>.pdf_tex}
%%  instead of
%%   \includegraphics[width=<desired width>]{<filename>.pdf}
%%
%% Images with a different path to the parent latex file can
%% be accessed with the `import' package (which may need to be
%% installed) using
%%   \usepackage{import}
%% in the preamble, and then including the image with
%%   \import{<path to file>}{<filename>.pdf_tex}
%% Alternatively, one can specify
%%   \graphicspath{{<path to file>/}}
%% 
%% For more information, please see info/svg-inkscape on CTAN:
%%   http://tug.ctan.org/tex-archive/info/svg-inkscape
%%
\begingroup%
  \makeatletter%
  \providecommand\color[2][]{%
    \errmessage{(Inkscape) Color is used for the text in Inkscape, but the package 'color.sty' is not loaded}%
    \renewcommand\color[2][]{}%
  }%
  \providecommand\transparent[1]{%
    \errmessage{(Inkscape) Transparency is used (non-zero) for the text in Inkscape, but the package 'transparent.sty' is not loaded}%
    \renewcommand\transparent[1]{}%
  }%
  \providecommand\rotatebox[2]{#2}%
  \newcommand*\fsize{\dimexpr\f@size pt\relax}%
  \newcommand*\lineheight[1]{\fontsize{\fsize}{#1\fsize}\selectfont}%
  \ifx\svgwidth\undefined%
    \setlength{\unitlength}{576.48001099bp}%
    \ifx\svgscale\undefined%
      \relax%
    \else%
      \setlength{\unitlength}{\unitlength * \real{\svgscale}}%
    \fi%
  \else%
    \setlength{\unitlength}{\svgwidth}%
  \fi%
  \global\let\svgwidth\undefined%
  \global\let\svgscale\undefined%
  \makeatother%
  \begin{picture}(1,0.74979184)%
    \lineheight{1}%
    \setlength\tabcolsep{0pt}%
    \put(0,0){\includegraphics[width=\unitlength,page=1]{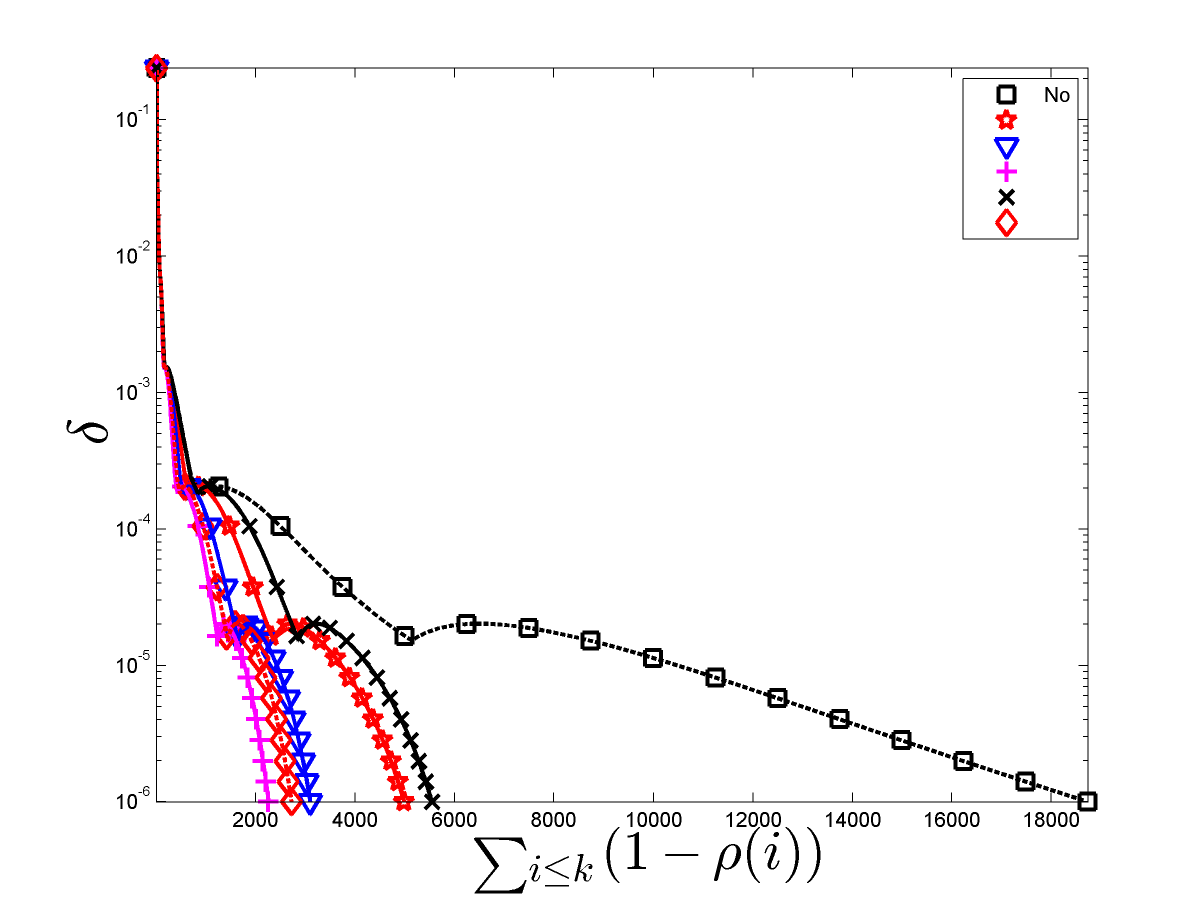}}%
    \put(0.86772522,0.64146071){\color[rgb]{0,0,0}\makebox(0,0)[lt]{\lineheight{1.25}\smash{\begin{tabular}[t]{l}$B_1$\end{tabular}}}}%
    \put(0.86849943,0.62086873){\color[rgb]{0,0,0}\makebox(0,0)[lt]{\lineheight{1.25}\smash{\begin{tabular}[t]{l}$B_2$\end{tabular}}}}%
    \put(0.86762167,0.60111319){\color[rgb]{0,0,0}\makebox(0,0)[lt]{\lineheight{1.25}\smash{\begin{tabular}[t]{l}$B_3$\end{tabular}}}}%
    \put(0.86762945,0.57881133){\color[rgb]{0,0,0}\makebox(0,0)[lt]{\lineheight{1.25}\smash{\begin{tabular}[t]{l}$B_4$\end{tabular}}}}%
    \put(0.86762953,0.55775927){\color[rgb]{0,0,0}\makebox(0,0)[lt]{\lineheight{1.25}\smash{\begin{tabular}[t]{l}$D_2$\end{tabular}}}}%
  \end{picture}%
\endgroup%
}\hspace{-2em}
 \scalebox{0.39}{%% Creator: Inkscape inkscape 0.92.5, www.inkscape.org
%% PDF/EPS/PS + LaTeX output extension by Johan Engelen, 2010
%% Accompanies image file 'logdelta_T_q8192_m50_T100_B1234D2.pdf' (pdf, eps, ps)
%%
%% To include the image in your LaTeX document, write
%%   \input{<filename>.pdf_tex}
%%  instead of
%%   \includegraphics{<filename>.pdf}
%% To scale the image, write
%%   \def\svgwidth{<desired width>}
%%   \input{<filename>.pdf_tex}
%%  instead of
%%   \includegraphics[width=<desired width>]{<filename>.pdf}
%%
%% Images with a different path to the parent latex file can
%% be accessed with the `import' package (which may need to be
%% installed) using
%%   \usepackage{import}
%% in the preamble, and then including the image with
%%   \import{<path to file>}{<filename>.pdf_tex}
%% Alternatively, one can specify
%%   \graphicspath{{<path to file>/}}
%% 
%% For more information, please see info/svg-inkscape on CTAN:
%%   http://tug.ctan.org/tex-archive/info/svg-inkscape
%%
\begingroup%
  \makeatletter%
  \providecommand\color[2][]{%
    \errmessage{(Inkscape) Color is used for the text in Inkscape, but the package 'color.sty' is not loaded}%
    \renewcommand\color[2][]{}%
  }%
  \providecommand\transparent[1]{%
    \errmessage{(Inkscape) Transparency is used (non-zero) for the text in Inkscape, but the package 'transparent.sty' is not loaded}%
    \renewcommand\transparent[1]{}%
  }%
  \providecommand\rotatebox[2]{#2}%
  \newcommand*\fsize{\dimexpr\f@size pt\relax}%
  \newcommand*\lineheight[1]{\fontsize{\fsize}{#1\fsize}\selectfont}%
  \ifx\svgwidth\undefined%
    \setlength{\unitlength}{576.48001099bp}%
    \ifx\svgscale\undefined%
      \relax%
    \else%
      \setlength{\unitlength}{\unitlength * \real{\svgscale}}%
    \fi%
  \else%
    \setlength{\unitlength}{\svgwidth}%
  \fi%
  \global\let\svgwidth\undefined%
  \global\let\svgscale\undefined%
  \makeatother%
  \begin{picture}(1,0.74979184)%
    \lineheight{1}%
    \setlength\tabcolsep{0pt}%
    \put(0,0){\includegraphics[width=\unitlength,page=1]{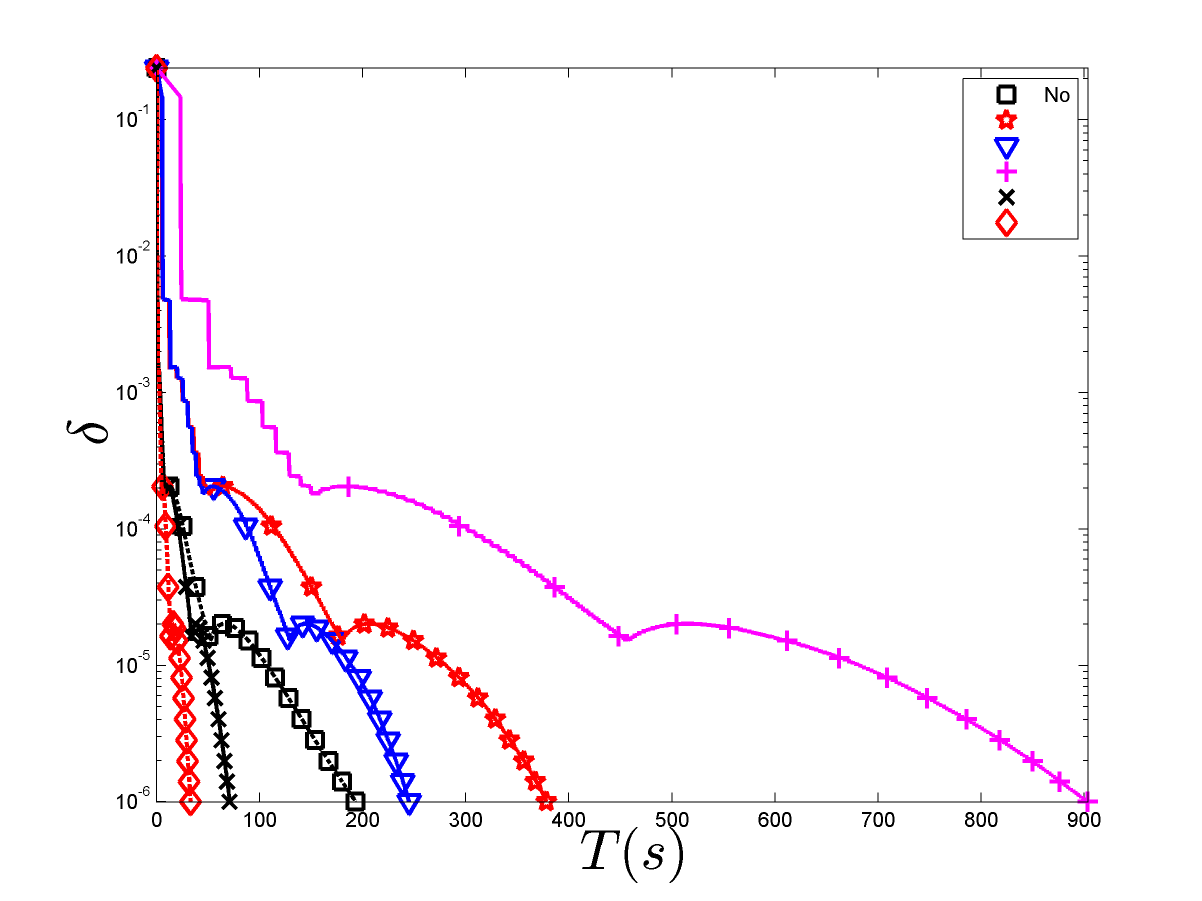}}%
    \put(0.86715571,0.64290431){\color[rgb]{0,0,0}\makebox(0,0)[lt]{\lineheight{1.25}\smash{\begin{tabular}[t]{l}$B_1$\end{tabular}}}}%
    \put(0.86792993,0.62231233){\color[rgb]{0,0,0}\makebox(0,0)[lt]{\lineheight{1.25}\smash{\begin{tabular}[t]{l}$B_2$\end{tabular}}}}%
    \put(0.86705216,0.60255682){\color[rgb]{0,0,0}\makebox(0,0)[lt]{\lineheight{1.25}\smash{\begin{tabular}[t]{l}$B_3$\end{tabular}}}}%
    \put(0.86705995,0.58025495){\color[rgb]{0,0,0}\makebox(0,0)[lt]{\lineheight{1.25}\smash{\begin{tabular}[t]{l}$B_4$\end{tabular}}}}%
    \put(0.86706003,0.55920289){\color[rgb]{0,0,0}\makebox(0,0)[lt]{\lineheight{1.25}\smash{\begin{tabular}[t]{l}$D_2$\end{tabular}}}}%
  \end{picture}%
\endgroup%
}
\end{center}
\caption{\small Left: Evolution of $\delta$ given by \eqref{delta-L-opt} as a function of the pseudo iteration number (the computational cost of each screening test is ignored). Right: Evolution of $\delta$ as a function of computational time (in seconds).}
\label{F:logdelta}
\end{figure}

The hierarchy between methods in terms of computational time depends on $m$ and may vary with the choice of the kernel $\mg(\cdot,\cdot)$ (in particular, with its correlation length), but in this example the ranking was fairly stable when varying $\mg$ (we also used the Mat\'ern 1/2 and 5/2 kernels and the Gaussian, i.e., squared exponential, kernel) and the number $p$ of Sobol' points in $\SSS_p$.

%_____________________________________________________________________________________________________________
\section{Conclusion}\label{S:conclusion}

We have shown that the reformulation of $c$- and $L$-optimal design problems as a linear regression problem with squared sparsity-inducing penalty can be used to obtain safe
screening rules. These rules are much faster to compute than those previously proposed in the optimal design community, especially for problems with a high dimensional parameter space. Their efficiency has been demonstrated on several experiments using real data sets. In addition, we have leveraged the polyhedral conic geometry
of the dual quadratic-lasso problem to develop a new homotopy algorithm for Bayesian $c$-optimal design. This algorithm terminates after a finite number of steps with
an optimal design and can be several orders of magnitude faster than standard first-order algorithms.
%____________________________________________________________________________________________

{\small

%___________________________________________________________________________________________
\section*{Acknowledgements}
The authors thank three anonymous referees whose comments contributed to improve the presentation of these results.

The work of Luc Pronzato was partly supported by project INDEX (INcremental Design of EXperiments) ANR-18-CE91-0007 of the French National Research Agency (ANR).
The work of Guillaume Sagnol is supported by
the Deutsche Forschungsgemeinschaft (DFG, German Research Foundation) under Germany's Excellence Strategy --- The Berlin Mathematics Research Center MATH+ (EXC-2046/1, project ID: 390685689).
%____________________________________________________________________________________________

\vskip 0.2in
%\bibliography{Lasso_jmlr}

\begin{thebibliography}{48}
\providecommand{\natexlab}[1]{#1}
\providecommand{\url}[1]{\texttt{#1}}
\expandafter\ifx\csname urlstyle\endcsname\relax
  \providecommand{\doi}[1]{doi: #1}\else
  \providecommand{\doi}{doi: \begingroup \urlstyle{rm}\Url}\fi

\bibitem[Anstreicher(2020)]{anstreicher2020efficient}
K.M. Anstreicher.
\newblock Efficient solution of maximum-entropy sampling problems.
\newblock \emph{Operations Research}, 68\penalty0 (6):\penalty0 1826--1835,
  2020.

\bibitem[Bonnefoy et~al.(2015)Bonnefoy, Emiya, Ralaivola, and
  Gribonval]{BonnefoyERG2015}
A.~Bonnefoy, V.~Emiya, L.~Ralaivola, and R.~Gribonval.
\newblock Dynamic screening: accelerating first-order algorithms for the
  {L}asso and {G}roup-{L}asso.
\newblock \emph{IEEE Transactions on Signal Processing}, 63\penalty0
  (19):\penalty0 5121--5132, 2015.

\bibitem[Boyd and Vandenberghe(2004)]{BoydV2004}
S.~Boyd and L.~Vandenberghe.
\newblock \emph{Convex Optimization}.
\newblock Cambridge University Press, Cambridge, 2004.

\bibitem[{\v{C}}ern{\`y} and Hlad{\'\i}k(2012)]{vcerny2012two}
M.~{\v{C}}ern{\`y} and M.~Hlad{\'\i}k.
\newblock Two complexity results on c-optimality in experimental design.
\newblock \emph{Computational Optimization and Applications}, 51\penalty0
  (3):\penalty0 1397--1408, 2012.

\bibitem[Civril and Magdon-Ismail(2009)]{civril2009selecting}
A.~Civril and M.~Magdon-Ismail.
\newblock On selecting a maximum volume sub-matrix of a matrix and related
  problems.
\newblock \emph{Theoretical Computer Science}, 410\penalty0 (47-49):\penalty0
  4801--4811, 2009.

\bibitem[Cohn et~al.(1996)Cohn, Ghahramani, and Jordan]{cohn1996active}
D.A. Cohn, Z.~Ghahramani, and M.I. Jordan.
\newblock Active learning with statistical models.
\newblock \emph{Journal of Artificial Intelligence Research}, 4:\penalty0
  129--145, 1996.

\bibitem[Efron et~al.(2004)Efron, Hastie, Johnstone, and
  Tibshirani]{efron2004least}
B.~Efron, T.~Hastie, I.~Johnstone, and R.~Tibshirani.
\newblock Least angle regression.
\newblock \emph{The Annals of Statistics}, 32\penalty0 (2):\penalty0 407--499,
  2004.

\bibitem[{El Ghaoui} et~al.(2012){El Ghaoui}, Viallon, and
  Rabbani]{ElGhaouiVR2012}
L.~{El Ghaoui}, V.~Viallon, and T.~Rabbani.
\newblock Safe feature elimination in sparse supervised learning.
\newblock \emph{Pacific Journal of Optimization}, 8\penalty0 (4):\penalty0
  667--–698, 2012.

\bibitem[Fedorov(1972)]{Fed72}
V.V. Fedorov.
\newblock \emph{Theory of Optimal Experiments}.
\newblock New York : Academic Press, 1972.

\bibitem[Fellman(1974)]{Fellman74}
J.~Fellman.
\newblock On the allocation of linear observations.
\newblock \emph{Comment. Phys. Math.}, 44:\penalty0 27--78, 1974.

\bibitem[Fercoq et~al.(2015)Fercoq, Gramfort, and Salmon]{FercoqGS2015}
O.~Fercoq, A.~Gramfort, and J.~Salmon.
\newblock Mind the duality gap: safer rules for the {L}asso.
\newblock In \emph{International Conference on Machine Learning}, pages
  333--342, 2015.

\bibitem[Gauthier and Pronzato(2016)]{GPmoda11-2016}
B.~Gauthier and L.~Pronzato.
\newblock Optimal design for prediction in random field models via covariance
  kernel expansions.
\newblock In J.~Kunert, Ch.H. M{\"u}ller, and A.C. Atkinson, editors,
  \emph{mODa'11 -- Advances in Model-Oriented Design and Analysis, Proceedings
  of the 11th Int.\ Workshop, Hamminkeln-Dingden (Germany)}, pages 103--111,
  Heidelberg, 2016. Springer.

\bibitem[Gauthier and Pronzato(2017)]{GP-CSDA2016}
B.~Gauthier and L.~Pronzato.
\newblock Convex relaxation for {IMSE} optimal design in random field models.
\newblock \emph{Computational Statistics and Data Analysis}, 113:\penalty0
  375--394, 2017.

\bibitem[Gill et~al.(1989)Gill, Murray, Saunders, and
  Wright]{gill1989practical}
P.E. Gill, W.~Murray, M.A. Saunders, and M.H. Wright.
\newblock A practical anti-cycling procedure for linearly constrained
  optimization.
\newblock \emph{Mathematical Programming}, 45\penalty0 (1):\penalty0 437--474,
  1989.

\bibitem[{Gurobi Optimization, LLC}(2023)]{gurobi}
{Gurobi Optimization, LLC}.
\newblock {Gurobi Optimizer Reference Manual}, 2023.
\newblock URL \url{https://www.gurobi.com}.

\bibitem[Harman and Jur{\'\i}k(2008)]{harman2008computing}
R.~Harman and T.~Jur{\'\i}k.
\newblock Computing c-optimal experimental designs using the simplex method of
  linear programming.
\newblock \emph{Computational Statistics \& Data Analysis}, 53\penalty0
  (2):\penalty0 247--254, 2008.

\bibitem[Harman and Pronzato(2007)]{HPa06_SPL}
R.~Harman and L.~Pronzato.
\newblock Improvements on removing non-optimal support points in {$D$}-optimum
  design algorithms.
\newblock \emph{Statistics \& Probability Letters}, 77:\penalty0 90--94, 2007.

\bibitem[Harman and Rosa(2019)]{HarmanR2019}
R.~Harman and S.~Rosa.
\newblock Removal of points that do not support an {$E$}-optimal experimental
  design.
\newblock \emph{Statistics \& Probability Letters}, 147:\penalty0 83--89, 2019.

\bibitem[Harman et~al.(2020)Harman, Filov{\'a}, and
  Richt{\'a}rik]{harman2020randomized}
R.~Harman, L.~Filov{\'a}, and P.~Richt{\'a}rik.
\newblock A randomized exchange algorithm for computing optimal approximate
  designs of experiments.
\newblock \emph{Journal of the American Statistical Association}, 115\penalty0
  (529):\penalty0 348--361, 2020.

\bibitem[Kiefer(1974)]{Kiefer74}
J.~Kiefer.
\newblock General equivalence theory for optimum designs (approximate theory).
\newblock \emph{The Annals of Statistics}, 2\penalty0 (5):\penalty0 849--879,
  1974.

\bibitem[LeCun and Cortes(2010)]{lecun-mnisthandwrittendigit-2010}
Y.~LeCun and C.~Cortes.
\newblock {MNIST} handwritten digit database, 2010.
\newblock \texttt{http://yann.lecun.com/exdb/mnist/}.

\bibitem[Li et~al.(2022)Li, Fampa, Lee, Qiu, Xie, and Yao]{li2022doptimal}
Y.~Li, M.~Fampa, J.~Lee, F.~Qiu, W.~Xie, and R.~Yao.
\newblock D-optimal data fusion: Exact and approximation algorithms, 2022.
\newblock arXiv:2208.03589.

\bibitem[Mairal and Yu(2012)]{DBLP:conf/icml/MairalY12}
J.~Mairal and B.~Yu.
\newblock Complexity analysis of the lasso regularization path.
\newblock In \emph{Proceedings of the 29th International Conference on Machine
  Learning, {ICML} 2012, Edinburgh, Scotland, UK, June 26 - July 1, 2012}.
  icml.cc / Omnipress, 2012.
\newblock URL \url{http://icml.cc/2012/papers/202.pdf}.

\bibitem[Ndiaye et~al.(2017)Ndiaye, Fercoq, Gramfort, and
  Salmon]{NdiayeFGS2017}
E.~Ndiaye, O.~Fercoq, A.~Gramfort, and J.~Salmon.
\newblock Gap safe screening rules for sparsity enforcing penalties.
\newblock \emph{Journal of Machine Learning Research}, 18\penalty0
  (1):\penalty0 4671--4703, 2017.

\bibitem[Osborne et~al.(2000)Osborne, Presnell, and Turlach]{osborne2000new}
M.R. Osborne, B.~Presnell, and B.A. Turlach.
\newblock A new approach to variable selection in least squares problems.
\newblock \emph{IMA Journal of Numerical Analysis}, 20\penalty0 (3):\penalty0
  389--403, 2000.

\bibitem[Pilz(1983)]{Pilz83}
J.~Pilz.
\newblock \emph{{B}ayesian Estimation and Experimental Design in Linear
  Regression Models}, volume~55.
\newblock Teubner-Texte zur Mathematik, Leipzig, 1983.
\newblock (also {W}iley, {N}ew {Y}ork, 1991).

\bibitem[Pronzato(2003)]{Pa03}
L.~Pronzato.
\newblock Removing non-optimal support points in {$D$}-optimum design
  algorithms.
\newblock \emph{Statistics \& Probability Letters}, 63:\penalty0 223--228,
  2003.

\bibitem[Pronzato(2013)]{P_SPL-2013}
L.~Pronzato.
\newblock A delimitation of the support of optimal designs for {K}iefer's
  $\phi_p$-class of criteria.
\newblock \emph{Statistics \& Probability Letters}, 83:\penalty0 2721--2728,
  2013.

\bibitem[Pronzato(2019{\natexlab{a}})]{P-OMS2019}
L.~Pronzato.
\newblock On the elimination of inessential points in the smallest enclosing
  ball problem.
\newblock \emph{Optimization Methods and Software}, 34\penalty0 (2):\penalty0
  225--247, 2019{\natexlab{a}}.

\bibitem[Pronzato(2019{\natexlab{b}})]{P-RESS2019}
L.~Pronzato.
\newblock Sensitivity analysis via {K}arhunen-{L}o{\`e}ve expansion of a random
  field model: estimation of {S}obol' indices and experimental design.
\newblock \emph{Reliability Engineering and System Safety}, 187:\penalty0
  93--109, 2019{\natexlab{b}}.

\bibitem[Pronzato and Sagnol(2021)]{PS2021-JSPI}
L.~Pronzato and G.~Sagnol.
\newblock Removing inessential points in c- and {A}-optimal design.
\newblock \emph{Journal of Statistical Planning and Inference}, 213:\penalty0
  233--252, 2021.

\bibitem[Pronzato and Zhigljavsky(2019)]{PZ2019-JSC}
L.~Pronzato and A.A. Zhigljavsky.
\newblock Measures minimizing regularized dispersion.
\newblock \emph{Journal of Scientific Computing}, 78\penalty0 (3):\penalty0
  1550--1570, 2019.

\bibitem[Pukelsheim(1993)]{Pukelsheim93}
F.~Pukelsheim.
\newblock \emph{Optimal Experimental Design}.
\newblock Wiley, New York, 1993.
\newblock (also SIAM, Philadelphia, 2006).

\bibitem[Pukelsheim and Rieder(1992)]{pukelsheim1992efficient}
F.~Pukelsheim and S.~Rieder.
\newblock Efficient rounding of approximate designs.
\newblock \emph{Biometrika}, 79\penalty0 (4):\penalty0 763--770, 1992.

\bibitem[Sagnol(2011)]{Sagnol2011}
G.~Sagnol.
\newblock Computing optimal designs of multiresponse experiments reduces to
  second-order cone programming.
\newblock \emph{Journal of Statistical Planning and Inference}, 141:\penalty0
  1684--1708, 2011.

\bibitem[Sagnol and Pauwels(2019)]{SagnolP2019}
G.~Sagnol and E.~Pauwels.
\newblock An unexpected connection between {B}ayes {$A$}-optimal design
  criteria and the group {L}asso.
\newblock \emph{Statistical Papers}, 60\penalty0 (2):\penalty0 215--234, 2019.

\bibitem[Sagnol et~al.(2016)Sagnol, Hege, and Weiser]{SagnolHW2016}
G.~Sagnol, H.-C. Hege, and M.~Weiser.
\newblock Using sparse kernels to design computer experiments with tunable
  precision.
\newblock In \emph{Proceedings of the 22nd International Conference on
  Computational Statistics}, pages 397--408, 2016.

\bibitem[Sagnol and Stahlberg(2022)]{PICOS}
Guillaume Sagnol and Maximilian Stahlberg.
\newblock {PICOS}: A {Python} interface to conic optimization solvers.
\newblock \emph{Journal of Open Source Software}, 7\penalty0 (70):\penalty0
  3915, February 2022.
\newblock ISSN 2475-9066.
\newblock \doi{10.21105/joss.03915}.

\bibitem[Silvey(1980)]{Silvey80}
S.D. Silvey.
\newblock \emph{Optimal Design}.
\newblock Chapman \& Hall, London, 1980.

\bibitem[Singh and Xie(2020)]{singh2020approximation}
M.~Singh and W.~Xie.
\newblock Approximation algorithms for {D}-optimal design.
\newblock \emph{Mathematics of Operations Research}, 45\penalty0 (4):\penalty0
  1512--1534, 2020.

\bibitem[Wang et~al.(2015)Wang, Wonka, and Ye]{WangWY2015}
J.~Wang, P.~Wonka, and J.~Ye.
\newblock {L}asso screening rules via dual polytope projection.
\newblock \emph{Journal of Machine Learning Research}, 16\penalty0
  (1):\penalty0 1063--1101, 2015.

\bibitem[Welch(1982)]{welch1982algorithmic}
W.J. Welch.
\newblock Algorithmic complexity: three {NP}-hard problems in computational
  statistics.
\newblock \emph{Journal of Statistical Computation and Simulation}, 15\penalty0
  (1):\penalty0 17--25, 1982.

\bibitem[Wynn(1970)]{Wyn70}
H.P. Wynn.
\newblock The sequential generation of {$D$}-optimum experimental designs.
\newblock \emph{Annals of Mathematical Statistics}, 41:\penalty0 1655--1664,
  1970.

\bibitem[Xiang and Ramadge(2012)]{XiangR2012}
Z.J. Xiang and P.J. Ramadge.
\newblock Fast {L}asso screening tests based on correlations.
\newblock In \emph{IEEE International Conference on Acoustics, Speech and
  Signal Processing (ICASSP)}, pages 2137--2140, 2012.

\bibitem[Xiang et~al.(2016)Xiang, Wang, and Ramadge]{XiangWR2016}
Z.J. Xiang, Y.~Wang, and P.J. Ramadge.
\newblock Screening tests for {L}asso problems.
\newblock \emph{IEEE Trans. on Pattern Analysis and Machine Intelligence},
  39\penalty0 (5):\penalty0 1008--1027, 2016.

\bibitem[Yau and Hui(2017)]{yau2017lars}
C.Y. Yau and T.S. Hui.
\newblock Lars-type algorithm for group lasso.
\newblock \emph{Statistics and Computing}, 27:\penalty0 1041--1048, 2017.

\bibitem[Yu(2010)]{Yu2010AoS}
Y.~Yu.
\newblock Monotonic convergence of a general algorithm for computing optimal
  designs.
\newblock \emph{The Annals of Statistics}, 38\penalty0 (3):\penalty0
  1593--1606, 2010.

\bibitem[Zhou and Wu(2014)]{zhou2014generic}
H.~Zhou and Y.~Wu.
\newblock A generic path algorithm for regularized statistical estimation.
\newblock \emph{Journal of the American Statistical Association}, 109\penalty0
  (506):\penalty0 686--699, 2014.

\end{thebibliography}

% <-----   content of bbl file

%  end of bbl file  -------->

%--------------------------------------------------------------------------------------------

\newpage
\appendix

%\section{\texorpdfstring{$L$}{L}-optimality and quadratic group Lasso}\label{S:Lopt}
\section{$L$-optimality and quadratic group Lasso}\label{S:Lopt}

In order to generalize the results of Section~\ref{sec:Qlasso} to the case of $L$-optimality, we start with the counterpart of Theorem~\ref{theo:eq-copt-qlasso} for matrices $\Hb_i$ of the form $\Hb_i=\Ab_i\Ab_i\TT + \lambda\, \Ib$ for some $\Ab_i\in\mathds{R}^{m \times q}$,
for all $i\in[p]$. This result is interesting in its own right, as this situation arises naturally for the linear model with multiresponse experiments, where~\eqref{regression1} is replaced by
a $q$-dimensional observation $Y_i=\Ab_i\TT \mtb + \mveb_i$.

\begin{theo}\label{theo:eq-multi-copt-qlasso}
Let $\Hb_i=\Ab_i\Ab_i\TT + \lambda\, \Ib$, with $\Ab_i\in\mathds{R}^{m \times q}$. Then, the $c$-optimal design problem~\eqref{phi} is \emph{equivalent} to the following problem:
\begin{align}\label{mqlasso}
\min_{\xb_1,\ldots,\xb_p \in \mathds{R}^{q}}\quad \SL^q _\lambda(\xb)= \left\| \sum_{i=1}^p \Ab_i \xb_i-\cb \right\|^2 + \lambda\, \left(\sum_{i=1}^p \|\xb_i\| \right)^2,
\end{align}
in the following sense:
\begin{itemize}
 \item[(i)] the optimal value of~\eqref{mqlasso} is equal to $\lambda\, \phi_c(\wb^*)$, where $\wb^*$ is a $c$-optimal design;
 \item[(ii)] If $\xb^*=(\xb_1^*,\ldots,\xb_p^*)$ solves~\eqref{mqlasso}
 and $\Ab_i\TT\cb\neq \0b$ for some $i\in[p]$,
then $\xb^*\neq \0b$ and
 $\widehat\wb^*=\widehat\wb(\xb^*)$ is $c$-optimal, where
\begin{align}\label{mhatw}
\widehat w_i(\xb)=\frac{\|\xb_i\|}{\sum_{j=1}^p \|\xb_j\|}\,, \ \forall i \in[p]\,, \ \xb\neq\0b\,;
\end{align}
 \item[(iii)] If $\wb^*\in\SP_{p}$ is c-optimal, then
 $\widehat\xb^*=(\widehat\xb_1(\wb^*),\ldots,\widehat\xb_p(\wb^*))$ is optimal for~\eqref{mqlasso}, where
\begin{align}
\widehat \xb_i(\wb)= w_i\, \Ab_i\TT \Mb^{-1}(\wb)\cb\,, \ \forall i\in[p]\,. \label{mwidehat-x}
 \end{align}
 \item[(iv)] In the pathological case $\Ab_i\TT\cb=\0b$, $\forall i\in[p]$, the unique optimal solution to the quadratic lasso is $\xb^*=\0b$, while every design $\wb\in\SP_p$ is $c$-optimal.
 \end{itemize}
\end{theo}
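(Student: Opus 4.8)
The plan is to replicate, almost verbatim, the proof of Theorem~\ref{theo:eq-copt-qlasso}, the only change being that scalar coordinates $x_i$ are replaced by blocks $\xb_i\in\mathds{R}^q$ and the scalar penalty $x_i^2$ by $\|\xb_i\|^2$. First I would introduce the joint function $v:\mathds{R}^{pq}\times\SP_p\to\mathds{R}\cup\{\infty\}$ defined by
\[
v(\xb,\wb)=\Big\|\sum_{i=1}^p\Ab_i\xb_i-\cb\Big\|^2+\ml\sum_{i=1}^p\frac{\|\xb_i\|^2}{w_i},
\]
with the same continuity convention at $w_i=0$ as in~\eqref{def_v}. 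Convexity of $v$ follows exactly as before: $(\xb_i,w_i)\mapsto\|\xb_i\|^2/w_i$ is the perspective function of the convex map $\xb_i\mapsto\|\xb_i\|^2$, hence jointly convex.

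Next I would carry out the two partial minimizations. For fixed $\xb\neq\0b$, minimizing $\sum_i\|\xb_i\|^2/w_i$ over $\SP_p$ is handled by the Cauchy-Schwarz inequality (or a Lagrange-multiplier computation); the optimum is attained at $w_i\propto\|\xb_i\|$, i.e. at $\widehat\wb(\xb)$ given by~\eqref{mhatw}, and yields $\min_{\wb}v(\xb,\wb)=\SL^q_\ml(\xb)$. For fixed $\wb\neq\0b$, stacking the blocks into $\xb\in\mathds{R}^{pq}$, writing $\Ab=[\Ab_1,\ldots,\Ab_p]$ and letting $\tilde\Db(\wb)=\diag\{w_1,\ldots,w_p\}\otimes\Ib_q$ be the block-diagonal weight matrix, $v(\xb,\wb)=\|\Ab\xb-\cb\|^2+\ml\,\xb\TT\tilde\Db^{-1}(\wb)\xb$ is a ridge-type least-squares problem. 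Its minimizer is $\widehat\xb=\tilde\Db(\wb)\Ab\TT\Mb^{-1}(\wb)\cb$ by the Sherman-Morrison-Woodbury identity, i.e. $\widehat\xb_i=w_i\Ab_i\TT\Mb^{-1}(\wb)\cb$ as in~\eqref{mwidehat-x}, using $\Ab\tilde\Db(\wb)\Ab\TT=\sum_i w_i\Ab_i\Ab_i\TT=\Mb(\wb)-\ml\Ib_m$. Substituting back and simplifying exactly as in~\eqref{minv_x} gives $\min_\xb v(\xb,\wb)=\ml\,\phi_c(\wb)$.

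With these two identities in hand, parts (i)--(iv) follow by the same chaining arguments as in Theorem~\ref{theo:eq-copt-qlasso}: (i) is the double minimization $\min_\xb\SL^q_\ml(\xb)=\min_{\xb,\wb}v(\xb,\wb)=\min_\wb\ml\phi_c(\wb)$, and (ii), (iii) follow by composing the two partial minimizers. The only step needing a small adaptation is the claim $\xb^*\neq\0b$ when some $\Ab_i\TT\cb\neq\0b$: restricting $\SL^q_\ml$ to the subspace where only the $i$-th block is nonzero reduces to the single-block ridge problem $\min_{\xb_i}\|\Ab_i\xb_i-\cb\|^2+\ml\|\xb_i\|^2$, whose optimal value $\|\cb\|^2-\cb\TT\Ab_i(\Ab_i\TT\Ab_i+\ml\Ib_q)^{-1}\Ab_i\TT\cb$ is strictly below $\|\cb\|^2=\SL^q_\ml(\0b)$ precisely when $\Ab_i\TT\cb\neq\0b$. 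The pathological case (iv) is identical to the scalar one. I expect the only genuine bookkeeping obstacle to be the block/Kronecker form of the Woodbury step; conceptually the proof is a direct transcription, since the Euclidean-norm perspective function is just as well-behaved as its scalar counterpart.
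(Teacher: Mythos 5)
Your proposal is correct and follows essentially the same route as the paper: the paper's own proof is precisely a transcription of the proof of Theorem~\ref{theo:eq-copt-qlasso} using the stacked matrix $\Ab=[\Ab_1,\ldots,\Ab_p]$, the block-diagonal weight matrix $\diag(\wb)\otimes\Ib_q$, and the perspective-function form of $v$, followed by the same two partial minimizations and chaining arguments. The only (immaterial) difference is in showing $\xb^*\neq\0b$: you solve the full $q$-dimensional ridge problem restricted to block $i$, whereas the paper picks a single column $\ab_k$ of $\Ab$ with $\ab_k\TT\cb\neq 0$ and reuses the one-dimensional computation from the scalar case; both give a value strictly below $\|\cb\|^2=\SL^q_\ml(\0b)$.
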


\begin{proof}
The proof follows almost the same lines as the proof of Theorem~\ref{theo:eq-copt-qlasso}, with the difference that we define
$\Ab = [\Ab_1,\ldots,\Ab_p] \in \mathds{R}^{m \times (pq)}$
and
$\Db(\wb) = \diag(\wb) \otimes \Ib_q \in \mathds{R}^{pq \times pq }$, i.e., the diagonal matrix with
$p$ diagonal blocks of the form $w_i \Ib_q$.
Then, we still have $\Mb(\wb) = \Ab \Db(\wb) \Ab\TT + \lambda \Ib_{m}$.
We also change the definition of the function $v$ to
\[v(\xb,\wb)=\left\|\sum_{i=1}^p\Ab_i\xb_i-\cb\right\|^2 + \ml \sum_{i=1}^p
\frac{\|\xb_i\|^2}{w_i}
=\|\Ab\xb-\cb\|^2 + \ml\, \xb\TT \Db^{-1}(\wb)\xb
.\]
Here, the function $(\xb_i,w_i) \mapsto\frac{\|\xb_i\|^2}{w_i}$ is the perspective function of $\xb_i\mapsto\|\xb_i\|^2$, hence it is convex, and defined by continuity in $w_i=0$ (i.e., $\|\xb_i\|^2/0=0$ if $\xb_i=\0b$ and $\|\xb_i\|^2/0=\infty$ otherwise).

The rest of the proof is identical to that of Theorem~\ref{theo:eq-copt-qlasso}. We obtain
that~\eqref{minv_w} and~\eqref{minv_x} hold,
which imply $(i)$, $(ii)$ and $(iii)$.
That $\xb=\0b$ cannot be optimal whenever $\Ab\TT\cb\neq\0b$ follows from
 \[
 \SL_\lambda(\frac{\ab_k\TT\cb}{\|\ab_k\|^2+\lambda} \eb_k)=
 \|\cb\|^2 - \frac{(\ab_k\TT\cb)^2}{\|\ab_k\|^2+\lambda}
< \|\cb\|^2 = \SL_{\ml}(\0b),
\]
if $k$ is such that the
$k$th column $\ab_k$ of $\Ab$ satisfy $\ab_k\TT\cb\neq 0$. The proof of point $(iv)$ for the case $\Ab\TT\cb=\0b$ is also the same
as for Theorem~\ref{theo:eq-copt-qlasso}.
\end{proof}

Let us return to the case where $\Hb_i=\ab_i\ab_i\TT + \lambda \Ib_m$.
The generalization of Theorem~\ref{theo:eq-copt-qlasso} to the case of $L$-optimality is now straightforward following
%if we use the fact that the $L-$optimality criterion can be
%rewritten as a $c-$optimality criterion for an
%extended model with replicated observations of multiple copies of the unknown parameter~
\citep{Sagnol2011}. Let $\Kb\in\mathds{R}^{m\times r}$ and define $\cbb=\operatorname{vec}(\Kb)\in\mathds{R}^{mr}$. Then,
for all $\wb \in \SP_p$
we have
\[
\phi_L(\wb)  = \Phi_L[\Mb(\wb)] = \tr[\Kb\TT \Mb^{-1}(\wb)\, \Kb]\,
=
\cbb\TT \widetilde \Mb^{-1}(\wb) \cbb,
\]
where
\[
\widetilde \Mb(\wb) = \Ib_r \otimes \Mb(\wb) =
\sum_{i=1}^p w_i (\Ib_r \otimes (\ab_i \ab_i\TT)) + \lambda \Ib_{rm}
= \sum_{i=1}^p w_i\ \widetilde \Hb_i \,,
\]
and $\widetilde \Hb_i=(\Ib_r \otimes \ab_i)(\Ib_r \otimes \ab_i)\TT + \lambda \Ib_{rm}$.
Note that $\widetilde\Hb_i$ has the form $\widetilde \Ab_i \widetilde\Ab_i\TT+\lambda\Ib$ with $\widetilde\Ab_i=\Ib_r\otimes \ab_i\in\mathds{R}^{rm \times r}$.
Theorem~\ref{theo:eq-multi-copt-qlasso} can thus be applied to $L$-optimality. Straightforward manipulations allows one to get rid off Kronecker products by reshaping the vector $\xb\in\mathds{R}^{pr}$ into a matrix $\Xb\in\mathds{R}^{p \times r}$, which yields the following result.

\begin{theo}\label{theo:eq-Lopt-qlasso}
Let $\Hb_i$ be given by~\eqref{Hi} and denote by $\Ab$ the $m \times p$-matrix with columns $(\ab_i)_{i\in [p]}$. Consider the $L$-optimal design problem
where we minimize $\phi_L(\wb)$ given by~\eqref{phiL} with respect to $\wb\in\SP_{p}$ with $\Kb$ an $m \times r$ matrix. This problem is \emph{equivalent} to the following one, which we call \emph{quadratic group Lasso},
\begin{align}\label{qglasso}
\min_{\Xb \in \mathds{R}^{p \times r}}\quad
\SL_{\lambda}'(\Xb)=
\| \Ab\Xb-\Kb\|_F^2 + \lambda\, (\|\Xb\|_{1,2})^2 \,,
\end{align}
in the following sense:
\begin{itemize}
 \item[(i)] the optimal value of~\eqref{qglasso} is equal to $\lambda\, \phi_L(\wb^*)$, where $\wb^*$ is an $L$-optimal design;
 \item[(ii)] If $\Xb^*$ solves the quadratic group Lasso problem and $\Ab\TT\Kb\neq\Ob$, then $\widehat\wb^*=\widehat\wb(\Xb^*)$ is $L$-optimal, where
\begin{align}\label{hatwL}
\widehat w_i(\Xb)=\frac{\|\Xb_{i,\cdot}\|}{\|\Xb\|_{1,2}}\,\ \forall i \in [p]\,, \ \Xb\neq\Ob\,;
\end{align}
 \item[(iii)] If $\wb^*\in\SP_{p}$ is $L$-optimal, then $\widehat\Xb^*=\widehat\Xb(\wb^*)$ is optimal for~\eqref{qglasso}, where
\begin{align}
\widehat \Xb_{i,\cdot}(\wb)= w_i\, \ab_i\TT \Mb^{-1}(\wb)\Kb\,. \label{widehat-xL}
 \end{align}
 \item[(iv)] In the pathological case $\Ab\TT\Kb=\Ob$, the unique optimal solution to the quadratic lasso is $\xb^*=\Ob$, while every design $\wb\in\SP_p$ is $c$-optimal.
 \end{itemize}
\end{theo}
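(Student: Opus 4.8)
The plan is to obtain this statement as a direct consequence of Theorem~\ref{theo:eq-multi-copt-qlasso}, using the vectorization of the $L$-criterion set up in the paragraph preceding the statement. Recall that with $\cbb=\operatorname{vec}(\Kb)\in\mathds{R}^{mr}$ and $\widetilde{\Ab}_i=\Ib_r\otimes\ab_i\in\mathds{R}^{rm\times r}$ one has $\phi_L(\wb)=\cbb\TT\widetilde{\Mb}^{-1}(\wb)\cbb$, where $\widetilde{\Mb}(\wb)=\sum_i w_i\widetilde{\Hb}_i$ and $\widetilde{\Hb}_i=\widetilde{\Ab}_i\widetilde{\Ab}_i\TT+\lambda\Ib_{rm}$. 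Thus the $L$-optimal design problem \emph{is} a multiresponse $c$-optimal design problem in ambient dimension $rm$ with group size $q=r$, so Theorem~\ref{theo:eq-multi-copt-qlasso} applies verbatim and yields equivalence with the minimization of $\|\sum_i\widetilde{\Ab}_i\xb_i-\cbb\|^2+\lambda(\sum_i\|\xb_i\|)^2$ over $\xb_1,\dots,\xb_p\in\mathds{R}^r$, together with the correspondence maps \eqref{mhatw} and \eqref{mwidehat-x}.

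First I would remove the Kronecker products by the change of variables that reshapes the tuple $(\xb_1,\dots,\xb_p)$ into the matrix $\Xb\in\mathds{R}^{p\times r}$ whose $i$th row is $\xb_i\TT$. Fixing the column-stacking convention for $\operatorname{vec}$, the identity $(\Ib_r\otimes\ab_i)\xb_i=\operatorname{vec}(\ab_i\xb_i\TT)$ gives $\sum_i\widetilde{\Ab}_i\xb_i=\operatorname{vec}\big(\sum_i\ab_i\xb_i\TT\big)=\operatorname{vec}(\Ab\Xb)$, whence $\|\sum_i\widetilde{\Ab}_i\xb_i-\cbb\|^2=\|\operatorname{vec}(\Ab\Xb-\Kb)\|^2=\|\Ab\Xb-\Kb\|_F^2$. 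Since $\|\xb_i\|=\|\Xb_{i,\cdot}\|$, the penalty $(\sum_i\|\xb_i\|)^2$ equals $(\|\Xb\|_{1,2})^2$, so the objective of Theorem~\ref{theo:eq-multi-copt-qlasso} becomes exactly $\SL_\lambda'(\Xb)$ of \eqref{qglasso}. As $c$-optimality of the lifted problem coincides by construction with $L$-optimality of the original one, this already proves $(i)$.

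To obtain the explicit maps in $(ii)$--$(iii)$ I would push \eqref{mhatw} and \eqref{mwidehat-x} through the same reshaping. The weight map $\widehat w_i(\xb)=\|\xb_i\|/\sum_j\|\xb_j\|$ becomes $\|\Xb_{i,\cdot}\|/\|\Xb\|_{1,2}$, which is \eqref{hatwL}. For the design-to-Lasso map, $\widetilde{\Mb}^{-1}(\wb)=\Ib_r\otimes\Mb^{-1}(\wb)$ and a one-line vec computation give $\widehat{\xb}_i(\wb)=w_i\widetilde{\Ab}_i\TT\widetilde{\Mb}^{-1}(\wb)\cbb=w_i\operatorname{vec}\!\big(\ab_i\TT\Mb^{-1}(\wb)\Kb\big)=w_i\Kb\TT\Mb^{-1}(\wb)\ab_i$, so that the $i$th row of $\widehat{\Xb}(\wb)$ is $w_i\,\ab_i\TT\Mb^{-1}(\wb)\Kb$, matching \eqref{widehat-xL}. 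Finally, the degenerate hypothesis of Theorem~\ref{theo:eq-multi-copt-qlasso}, namely $\widetilde{\Ab}_i\TT\cbb=\0b$ for all $i$, reads $\Kb\TT\ab_i=\0b$ for all $i$ since $\widetilde{\Ab}_i\TT\cbb=\operatorname{vec}(\ab_i\TT\Kb)=\Kb\TT\ab_i$, i.e. $\Ab\TT\Kb=\Ob$, which gives $(iv)$. The only genuine obstacle here is bookkeeping: one must pin down the vectorization convention and carefully track the transposition between the group $\xb_i$, viewed as a column of $\mathds{R}^r$, and the row $\Xb_{i,\cdot}$ of $\Xb$; once this is fixed, every identity above follows in one line from $(\Ab\otimes\Bb)\operatorname{vec}(\Cb)=\operatorname{vec}(\Bb\Cb\Ab\TT)$ and $\|\operatorname{vec}(\cdot)\|=\|\cdot\|_F$, and no convex-analytic argument beyond Theorem~\ref{theo:eq-multi-copt-qlasso} is required.
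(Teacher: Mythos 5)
Your proposal is correct and follows exactly the route the paper takes: lift the $L$-criterion to a multiresponse $c$-optimality problem via $\cbb=\operatorname{vec}(\Kb)$, $\widetilde\Ab_i=\Ib_r\otimes\ab_i$, apply Theorem~\ref{theo:eq-multi-copt-qlasso}, and reshape to eliminate the Kronecker products. The paper dismisses the reshaping as ``straightforward manipulations''; your explicit vec-identity computations (including the verification that $\widetilde\Ab_i\TT\cbb=\Kb\TT\ab_i$ for part $(iv)$) are exactly the bookkeeping it leaves to the reader, and they check out.
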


\medskip
Proceeding as above, we can also show that the dual of the quadratic group Lasso problem can be written as
\begin{align}\label{dualL}
\max_{\yb\in\mathds{R}^m}\ \SD_\ml'(\Yb) = \|\Kb\|_F^2-\|\Yb-\Kb\|_F^2 - \frac{\max_{i\in [p]}\|\Yb\TT\ab_i\|^2}{\ml},
\end{align}
and that the following relations between optimal primal and dual variables hold:
\begin{align*}
\Yb^*=\Kb-\Ab\Xb^*=\lambda \Mb^{-1}(\wb^*) \Kb \,,
\end{align*}
with $\wb^*$ an $L$-optimal design.
The generalization of Theorem~\ref{theo:B0} is not completely straightforward, though. We first prove the following lemma:

\begin{lemm}\label{L:CSExt}
Let $\Yb\in \mathds{R}^{m\times r}$ and $\ab\in \mathds{R}^m$. Then,
\begin{align*}
\sup_{\Zb\in \mathds{R}^{m\times r}} \{\|\Zb^{\top}\ab\|:\ \|\Yb-\Zb\|_F\leq R\} =   \|\Yb\TT\ab\| + R\, \|\ab\| \,,
\end{align*}
the supremum being reached for $\Zb = \Zb_*= \Yb + R\,\ab\ab\TT\Yb/(\|\ab\| \, \|\Yb\TT\ab\|)$.
\end{lemm}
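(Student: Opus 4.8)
The plan is to reduce this matrix optimization to an elementary vector problem by substituting $\Zb = \Yb + \Wb$, so that the feasible set becomes the Frobenius ball $\{\Wb\in\mathds{R}^{m\times r}:\ \|\Wb\|_F\leq R\}$ and the objective becomes $\|\Yb\TT\ab + \Wb\TT\ab\|$. Writing $\vb=\Yb\TT\ab\in\mathds{R}^r$, the crux is the observation that, as $\Wb$ ranges over the Frobenius ball of radius $R$, the vector $\ub=\Wb\TT\ab$ ranges over \emph{exactly} the Euclidean ball $\SB_r(\0b,R\|\ab\|)$. Granting this, the supremum reduces to $\sup\{\|\vb+\ub\|:\ \|\ub\|\leq R\|\ab\|\}$, which by the triangle inequality equals $\|\vb\|+R\|\ab\|=\|\Yb\TT\ab\|+R\|\ab\|$, as desired.

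First I would establish the inclusion $\{\Wb\TT\ab:\ \|\Wb\|_F\leq R\}\subseteq\SB_r(\0b,R\|\ab\|)$. This follows from a column-wise Cauchy--Schwarz bound: denoting by $\Wb_{\cdot,j}$ the $j$th column of $\Wb$, one has $\|\Wb\TT\ab\|^2=\sum_{j=1}^r(\Wb_{\cdot,j}\TT\ab)^2\leq\sum_{j=1}^r\|\Wb_{\cdot,j}\|^2\,\|\ab\|^2=\|\Wb\|_F^2\,\|\ab\|^2\leq R^2\|\ab\|^2$. Next I would prove the reverse inclusion constructively: for any target $\ub$ with $\|\ub\|\leq R\|\ab\|$, the rank-one matrix $\Wb=\ab\ub\TT/\|\ab\|^2$ satisfies $\Wb\TT\ab=\ub(\ab\TT\ab)/\|\ab\|^2=\ub$ and $\|\Wb\|_F=\|\ab\|\,\|\ub\|/\|\ab\|^2=\|\ub\|/\|\ab\|\leq R$, so $\ub$ is indeed attainable.

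With this two-sided identification in hand, the maximizing direction in the vector problem is $\ub_*=R\|\ab\|\,\vb/\|\vb\|$ (assuming $\vb\neq\0b$). Translating $\ub_*$ back through the rank-one construction of the previous paragraph yields the optimal perturbation $\Wb_*=\ab\ub_*\TT/\|\ab\|^2=R\,\ab\ab\TT\Yb/(\|\ab\|\,\|\Yb\TT\ab\|)$, whence $\Zb_*=\Yb+\Wb_*$ matches the stated expression, and a direct substitution confirms $\|\Zb_*\TT\ab\|=\|\Yb\TT\ab\|+R\|\ab\|$.

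The computations are routine; the only substantive content is the exact characterization of the image of the Frobenius ball under the linear map $\Wb\mapsto\Wb\TT\ab$ as a Euclidean ball, and within that I expect the achievability (reverse inclusion) to be the step that genuinely requires the explicit rank-one witness rather than a mere norm inequality. The degenerate cases warrant a brief separate remark: if $\ab=\0b$ both sides vanish, and if $\Yb\TT\ab=\0b$ the supremum value $R\|\ab\|$ is still correct but is attained along any unit direction, so the closed-form maximizer $\Zb_*$ is simply not unique in that case.
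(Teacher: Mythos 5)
Your proof is correct, and it takes a cleaner route than the paper's. The paper argues column by column: writing $u_i=\|\yb_i-\zb_i\|$ for the columns of $\Yb$ and $\Zb$, it bounds each $|\zb_i\TT\ab|\leq|\yb_i\TT\ab|+u_i\|\ab\|$, expands the sum of squares, and applies Cauchy--Schwarz to the cross term $\sum_i u_i|\yb_i\TT\ab|\leq\|\ub\|\,\|\Yb\TT\ab\|$ before bounding $\|\ub\|=\|\Yb-\Zb\|_F\leq R$; attainment is then read off from the equality conditions in these inequalities (``the supremum is reached when $u_i=\ma|\yb_i\TT\ab|$ and $\zb_i-\yb_i=\beta_i\ab$''), which is correct but somewhat terse. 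You instead prove the exact set identity $\{\Wb\TT\ab:\|\Wb\|_F\leq R\}=\SB_r(\0b,R\|\ab\|)$ --- forward inclusion by the operator-norm bound $\|\Wb\TT\ab\|\leq\|\Wb\|_F\|\ab\|$, reverse inclusion by the rank-one witness $\Wb=\ab\ub\TT/\|\ab\|^2$ --- and then the whole lemma collapses to the triangle inequality for vectors. This buys two things: the upper bound and the attainability come out of a single two-sided statement rather than separate arguments, and the explicit witness makes the derivation of $\Zb_*$ a mechanical substitution rather than an appeal to equality cases. Your separate treatment of the degenerate cases $\ab=\0b$ and $\Yb\TT\ab=\0b$ (where $\Zb_*$ as written is undefined or non-unique) is also a point the paper glosses over. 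No gaps.
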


\begin{proof}
 Denote by $\yb_i$ and $\zb_i$ the $i$-th column of $\Yb$ and $\Zb$, respectively, and define $u_i=\|\yb_i-\zb_i\|$, $i\in[r]$, and $\ub=(u_1,\ldots,u_r)\TT$. Then, $|\zb_i\TT\ab| \leq |\yb_i\TT\ab|+u_i\,\|\ab\|$, $i\in[r]$. Therefore,
\begin{align*}
\|\Zb\TT \ab\|^2 = \sum_{i=1}^r (\zb_i\TT \ab)^2 &\leq \sum_i (|\yb_i\TT \ab| + u_i \|\ab\|)^2 = \|\Yb\TT\ab\|^2 + \|\ub\|^2 \, \|\ab\|^2 + 2 \|\ab\| \sum_{i=1}^r
u_i\, |\yb_i\TT \ab|\,\\
&\leq \|\Yb\TT\ab\|^2 + \|\ub\|^2 \, \|\ab\|^2 + 2 \|\ab\|\,\|\ub\|\,\|\Yb\TT\ab\| \\
&\leq \|\Yb\TT\ab\|^2 + R^2\, |\ab\|^2 + 2\,R\,\|\ab\|\,\|\Yb\TT\ab\| = (\|\Yb\TT\ab\| + R \|\ab\|)^2 \,,
\end{align*}
where we used the Cauchy-Schwarz inequality and $\|\ub\|^2=\|\Yb-\Zb\|_F\leq R$. The supremum is reached when $u_i=\ma |\yb_i\TT \ab|$, $i\in[r]$ for some $\ma>0$ and $\zb_i-\yb_i=\beta_i\, \ab$, i.e., $\Zb=\Yb+\ab\betab\TT$ for some $\betab=(\beta_1,\ldots,\beta_r)\TT\in\mathds{R}^r$. Direct calculation shows that it implies
$\Zb = \Zb_*$.
\end{proof}

\begin{theo}\label{theo:B0L}
Let $\Yb\in\mathds{R}^{m\times r}$ be an $\me$-suboptimal dual solution,
i.e., $\SD_\lambda'(\Yb)\geq\SD_\lambda'(\Yb^*)-\epsilon$.
Then, any $\ab_i$ satisfying
\begin{align*}
D_0(\ab_i; \Yb, \epsilon)=\max_{j\in [p]} \|\Yb\TT\ab_j\| - \|\Yb\TT\ab_i\|  -\sqrt{\epsilon (\|\ab_i\|^2+\lambda)}>0
\end{align*}
cannot support an optimal design.
\end{theo}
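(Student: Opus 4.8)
The plan is to transcribe the proof of Theorem~\ref{theo:B0} to the group setting, using Lemma~\ref{L:CSExt} wherever the rank-one argument invoked an elementary Cauchy--Schwarz bound on $|\ab_i\TT\yb|$. First I would recast the dual~\eqref{dualL} as a projection onto a polyhedral cone, exactly as in part~($i$) of the proof of Theorem~\ref{T:dual-optimal}. Introducing a scalar slack $u$, maximizing $\SD_\ml'(\Yb)$ is equivalent to
\begin{align*}
\min_{(\Yb,u):\, \|\Yb\TT\ab_i\|\leq\sqrt{\ml}\,u\ \forall i\in[p]}\ \|\Yb-\Kb\|_F^2 + u^2,
\end{align*}
so that $\overline{\Yb}^*=(\Yb^*,u^*)$, with $u^*=\max_i\|\Yb^{*\top}\ab_i\|/\sqrt{\ml}$, is the orthogonal projection of $\overline{\Kb}=(\Kb,0)$ onto the closed convex cone $\cP=\{(\Yb,u):\, \|\Yb\TT\ab_i\|\leq\sqrt{\ml}\,u,\ \forall i\}$ in $\mathds{R}^{m\times r}\times\mathds{R}$, equipped with the inner product $\langle(\Yb_1,u_1),(\Yb_2,u_2)\rangle=\tr(\Yb_1\TT\Yb_2)+u_1u_2$.

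Next, writing $\overline{\Yb}=(\Yb,u)$ with $u=\max_i\|\Yb\TT\ab_i\|/\sqrt{\ml}$, I would use that $\overline{\Yb}$ is feasible for $\cP$ and that $2(\overline{\Yb}^*-\overline{\Kb})$ supports $\cP$ at $\overline{\Yb}^*$, so that $\langle\overline{\Yb}-\overline{\Yb}^*,\overline{\Yb}^*-\overline{\Kb}\rangle\geq 0$. Since $\SD_\ml'(\Yb)=\|\Kb\|_F^2-\|\overline{\Yb}-\overline{\Kb}\|^2$ and likewise for $\Yb^*$, expanding $\|\overline{\Yb}-\overline{\Kb}\|^2$ around $\overline{\Yb}^*$ gives
\begin{align*}
\|\Yb-\Yb^*\|_F^2+(u-u^*)^2=\|\overline{\Yb}-\overline{\Yb}^*\|^2\leq\|\overline{\Yb}-\overline{\Kb}\|^2-\|\overline{\Yb}^*-\overline{\Kb}\|^2=\SD_\ml'(\Yb^*)-\SD_\ml'(\Yb)\leq\me.
\end{align*}

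The genuinely new step, and the main obstacle, is to turn this single estimate in the product space into the screening inequality. The KKT conditions for~\eqref{qglasso}--\eqref{dualL} (the group analogue of~\eqref{KKT-system}, cf.~\eqref{likeET}) imply that any $\ab_i$ supporting an optimal design corresponds to an active constraint, i.e.\ $\|\Yb^{*\top}\ab_i\|=\sqrt{\ml}\,u^*$. Here the scalar bound $|\ab_i\TT\yb^*|\leq|\ab_i\TT\yb|+\|\yb-\yb^*\|\,\|\ab_i\|$ of the rank-one proof is no longer available; instead I would apply Lemma~\ref{L:CSExt} with $R=\|\Yb-\Yb^*\|_F$ to obtain $\|\Yb^{*\top}\ab_i\|\leq\|\Yb\TT\ab_i\|+\|\Yb-\Yb^*\|_F\,\|\ab_i\|$. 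Combining this with $u^*\geq u-|u-u^*|$ and bounding the two perturbations jointly by a two-dimensional Cauchy--Schwarz,
\begin{align*}
\|\Yb-\Yb^*\|_F\,\|\ab_i\|+\sqrt{\ml}\,|u-u^*|\leq\sqrt{\|\Yb-\Yb^*\|_F^2+(u-u^*)^2}\,\sqrt{\|\ab_i\|^2+\ml}\leq\sqrt{\me(\|\ab_i\|^2+\ml)},
\end{align*}
yields $\|\Yb^{*\top}\ab_i\|-\sqrt{\ml}\,u^*\leq\|\Yb\TT\ab_i\|-\sqrt{\ml}\,u+\sqrt{\me(\|\ab_i\|^2+\ml)}$. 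Since the left-hand side vanishes for a supporting $\ab_i$ and $\sqrt{\ml}\,u=\max_j\|\Yb\TT\ab_j\|$, this forces $D_0(\ab_i;\Yb,\me)\leq 0$; contrapositively, $D_0(\ab_i;\Yb,\me)>0$ precludes support. The only real work beyond Theorem~\ref{theo:B0} is thus Lemma~\ref{L:CSExt}, which supplies precisely the matrix replacement for the scalar Cauchy--Schwarz estimate and makes the whole argument go through for $r>1$.
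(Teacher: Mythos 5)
Your proof is correct and follows essentially the same route as the paper's: the conic projection argument yielding $\|\Yb-\Yb^*\|_F^2+(u-u^*)^2\leq\me$, complementary slackness to identify supporting $\ab_i$'s with active constraints, and Lemma~\ref{L:CSExt} to control $\|\Yb^{*\top}\ab_i\|$; your two-dimensional Cauchy--Schwarz step is just a cleaner packaging of the paper's explicit optimization over $u'$. One small correction: for $r>1$ the cone $\{(\Yb,u):\|\Yb\TT\ab_i\|\leq\sqrt{\ml}\,u\ \forall i\}$ is \emph{not} polyhedral (the paper itself notes this after Theorem~\ref{T:adapt_lasso}), but the projection and supporting-hyperplane argument only require a closed convex cone, so nothing breaks.
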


\begin{proof}
Let $u=\max_{i\in [p]}\|\Yb\TT\ab_i\|/\sqrt{\ml}$ and $u^*=\max_{i\in [p]}\|\Yb^{*\top}\ab_i\|/\sqrt{\ml}$. Following the same lines as in the proof of Theorem~\ref{theo:B0}, we get $\|\Yb-\Yb^*\|_F^2 + (u-u^*)^2\leq \epsilon$.

Now, the complementary slackness condition of optimality of \eqref{qglasso} and \eqref{dualL} indicates that the $i$-th row of $\Xb^*$ vanishes whenever
$\|\Yb^{*\top}\ab_i\| < \sqrt{\lambda}u^*$, and thus $\ab_i$ is inessential. Using $\|\Yb-\Yb^*\|_F\leq  \sqrt{\epsilon-(u-u^*)^2}$ and Lemma~\ref{L:CSExt}, we get
\begin{align*}
\|\Yb^{*\top}\ab_i\| - \sqrt{\lambda}u^*
&\leq \sup_{u'}\quad  -\sqrt{\lambda} u' + \sup_{\Yb'\in \mathds{R}^{m\times r}} \{\|\Yb^{'\top}\ab_i\|:\ \|\Yb-\Yb'\|_F\leq \sqrt{\epsilon-(u-u')^2}\}\\
&= \sup_{u'}\quad -\sqrt{\lambda} u'+ \|\Yb^{\top}\ab_i\|
+ \sqrt{\epsilon-(u-u')^2}\, \|\ab_i\| \,.
\end{align*}
Simple calculations show that the supremum is attained at $u'=u\pm \sqrt{\lambda \epsilon}/\sqrt{\|\ab_i\|^2+\lambda}$, and substitution yields
\begin{align*}
\|\Yb^{*\top}\ab_i\| - \sqrt{\lambda}u^* \leq
\|\Yb\TT\ab_i\| - \sqrt{\lambda} u +\sqrt{\epsilon (\|\ab_i\|^2+\lambda)}\,.
\end{align*}
Therefore, $\ab_i$ is inessential if $\|\Yb\TT\ab_i\| -\max_{j\in [p]}\|\Yb\TT\ab_j\|  +\sqrt{\epsilon (\|\ab_i\|^2+\lambda)}<0$.
\end{proof}

As in Section~\ref{S:rules-r=1}, we obtain two screening rules for $L$-optimality by considering the dual points $\Yb_1(\Xb)=\Kb-\Ab\Xb$ and $\Yb_2(\widehat\wb(\Xb))=\lambda \Mb^{-1}(\widehat\wb(\Xb)) \Kb$, where
\begin{align*}
\Mb(\widehat\wb(\Xb)) = \frac{1}{\|\Xb\|_{1,2}}\ \sum_{i=1}^p \|\Xb_{i,\cdot}\|\ \ab_i \ab_i\TT + \lambda \Ib_m \,.
\end{align*}

\begin{coro} \label{C:boundD2-L}
Let $\Xb\in\mathds{R}^{p \times r}$. If
 \begin{align*}
 D_1(\ab_i;\Xb)= D_0\Big(\ab_i;\, \Yb_1(\Xb),\ \SL'_\ml(\Xb)-\SD'_\ml[\Yb_1(\Xb)] \Big) > 0,
 \end{align*}
or $\Xb \neq \Ob$ and
 \begin{align*}
 D_2(\ab_i;\Xb)= D_0\Big(\ab_i;\, \Yb_2(\widehat\wb(\Xb)),\ \ml\, \tr[\Kb\TT\Mb^{-1}(\widehat\wb(\Xb))\Kb] -\SD'_\ml[\Yb_2(\widehat\wb(\Xb))] \Big) > 0,
 \end{align*}
then $\ab_i$ cannot support an $L$-optimal design.
\end{coro}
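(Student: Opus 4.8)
The plan is to obtain both rules as direct specializations of the $L$-optimal screening bound of Theorem~\ref{theo:B0L}, applied to the two dual points $\Yb_1(\Xb)=\Kb-\Ab\Xb$ and $\Yb_2(\widehat\wb(\Xb))=\ml\,\Mb^{-1}(\widehat\wb(\Xb))\Kb$. Since the dual problem~\eqref{dualL} is an \emph{unconstrained} maximization over $\Yb\in\mathds{R}^{m\times r}$, any such $\Yb$ is automatically dual feasible, so the only thing to verify is that the two scalars chosen as the third argument of $D_0$ are genuine certificates of $\me$-suboptimality, i.e., that each is an upper bound on $\SD'_\ml(\Yb^*)-\SD'_\ml(\Yb)$. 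This mirrors exactly the arguments used for the $r=1$ case in the proofs of Corollaries~\ref{C:B1} and~\ref{C:B2}.

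For $D_1$ I would invoke weak duality for the quadratic group Lasso. Strong duality holds for the pair~\eqref{qglasso}--\eqref{dualL} by the same reasoning as in Section~\ref{S:dual-lasso} (both problems unconstrained, Slater trivial), so $\SD'_\ml(\Yb^*)=\min_{\Xb'}\SL'_\ml(\Xb')\leq\SL'_\ml(\Xb)$ for the current iterate $\Xb$. Subtracting $\SD'_\ml[\Yb_1(\Xb)]$ from both sides yields
\begin{align*}
\SD'_\ml(\Yb^*)-\SD'_\ml[\Yb_1(\Xb)]\;\leq\; \SL'_\ml(\Xb)-\SD'_\ml[\Yb_1(\Xb)],
\end{align*}
so the right-hand side is a valid suboptimality bound. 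Theorem~\ref{theo:B0L} applied with $\Yb=\Yb_1(\Xb)$ and this $\me$ then shows that $D_1(\ab_i;\Xb)>0$ forces $\ab_i$ to be inessential.

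For $D_2$ (valid for $\Xb\neq\Ob$, so that $\widehat\wb(\Xb)$ is well defined through~\eqref{hatwL}) I would replace the primal value $\SL'_\ml(\Xb)$ by the typically tighter value $\ml\,\phi_L(\widehat\wb(\Xb))=\ml\,\tr[\Kb\TT\Mb^{-1}(\widehat\wb(\Xb))\Kb]$. The key input is Theorem~\ref{theo:eq-Lopt-qlasso}(i) combined with strong duality, which give $\SD'_\ml(\Yb^*)=\ml\,\phi_L(\wb^*)$ with $\wb^*$ an $L$-optimal design; since $\wb^*$ minimizes $\phi_L$ over $\SP_p$, we have $\phi_L(\wb^*)\leq\phi_L(\widehat\wb(\Xb))$ and hence $\SD'_\ml(\Yb^*)\leq \ml\,\tr[\Kb\TT\Mb^{-1}(\widehat\wb(\Xb))\Kb]$. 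As before, subtracting $\SD'_\ml[\Yb_2(\widehat\wb(\Xb))]$ shows that the chosen $\me$ bounds the suboptimality of $\Yb_2$, and Theorem~\ref{theo:B0L} delivers the second rule.

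There is no genuine obstacle here: all the analytic work---the projection geometry of the dual cone and the sharp inequality of Lemma~\ref{L:CSExt}---is already packaged inside Theorem~\ref{theo:B0L}. The only points requiring care are (a) confirming that the strong-duality value identities $\SD'_\ml(\Yb^*)=\SL'_\ml(\Xb^*)=\ml\,\phi_L(\wb^*)$ carry over from the $c$-optimal setting to the group-Lasso/$L$-optimal setting, which they do by the derivation of~\eqref{dualL} and Theorem~\ref{theo:eq-Lopt-qlasso}, and (b) the harmless restriction $\Xb\neq\Ob$ needed to define $\widehat\wb(\Xb)$ in the $D_2$ bound. One could additionally carry out the explicit algebra (as in~\eqref{B1}--\eqref{B2}) to express $D_1$ and $D_2$ in closed form in terms of $\Ab\TT(\Kb-\Ab\Xb)$ and $\Mb^{-1}(\widehat\wb(\Xb))\Kb$, but this is optional, since the statement only records the abstract $D_0$-form.
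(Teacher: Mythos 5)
Your proposal is correct and follows essentially the same route as the paper, which obtains Corollary~\ref{C:boundD2-L} by applying Theorem~\ref{theo:B0L} to the dual points $\Yb_1(\Xb)$ and $\Yb_2(\widehat\wb(\Xb))$ exactly as in the proofs of Corollaries~\ref{C:B1} and~\ref{C:B2}. Your verification of the two $\me$-suboptimality certificates (weak duality for $D_1$; the inequality $\ml\,\phi_L(\wb^*)\leq\ml\,\phi_L(\widehat\wb(\Xb))$ for $D_2$) matches the paper's argument in the $r=1$ case, which the paper simply transfers to the $L$-optimal setting without repeating the details.
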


As in Section~\ref{S:rules-r=1}, in a design context one may start with a given $\wb$ and use $\Xb=\widehat\Xb(\wb)$ in $D_1$ and $\widehat\wb(\Xb)=\wb$ in $D_2$. Similarly to Lemma~\ref{L:y1=y2}, we have
$\Yb_1(\widehat\Xb(\wb))=\Yb_2(\wb)$
and we
obtain the following generalization of Corollary~\ref{C:bound-c-opt}.

\begin{coro}\label{C:bound-L-opt}
Let $\Mb$ be any matrix in the convex hull of $\mathds{H}$ with the $\Hb_i$ satisfying \eqref{Hi}. Then, any matrix $\Hb_i$ in $\mathds{H}$ such that $B(\Mb,\Hb_i)>0$, where
\begin{align}
B(\Mb,\Hb_i) &= \left\{(1+\delta)\, \Phi_L(\Mb)-\ml\,\tr(\Kb\TT\Mb^{-2}\Kb)\right\}^{1/2} \nonumber \\
&\qquad - \left[\delta\, \Phi_L(\Mb)\left(1+ \frac{\|\ab_i\|^2}{\ml}\right)\right]^{1/2}  - |\Kb\TT\Mb^{-1}\ab_i| \label{bound-L-opt}
\end{align}
and
\begin{align}\label{delta-L-opt}
\delta=\max_{\Hb_i\in\mathds{H}} \frac{\tr(\Kb\TT\Mb^{-1}\Hb_i\Mb^{-1}\Kb)}{\tr(\Kb\TT\Mb^{-1}\Kb)}-1 \,,
\end{align}
does not support a $c$-optimal design.
\end{coro}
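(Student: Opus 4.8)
The plan is to mirror the proof of Corollary~\ref{C:bound-c-opt}: I will show that $B(\Mb,\Hb_i)$ equals, up to the positive factor $\ml$, the screening quantity $D_2(\ab_i;\Xb)$ of Corollary~\ref{C:boundD2-L} evaluated at a primal point $\Xb$ whose associated design satisfies $\widehat\wb(\Xb)=\wb$, where $\Mb=\Mb(\wb)$. Since $\Mb$ lies in the convex hull of $\mathds{H}$, I may write $\Mb=\Mb(\wb)$ for some $\wb\in\SP_p$ and pick any $\Xb$ with $\widehat\wb(\Xb)=\wb$; the corresponding dual point is then $\Yb_2(\wb)=\ml\Mb^{-1}\Kb$.

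First I would evaluate the duality-gap certificate $\epsilon=\ml\,\tr[\Kb\TT\Mb^{-1}\Kb]-\SD'_\ml(\Yb_2)$ appearing in $D_2$. Expanding $\SD'_\ml$ from~\eqref{dualL} at $\Yb_2=\ml\Mb^{-1}\Kb$, with $\|\Yb_2\|_F^2=\ml^2\tr(\Kb\TT\Mb^{-2}\Kb)$ and $\tr(\Yb_2\TT\Kb)=\ml\,\Phi_L(\Mb)$, a direct computation gives
\[
\epsilon=\ml\max_{j\in[p]}\|\Kb\TT\Mb^{-1}\ab_j\|^2+\ml^2\tr(\Kb\TT\Mb^{-2}\Kb)-\ml\,\Phi_L(\Mb).
\]
The crucial step is to eliminate the max-norm. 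Since $\Hb_i=\ab_i\ab_i\TT+\ml\Ib_m$, the trace identity $\tr(\Kb\TT\Mb^{-1}\ab_i\ab_i\TT\Mb^{-1}\Kb)=\|\Kb\TT\Mb^{-1}\ab_i\|^2$ yields $\tr(\Kb\TT\Mb^{-1}\Hb_i\Mb^{-1}\Kb)=\|\Kb\TT\Mb^{-1}\ab_i\|^2+\ml\,\tr(\Kb\TT\Mb^{-2}\Kb)$, so the definition~\eqref{delta-L-opt} of $\delta$ gives
\[
\max_{j\in[p]}\|\Kb\TT\Mb^{-1}\ab_j\|^2=(1+\delta)\,\Phi_L(\Mb)-\ml\,\tr(\Kb\TT\Mb^{-2}\Kb).
\]
Substituting this collapses the certificate to the clean value $\epsilon=\ml\,\delta\,\Phi_L(\Mb)$.

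Finally I would insert $\Yb_2$ and $\epsilon=\ml\delta\Phi_L(\Mb)$ into the functional $D_0$ of Theorem~\ref{theo:B0L}. Using $\|\Yb_2\TT\ab_j\|=\ml\|\Kb\TT\Mb^{-1}\ab_j\|$ together with the identity above, the leading term becomes $\ml\{(1+\delta)\Phi_L(\Mb)-\ml\tr(\Kb\TT\Mb^{-2}\Kb)\}^{1/2}$, the subtracted term is $\ml\,\|\Kb\TT\Mb^{-1}\ab_i\|$ (the quantity written $|\Kb\TT\Mb^{-1}\ab_i|$ in~\eqref{bound-L-opt}), and the last term is $\sqrt{\ml\delta\Phi_L(\Mb)(\|\ab_i\|^2+\ml)}=\ml\{\delta\Phi_L(\Mb)(1+\|\ab_i\|^2/\ml)\}^{1/2}$. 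Factoring $\ml$ out of $D_0(\ab_i;\Yb_2,\epsilon)$ reproduces exactly $B(\Mb,\Hb_i)$, so $D_2(\ab_i;\Xb)=\ml\,B(\Mb,\Hb_i)$. Hence $B(\Mb,\Hb_i)>0$ forces $D_2(\ab_i;\Xb)>0$, and Corollary~\ref{C:boundD2-L} then guarantees that $\ab_i$ cannot support an $L$-optimal design. I expect the only delicate part to be the bookkeeping of Frobenius inner products and vector norms that replace the scalar quantities of the $c$-optimal case; no genuinely new idea beyond the $\delta$-identity is needed.
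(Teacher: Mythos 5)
Your proposal is correct and follows exactly the route the paper intends: the paper's own justification (by analogy with Corollary~\ref{C:bound-c-opt}) is simply the unstated ``direct calculation'' showing $D_2(\ab_i;\Xb)=\ml\,B(\Mb,\Hb_i)$ at the dual point $\Yb_2(\wb)=\ml\Mb^{-1}\Kb$, and you have carried out that calculation faithfully, including the key identity $\epsilon=\ml\,\delta\,\Phi_L(\Mb)$ obtained from the definition of $\delta$ in~\eqref{delta-L-opt}. The only cosmetic remarks are that $|\Kb\TT\Mb^{-1}\ab_i|$ in~\eqref{bound-L-opt} is indeed to be read as the Euclidean norm $\|\Kb\TT\Mb^{-1}\ab_i\|$, as you note, and that the conclusion concerns $L$-optimal designs (the statement's ``$c$-optimal'' is a typo), which you also got right.
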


We have $\delta\geq0$ and $\Phi_L(\Mb)\leq (1+\delta) \phi_L(\wb^*)$ with $\wb^*$ an $L$-optimal design. 
}

\end{document}